\theoremstyle{definition}
\newtheorem{defn}{\textit{Definition}}
\newtheorem{cons}{\textit{Construction}}
\newtheorem{algo}{\textit{Algorithm}}
\newtheorem{thm}{\textit{Theorem}}
\newtheorem{ex}{\textit{Example}}
\newtheorem{exmp}{\textit{Example}}
\newtheorem{lm}{\textit{Lemma}}
\newtheorem*{clm}{\textit{Claim}}
\theoremstyle{remark}
\newtheorem{remark}{Remark}
\newtheorem{cond}{\textit{Condition}}
\newtheorem*{v1}{Verification of c$ 1 $ and c$ 2 $}
\newtheorem{part}{Part}
\newtheorem*{case}{\textit{Case}}
\title{On Index Codes for Interlinked Cycle Structured Side-Information Graphs}
\author{K. Vikas Bharadwaj and B. Sundar Rajan, {\it Fellow, IEEE}}
\begin{document}
		
\maketitle
\begin{abstract}
	In connection with the index code construction and the decoding algorithm for interlinked cycle (IC) structures proposed by Thapa, Ong and Johnson in \cite{TOJ} ("Interlinked Cycles for Index Coding: Generalizing Cycles and Cliques", IEEE Trans. Inf. Theory, vol. 63, no. 6, Jun. 2017), it is shown in \cite{VaR} ("Optimal Index Codes For A New Class of Interlinked Cycle Structure", in \textit{IEEE Communication Letters,} available as early access article in \textit{IEEE Xplore}: DOI-10.1109/LCOMM.2018.2799202) that the decoding algorithm does not work for all IC structures.  In this work,  a set of necessary and sufficient conditions on the IC structures is presented for the decoding algorithm to work for the code construction given in \cite{TOJ}. These conditions are shown to be satisfied for the IC structures without any cycles consisting of only non-inner vertices. \footnote{The authors are with the Department of Electrical Communication Engineering, Indian Institute of Science, Bangalore-560012, India. Email:bsrajan@iisc.ac.in} 
\end{abstract}
\section{Introduction}
The problem of index coding was introduced by Birk and Kol in \cite{BiK}. 
The index coding problem consists of a single sender with a set of \textit{M} independent messages
			\begin{displaymath}
			\mathcal{X}=\lbrace x_1,x_2,\dots,x_M\rbrace,
			\end{displaymath} and a set of $ N $ users
			\begin{displaymath}
				\mathcal{D}=\lbrace D_1,D_2,\dots,D_N \rbrace,
			\end{displaymath}connected to the sender by a single shared error-free link, with the $ k^{th} $ user $ D_k $ identified as 
			\begin{displaymath}
				D_k=(\mathcal{X}_k,\mathcal{A}_k),
			\end{displaymath}where $ \mathcal{X}_k \subseteq \mathcal{X}$ is the set of messages desired by $ D_k $, the set $ \mathcal{A}_k \subset \mathcal{X}$ is comprised of the messages available to user $ D_k $ as side-information. The set of side-informations $ \mathcal{A}_k $ satisfies $ \mathcal{X}_k \cap \mathcal{A}_k=\phi$, i.e., a user does not desire a message that is already available to it.

An $ (\mathcal{S},n,\mathcal{R}) $ index coding scheme \cite{MCJ} corresponds to the choice of a finite alphabet $ \mathcal{S} $ of cardinality $ |\mathcal{S}| > 1 $, a coding function, $ f $, and a decoding function $ g_{k,i} $, for each desired message $ x_i $ at each user $ D_k $. The coding function maps all the messages to the sequence of transmitted symbols
			\begin{displaymath}
				f(x_1,x_2,\dots,x_M)=S^n
			\end{displaymath}where $ S^n \in \mathcal{S}^n $ is the sequence of symbols transmitted over $ n $ channel uses. Here $ \forall m \in \lbrace1,2,\dots,M\rbrace $, message $ x_m $ is a random variable uniformly distributed over the set
			\begin{displaymath}
				x_m \in \lbrace1,2,\dots,|\mathcal{S}|^{nR_m}\rbrace,
			\end{displaymath}and $ \mathcal{R} \in \mathbb{R}^M_+ $ is simply a rate vector
			\begin{displaymath}
				\mathcal{R}=(R_1,R_2,\dots,R_M)
			\end{displaymath}that satisfies the condition that $ |\mathcal{S}|^{nR_m} $ is an integer for every $ m \in \lbrace1,2,\dots,M\rbrace $. At each user, $ D_k $, there is a decoding function for each desired message
			\begin{displaymath}
				g_{k,i}(S^n,\mathcal{A}_k)=x_i,
			\end{displaymath}for all $ i $ such that $ x_i\in \mathcal{X}_k $.\\An index coding scheme is said to be a linear index coding scheme if the coding and the decoding functions are linear and the alphabet $ \mathcal{S} $ is a finite field. An index coding scheme is said to be a scalar index coding scheme if 
			\begin{displaymath}
				\mathcal{R}=\left(\frac{1}{n},\frac{1}{n},\dots,\frac{1}{n}\right).
			\end{displaymath}In other words, in a scalar index coding scheme, the sender sends one symbol for each message over $ n $ channel uses. $ n $ is referred to as the length of the index code.
			
			An index coding problem is said to be unicast \cite{OnH} if $ \mathcal{X}_k \cap \mathcal{X}_j = \phi $ for $ k\not= j $ and $ k,j\in \lbrace1,2,\dots,N\rbrace $, i.e., no message is desired by more than one user. The problem is said to be single unicast if the problem is unicast and $ |\mathcal{X}_k|=1$ for all $ k\in \lbrace1,2,\dots,N\rbrace $. A unicast index coding problem can be reduced into single unicast index coding problem, by splitting the user demanding more than one message into several users, each demanding one message and with the same side-information as the original user. For example, let there are $ 5 $ messages at the sender, $ \lbrace x_1,x_2,x_3,x_4,x_5\rbrace $. A user demanding three messages $ x_1 $, $ x_2 $ and $ x_3 $ and with side-information $ x_4 $ and $ x_5 $ is split into three users, each with side-information $ x_4 $ and $ x_5 $ and demanding one message $ x_1 $, $ x_2 $ and $ x_3 $ respectively.
			
Single unicast index coding problems can be described by a directed graph called a side-information graph \cite{BBJK}, in which the vertices in the graph represent the indices of messages $ \lbrace x_1,x_2,\dots,x_M\rbrace $ and there is a directed edge from vertex $ i $ to vertex $ j $ if and only if the user requesting $ x_i $ has $ x_j $ as side-information. 

The set of vertices in a directed graph $ \mathcal{G} $ is denoted by $ V(\mathcal{G}) $ and the set of vertices in the out-neighbourhood of a vertex $q$ in $\mathcal{G}$ is denoted by $N^{+}_{\mathcal{G}}(q)$.

Interlinked Cycle Cover (ICC) scheme is proposed as a scalar linear index coding scheme to solve unicast index coding problems by Thapa et al. \cite{TOJ}, by defining a graph structure called an Interlinked Cycle (IC) structure.
\begin{defn}[\textbf{IC Structure} \cite{TOJ}]
A side-information graph $\mathcal{G}$ is called a $K$-IC structure with inner vertex set $V_{I}$ $\subseteq V(\mathcal{G})$, such that $|V_{I}| = K$ if $ \mathcal{G} $ satisfies the following three conditions.
\begin{enumerate}
\item There is no I-cycle in $\mathcal{G}$, where an I-cycle is defined as a cycle which contains only one inner vertex.
\item There is a unique I-path between any two different inner vertices in $\mathcal{G}$, where an I-path is defined as a path from one inner vertex to another inner vertex without passing through any other inner vertex (as a result, $K$ rooted trees can be drawn where each rooted tree is rooted at an inner vertex and has the remaining inner vertices as the leaves).
\item $\mathcal{G}$ is the union of the $K$ rooted trees.
\end{enumerate}
The set of the vertices $ V(\mathcal{G})\backslash V_I $ is called the set of non-inner vertices, denoted by $ V_{NI} $.
Let the $K$-IC structure, $\mathcal{G}$, have inner vertex set $V_I=\lbrace 1,2,\dots,K\rbrace$ and non-inner vertices $ V_{NI}=\lbrace K+1, K+2,\dots,N \rbrace $. Let $T_i$ be the rooted tree corresponding to the inner vertex $i$ where $i \in \lbrace 1,2,\dots,K\rbrace$. Let $V_{NI}(i)$ be the set of non-inner vertices in $\mathcal{G}$ which appear in the rooted tree $T_i$ of an inner vertex $ i $.
\end{defn}

The ICC scheme finds disjoint IC structures in a given side-information graph and then constructs an index code for each IC structure using the following construction proposed by Thapa et al. \cite{TOJ} (stated below as \textit{Construction} $1$).	 
\begin{cons}[An index code construction for IC structures]
	Let the $K$-IC structure be denoted by $\mathcal{G}$ and let $|V(\mathcal{G})|=N$. Let $ V(\mathcal{G})=\lbrace1,2,\dots,N\rbrace$, $V_I=\lbrace1,2,\dots,K)$ be the set of the $K$ inner vertices and hence $ V_{NI}=\lbrace K+1, K+2,\dots,N\rbrace $. Let $x_n \in \mathbb{F}_q$ be the message corresponding to the vertex $n \in V(\mathcal{G})$ and where $ \mathbb{F}_q $ is a finite field with characteristic $ 2 $ and to which the all the $ N $ messages at the sender belong to (note that in single unicast setting, the number of messages will be equal to the number of users).
\end{cons}
\begin{enumerate}
	\item An index code symbol $W_I$ obtained by XOR of messages corresponding to inner vertices is transmitted, where
	\begin{equation}
	W_I=\underset{i=1}{\overset{K}{\bigoplus}} x_i.
	\end{equation}
	\item An index code symbol corresponding to each non-inner vertex, obtained by XOR of message corresponding to the non-inner vertex with the messages corresponding to the vertices in the out-neighbourhood of the non-inner vertex is transmitted, i.e., for $j \in V_{NI}$, $W_j$ is transmitted, where
	\begin{equation}
	W_j=x_j \underset{q \in N^{+}_{\mathcal{G}}(j)}{\bigoplus} x_q,
	\end{equation}
\end{enumerate}where $ \oplus $ denotes modulo addition over $ \mathbb{F}_q $.

	\begin{algo} It is the algorithm proposed in \cite{TOJ} to decode an index code obtained by using \textit{Construction} $ 1 $ on an IC structure, $ \mathcal{G} $.
	\begin{itemize}
		\item The message $x_j$ corresponding to a non-inner vertex $j$ is decoded directly using the transmission $W_j$ and
		\item the message $x_i$ corresponding to an inner vertex $i$ is decoded using
		\begin{displaymath}
			Z_i=W_I \underset{q \in V_{NI}(i)}{\bigoplus}W_q
			\implies
			Z_i=x_i \underset{k:k \in N^{+}_{T_{i}}(i)}{\bigoplus}x_k.
		\end{displaymath}

		\end{itemize}
	\end{algo}

Recently, in \cite{VaR} it has been shown that the index codes obtained from \textit{Construction} $1$ are not necessarily decodable using \textit{Algorithm} $1$ for some IC structures.
 
The contributions of this paper are listed as follows.
\begin{itemize} 
\item The cases where the index code obtained from \textit{Construction} $ 1 $ on the given IC structure is decodable using \textit{Algorithm} $ 1 $ are identified and are presented in \textit{Theorem} $ 1 $.
\item It is shown in \textit{Theorem} $2$ that an IC structure which has no cycles containing only non-inner vertices satisfies the conditions presented in \textit{Theorem} $ 1 $. Thus the proof of optimality of IC structures of Case $ 1 $ of Theorem $ 3 $ in \cite{TOJ} holds.
\item Examples of IC structures for which index code given by \textit{Construction} $ 1 $ is decodable using some other decoding algorithm are presented in the following section.
\item An example of an IC structure for which index code given by \textit{Construction} $ 1 $ is not decodable using any decoding algorithm employing only linear combinations of the index code symbols is presented  (Example \ref{exam8}).
\end{itemize}

The rest of the paper is organized as follows. Section $ 2 $ presents the examples that motivate the results of this paper. Section $ 3 $ discusses the main results along with some illustrating examples. Section $ 4 $ provides the conclusion and the problems that are opened by the results obtained in this paper.
\section{Motivating Examples}

In \cite{VaR} through an example it is shown that index codes obtained from \textit{Construction} $1$ are not necessarily decodable using \textit{Algorithm} $1$ for some IC structures. For a class of such structures the code construction is modified and a decoding algorithm is presented. In this section we present two more examples to show that the codes from \textit{Construction 1} are not decodable using \textit{Algorithm} 1. However, in the following section, after presenting the main results, these two codes are revisited and are shown to be decodable with some other algorithm employing only linear combinations of the index code symbols.
\begin{figure}[!t]
\includegraphics[height=\columnwidth,width=\columnwidth,angle=0]{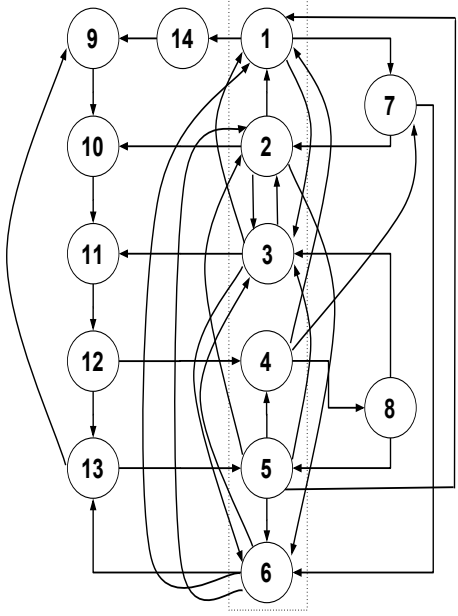}
\caption{$6$-IC structure $\mathcal{G}_1$ with inner vertex set, $V_I=\lbrace1,2,3,4,5,6\rbrace$.}
\label{f1}
\end{figure}
\begin{figure*}[!t]
\centering
\begin{subfigure}{.31\textwidth}
  \hspace{-6mm}
  \includegraphics[width=15pc]{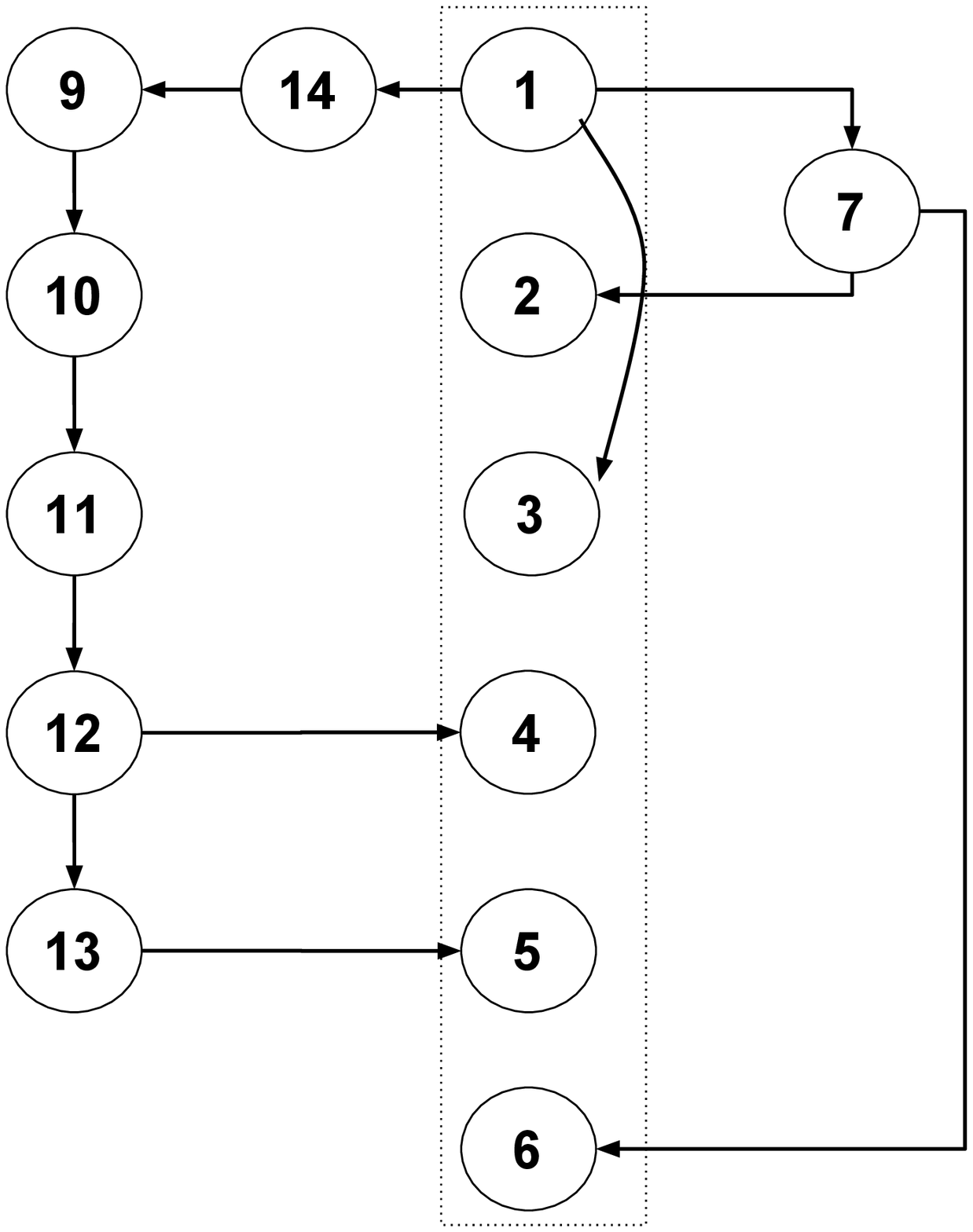}
  \caption{}
  \label{rt11}
\end{subfigure}%
\begin{subfigure}{.31\textwidth}
  \hspace{-6mm}
  \includegraphics[width=15pc]{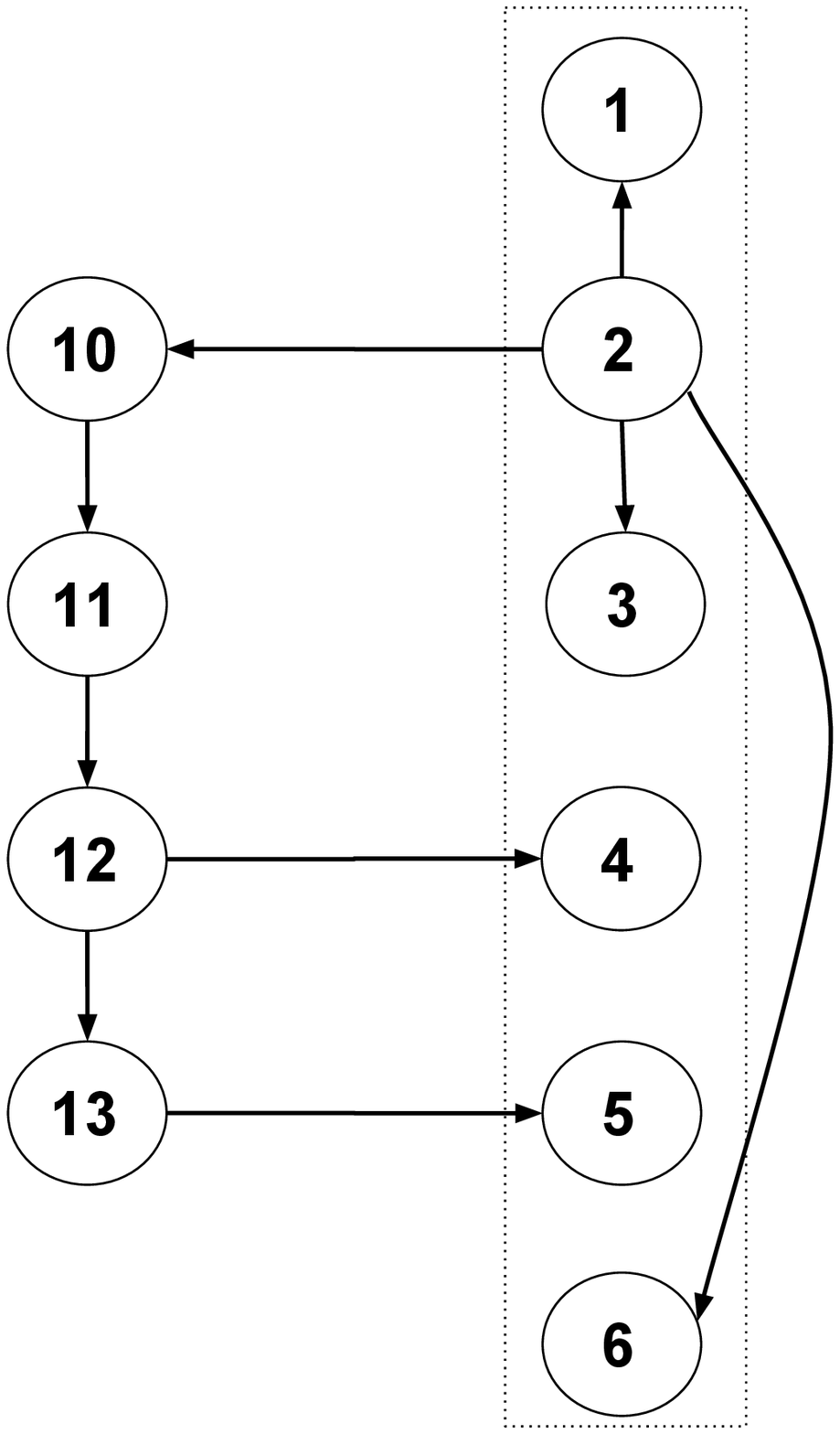}
  \caption{}
  \label{rt12}
\end{subfigure}
\begin{subfigure}{.31\textwidth}
	\hspace{-6mm}
  \includegraphics[width=15pc]{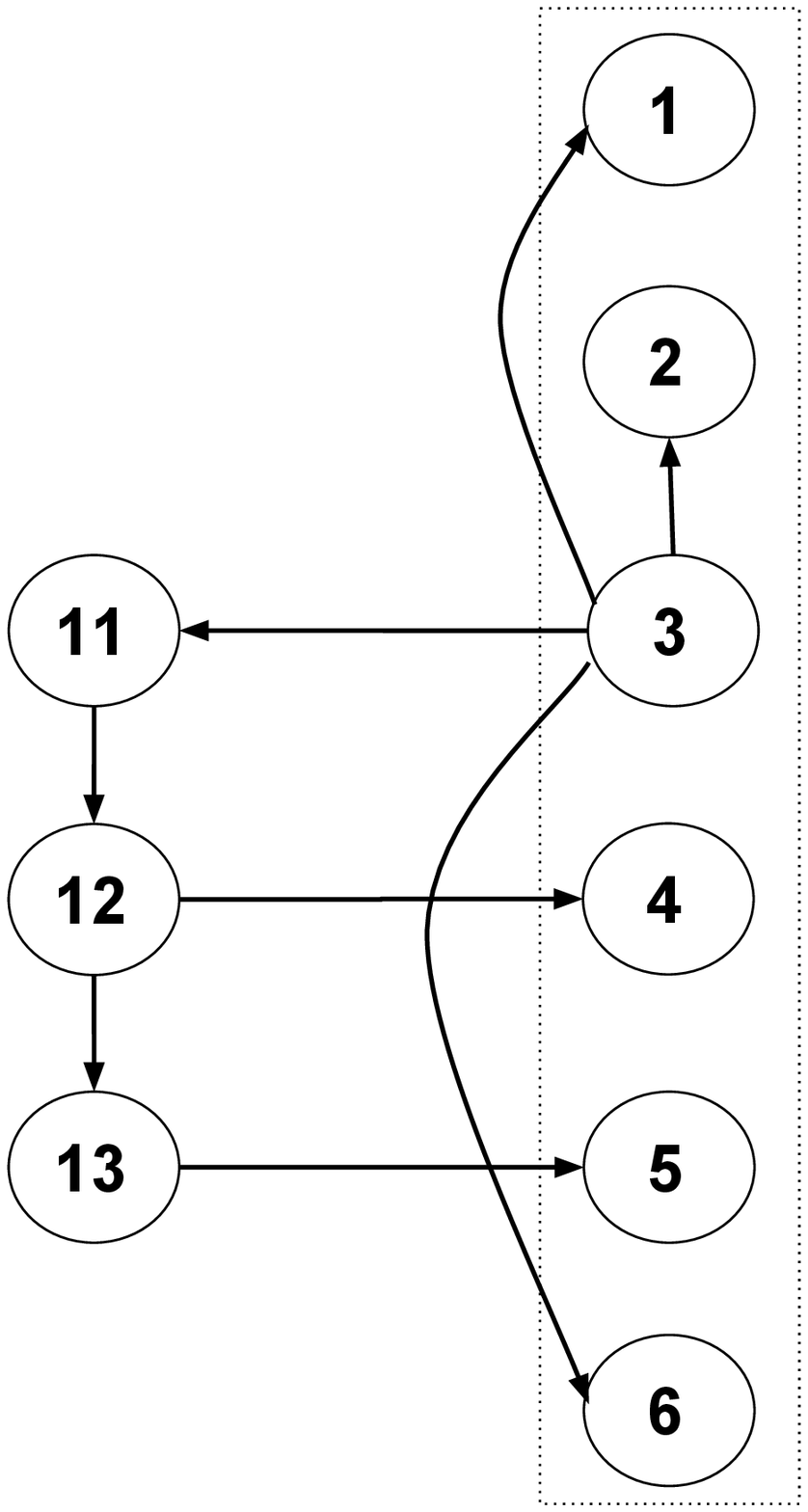}
  \caption{}
  \label{rt13}
\end{subfigure}
\centering
\begin{subfigure}{.31\textwidth}
	\hspace{-6mm}
  \includegraphics[width=15pc]{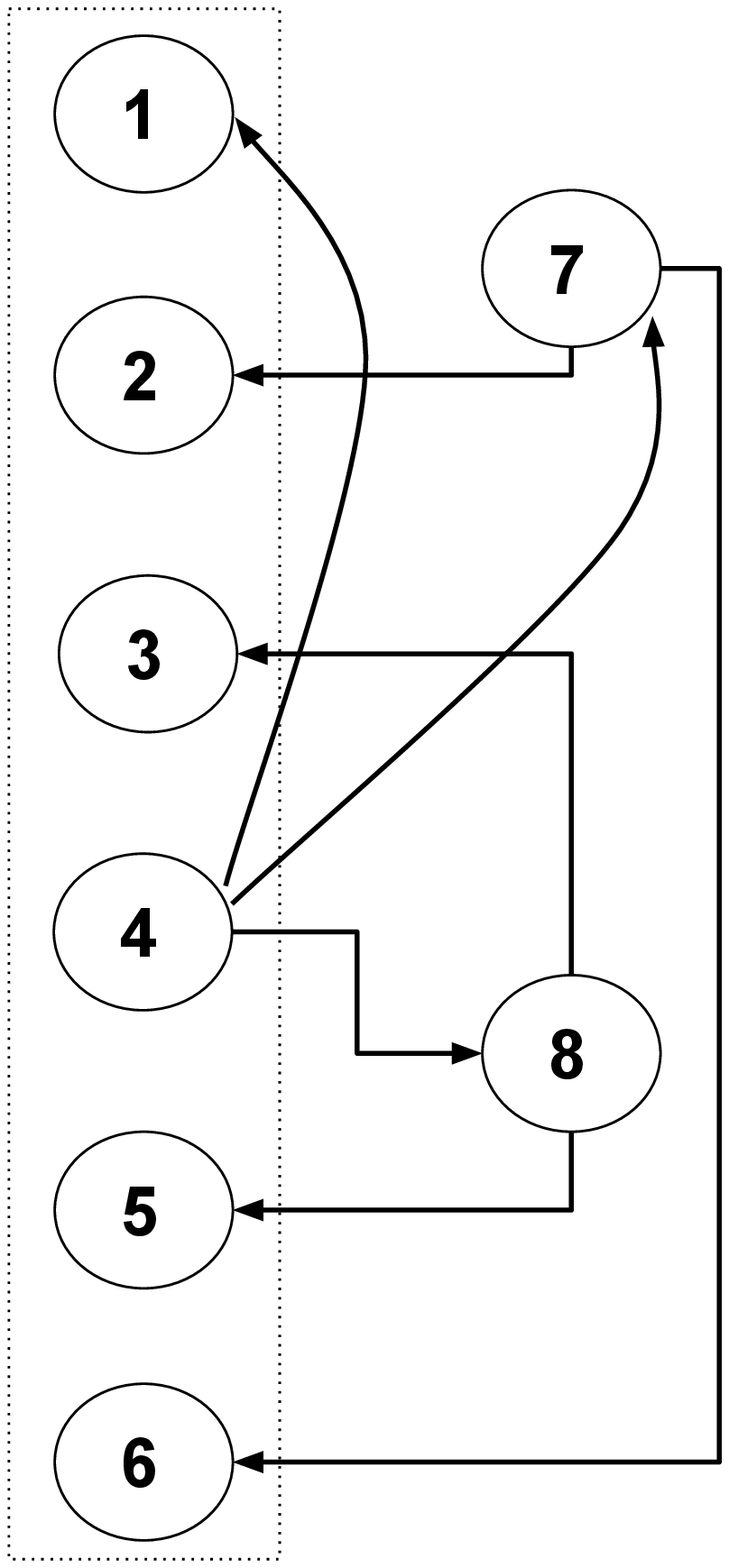}
  \caption{}
  \label{rt14}
\end{subfigure}%
\begin{subfigure}{.31\textwidth}
	\hspace{-6mm}
  \includegraphics[width=15pc]{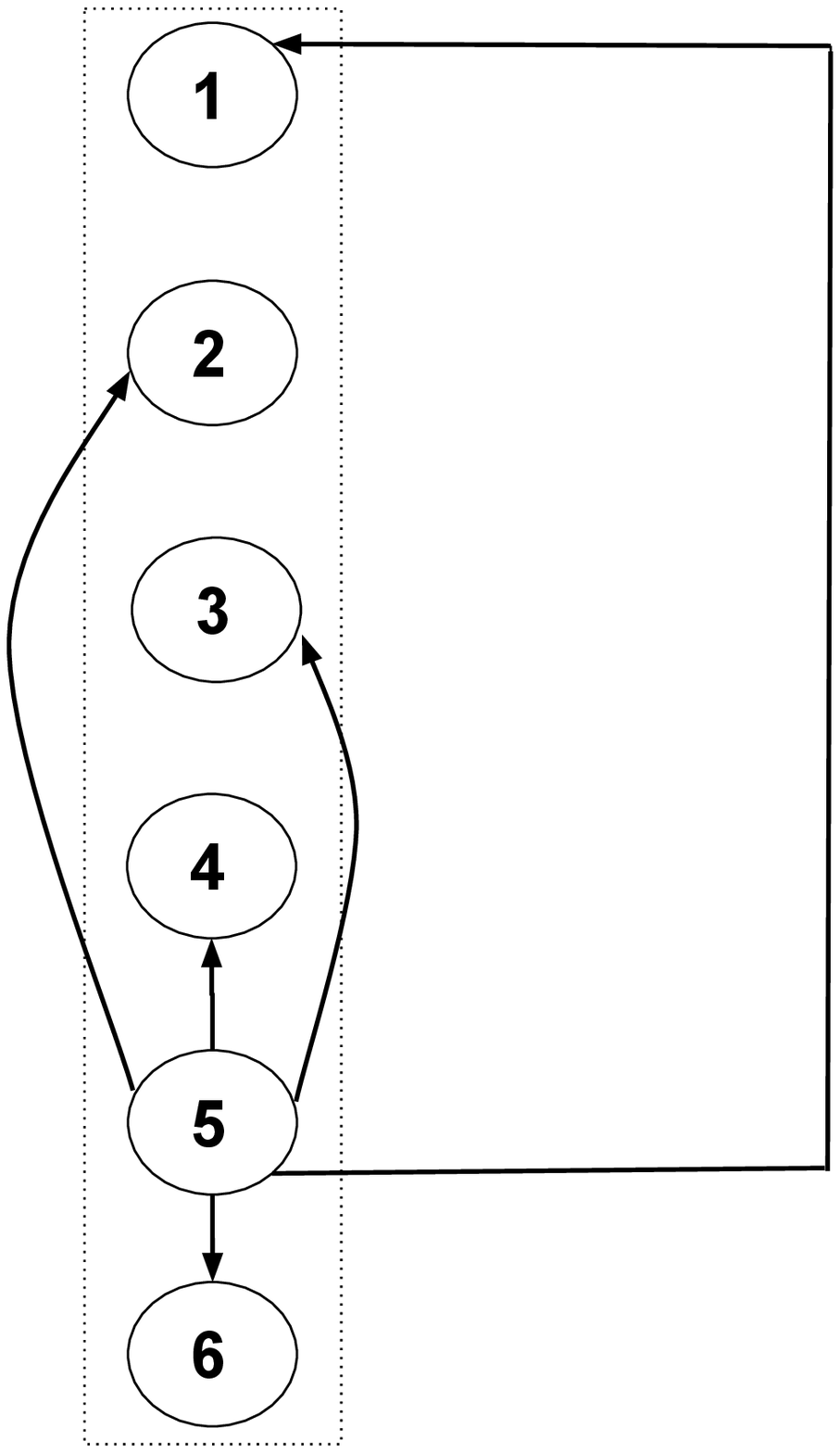}
  \caption{}
  \label{rt15}
\end{subfigure}
\begin{subfigure}{.31\textwidth}
	\hspace{-6mm}
  \includegraphics[width=15pc]{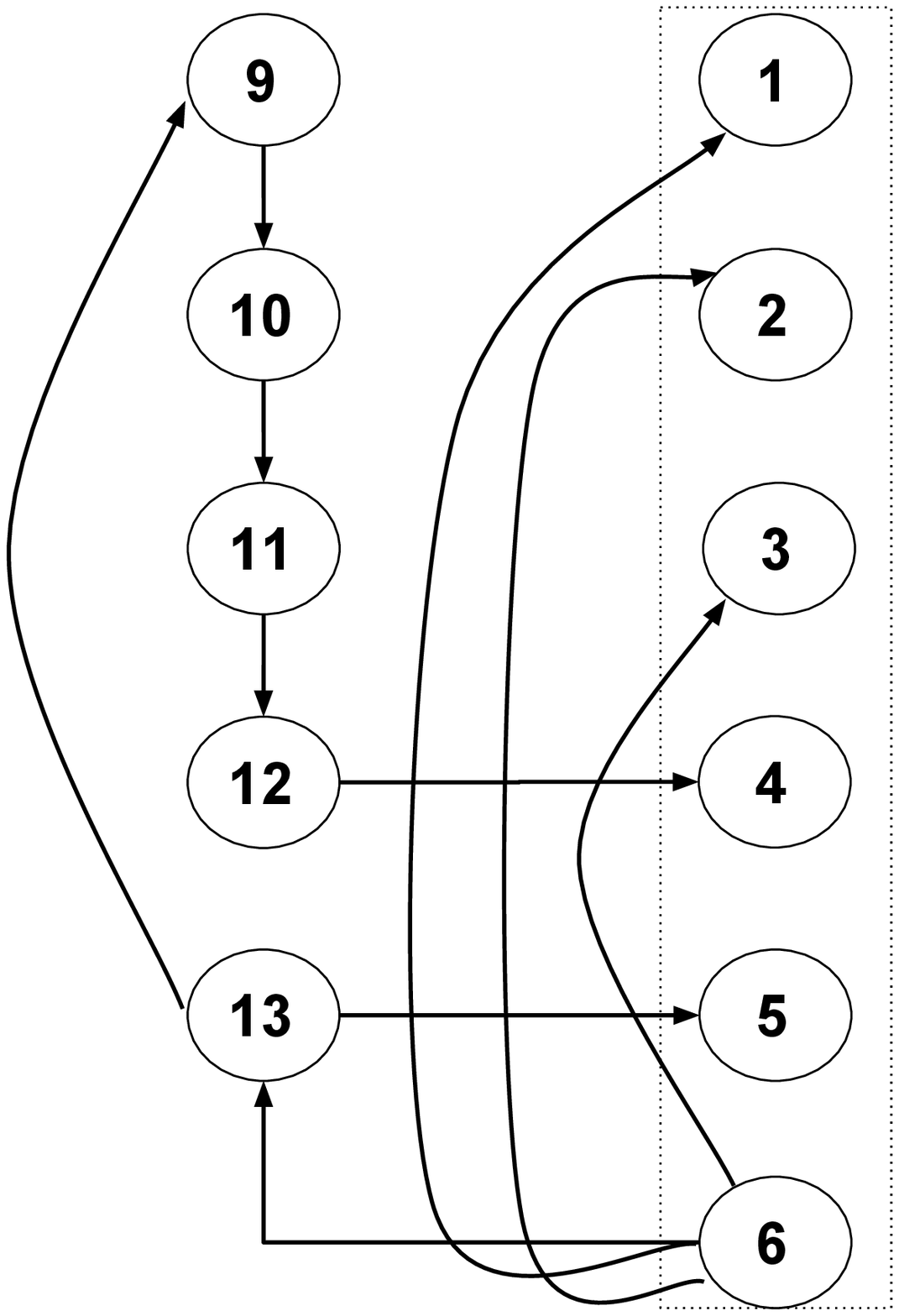}
  \caption{}
  \label{rt16}
\end{subfigure}
\caption{Figures showing rooted trees of inner vertices $ 1,2,3,4,5,6 $ of $ \mathcal{G}_1 $, respectively.}
\end{figure*}
\begin{ex}
\label{exam1}
Consider $\mathcal{G}_1$, a side-information graph which is a $ 6$- IC structure with inner vertex set $ V_I = \lbrace1,2,3,4,5,6\rbrace$ given in Fig. \ref{f1}.

It can be easily verified that
\begin{enumerate}
        \item there are no cycles containing only one vertex from the set $\lbrace1,2,3,4,5,6\rbrace$ in $\mathcal{G}_1$ (i.e., no I-cycles),
        \item using the rooted trees for each vertex in the set $\lbrace1,2,3,4,5,6\rbrace$, which are given in Fig. \ref{rt11}, \ref{rt12}, \ref{rt13}, \ref{rt14}, \ref{rt15} and \ref{rt16} respectively, there exists a unique path between any two different vertices in $ V_I $ in $ \mathcal{G}_1 $ and does not contain any other vertex in $ V_I $ (i.e., unique I-path between any pair of inner vertices),
        \item $\mathcal{G}_1$ is the union of all the $6$ rooted trees.
\end{enumerate}
Using \textit{Construction} $1$, the transmitted index code symbols are
 \begin{eqnarray*}
W_I &=&x_1\oplus x_2\oplus x_3\oplus x_4\oplus x_5\oplus x_6\\ 
W_7&=&x_7 \oplus x_2 \oplus x_6\\
W_8&=&x_8 \oplus x_3 \oplus x_5\\
W_9&=&x_9\oplus x_{10}\\
W_{10}&=&x_{10}\oplus x_{11}\\
W_{11}&=&x_{11}\oplus x_{12}\\
W_{12}&=&x_{12}\oplus x_{4} \oplus x_{13}\\
W_{13}&=&x_{13}\oplus x_{5} \oplus x_{9}\\
W_{14}&=&x_{14}\oplus x_{9}. 
 \end{eqnarray*}
Now consider the rooted tree of the inner vertex $2$  shown in  Fig. \ref{rt12}. 
Applying \textit{Algorithm} $1$ to decode $x_2$, we get
\begin{displaymath}
Z_2 =W_I \oplus W_{10} \oplus W_{11} \oplus W_{12} \oplus W_{13}  
\end{displaymath}
which results in 
\begin{displaymath}
Z_2 =x_1 \oplus x_2 \oplus x_3 \oplus x_6 \oplus x_{10} \oplus x_9
\end{displaymath}
using which message $x_2$ cannot be decoded by the user requesting it since $x_9$ is not available at that user as side-information. 
\end{ex}
\begin{ex}
\label{exam2}
Consider $\mathcal{G}_2$, a side-information graph which is a $5$-IC structure with inner vertex set $ V_I=\lbrace1,2,3,4,5\rbrace$, given in Fig. \ref{f2}.
\begin{figure}[!t]\centering
\includegraphics[height=\columnwidth, width=\columnwidth,angle=0]{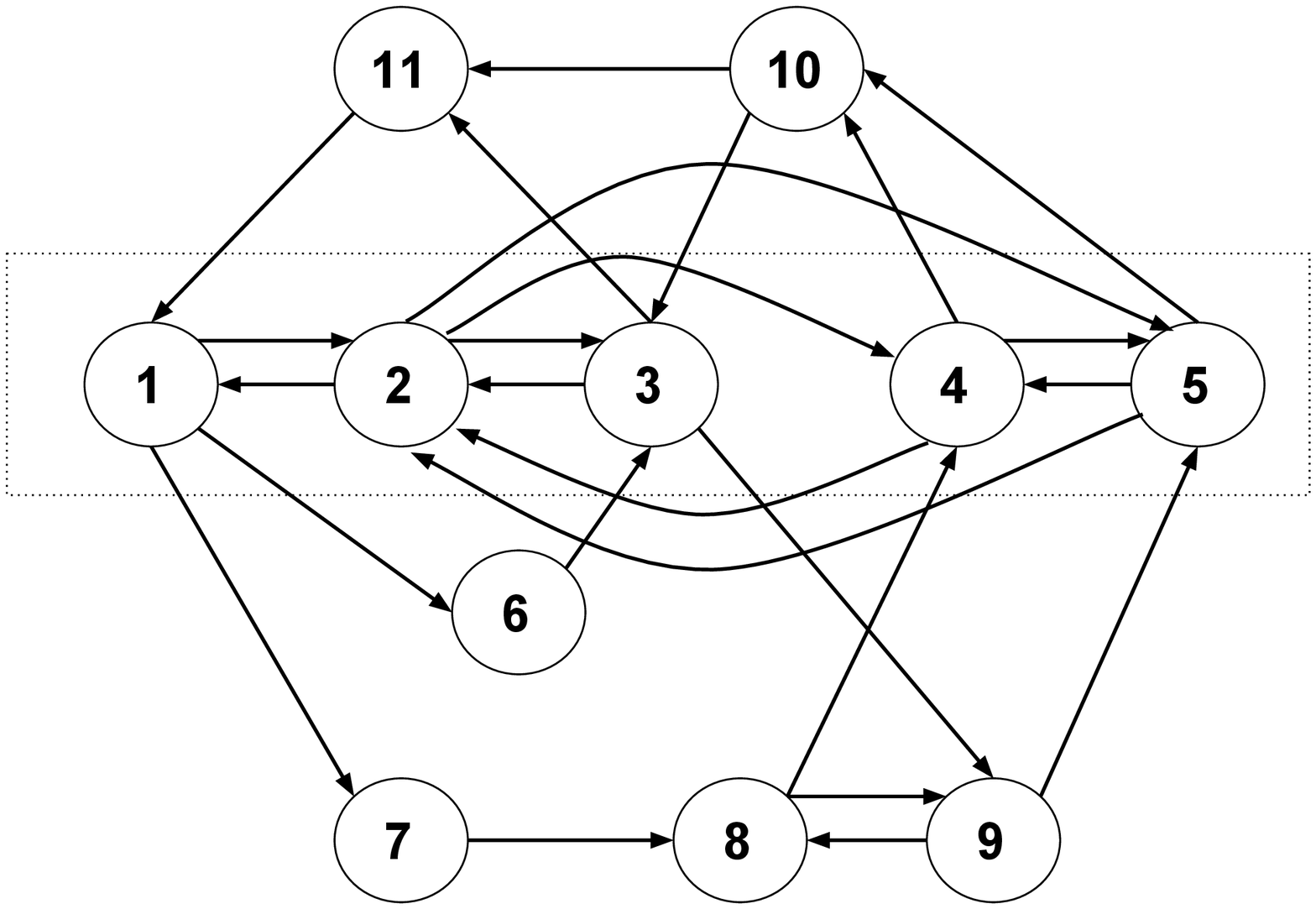}
\caption{$5$-IC structure, $\mathcal{G}_2$ with $V_I=\lbrace1,2,3,4,5\rbrace$}
\label{f2}
\end{figure}
It can be easily verified that
\begin{enumerate}
\item there are no cycles containing only one vertex from the set $\lbrace1,2,3,4,5\rbrace$ in $\mathcal{G}_2$ (i.e., no I-cycles),
\item using the rooted trees for each vertex in the set $\lbrace1,2,3,4,5\rbrace$, which are given in Fig. \ref{rt21}, \ref{rt22}, \ref{rt23}, \ref{rt24} and \ref{rt25}, respectively,there exists a unique path between any two different vertices in $ V_I $ in $ \mathcal{G}_2 $ and does not contain any other vertex in $ V_I $ (i.e, unique I-path between any pair of inner vertices),
\item $\mathcal{G}_2$ is the union of all the $5$ rooted trees.
\end{enumerate}
\begin{figure*}[!t]
\centering
\begin{subfigure}{.31\textwidth}
  \hspace{-8mm}
  \includegraphics[width=15pc]{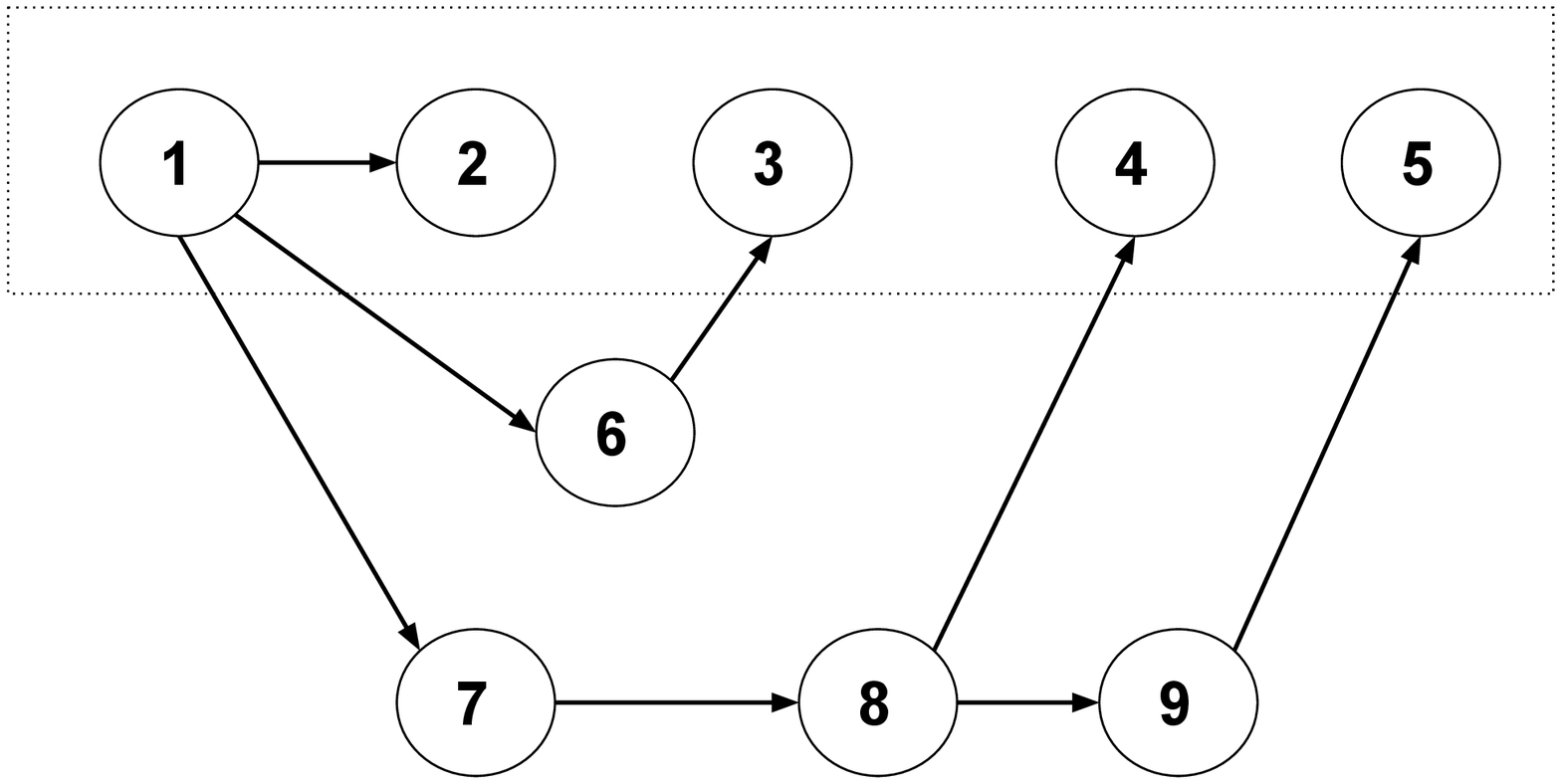}
  \caption{}
  \label{rt21}
\end{subfigure}%
\begin{subfigure}{.31\textwidth}
  \hspace{-8mm}
  \includegraphics[width=15pc]{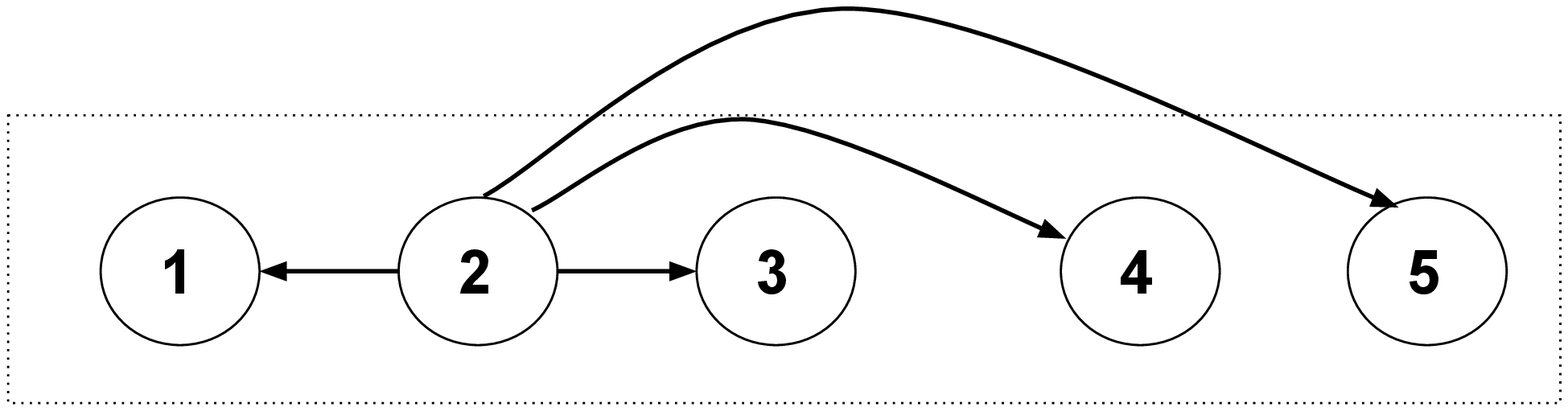}
  \caption{}
  \label{rt22}
\end{subfigure}
\begin{subfigure}{.31\textwidth}
  \hspace{-8mm}
  \includegraphics[width=15pc]{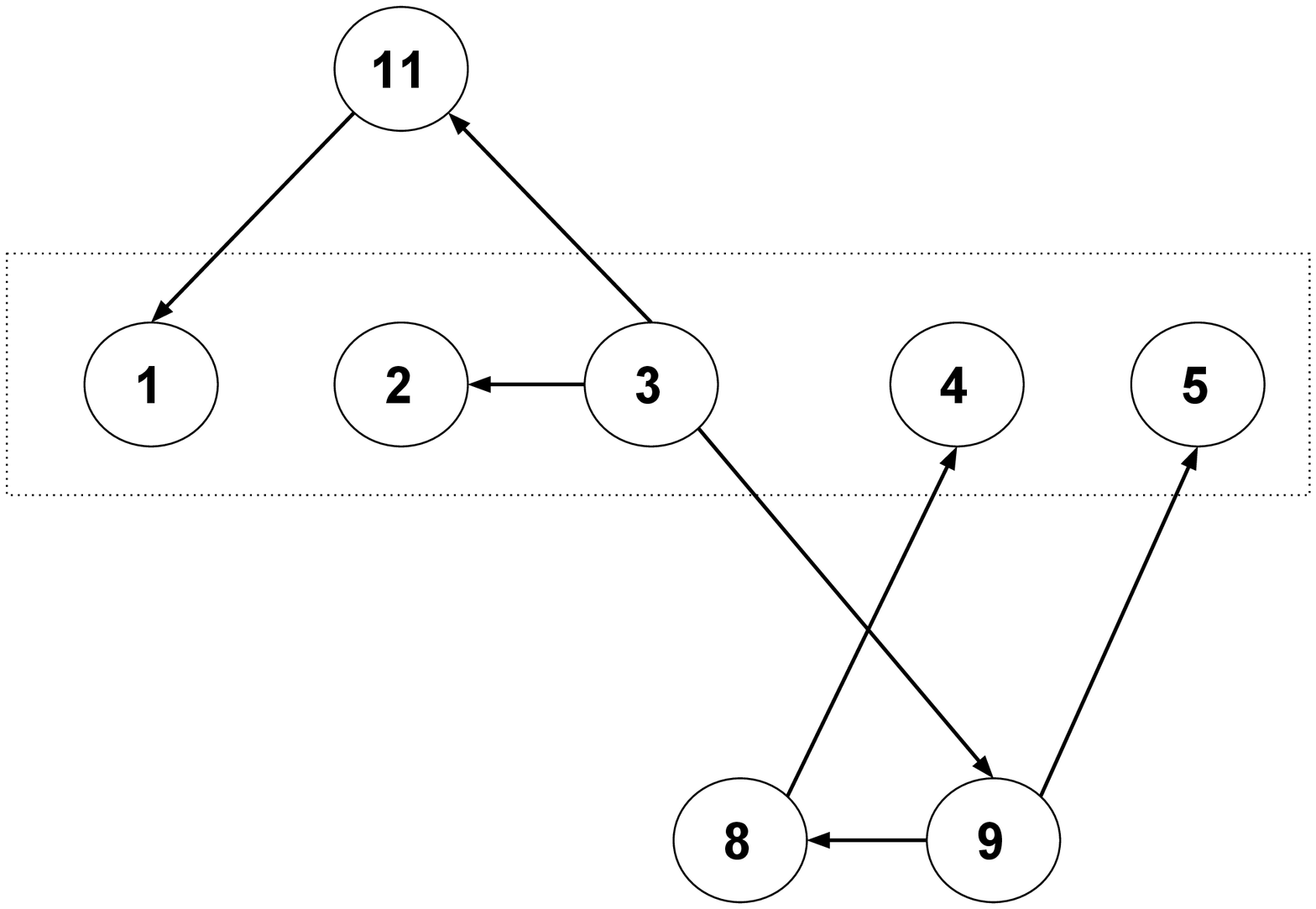}
  \caption{}
  \label{rt23}
\end{subfigure}
\centering
\begin{subfigure}{.31\textwidth}
  \hspace{-8mm}
  \includegraphics[width=15pc]{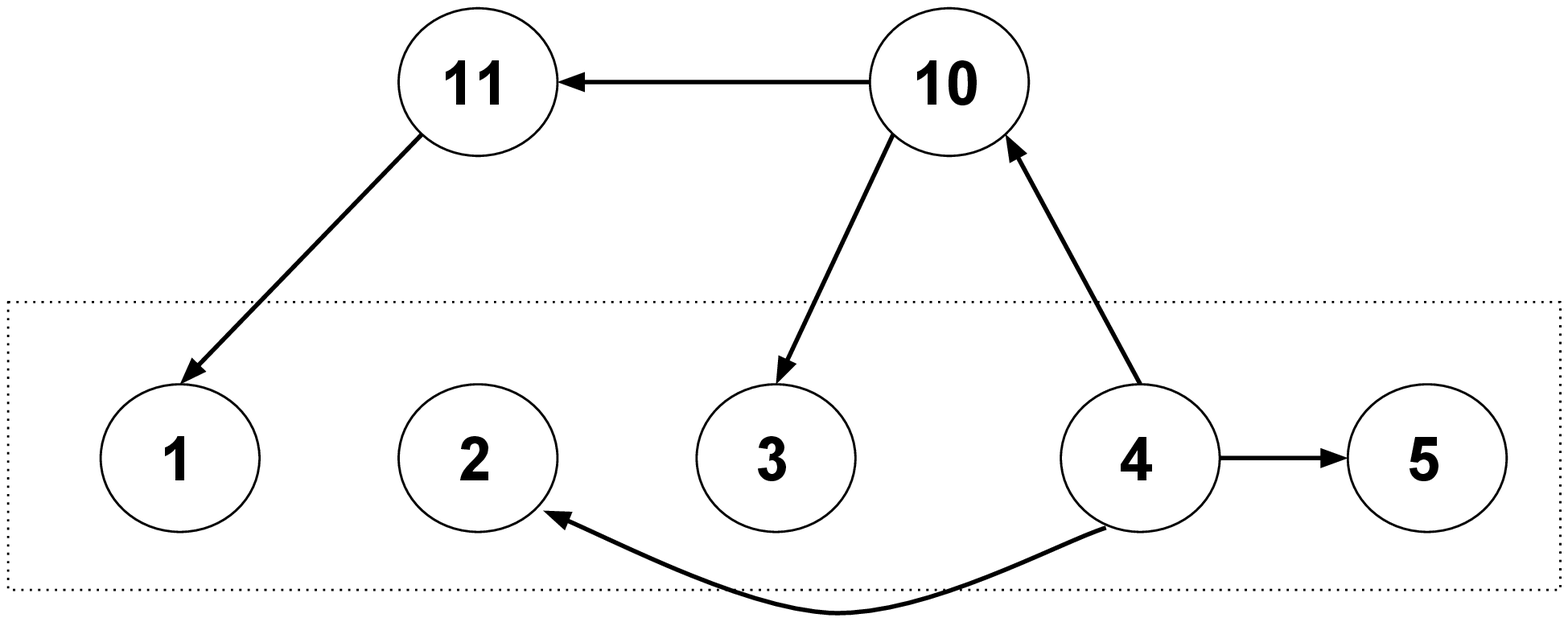}
  \caption{}
  \label{rt24}
\end{subfigure}%
\begin{subfigure}{.31\textwidth}
  \hspace{-8mm}
  \includegraphics[width=15pc]{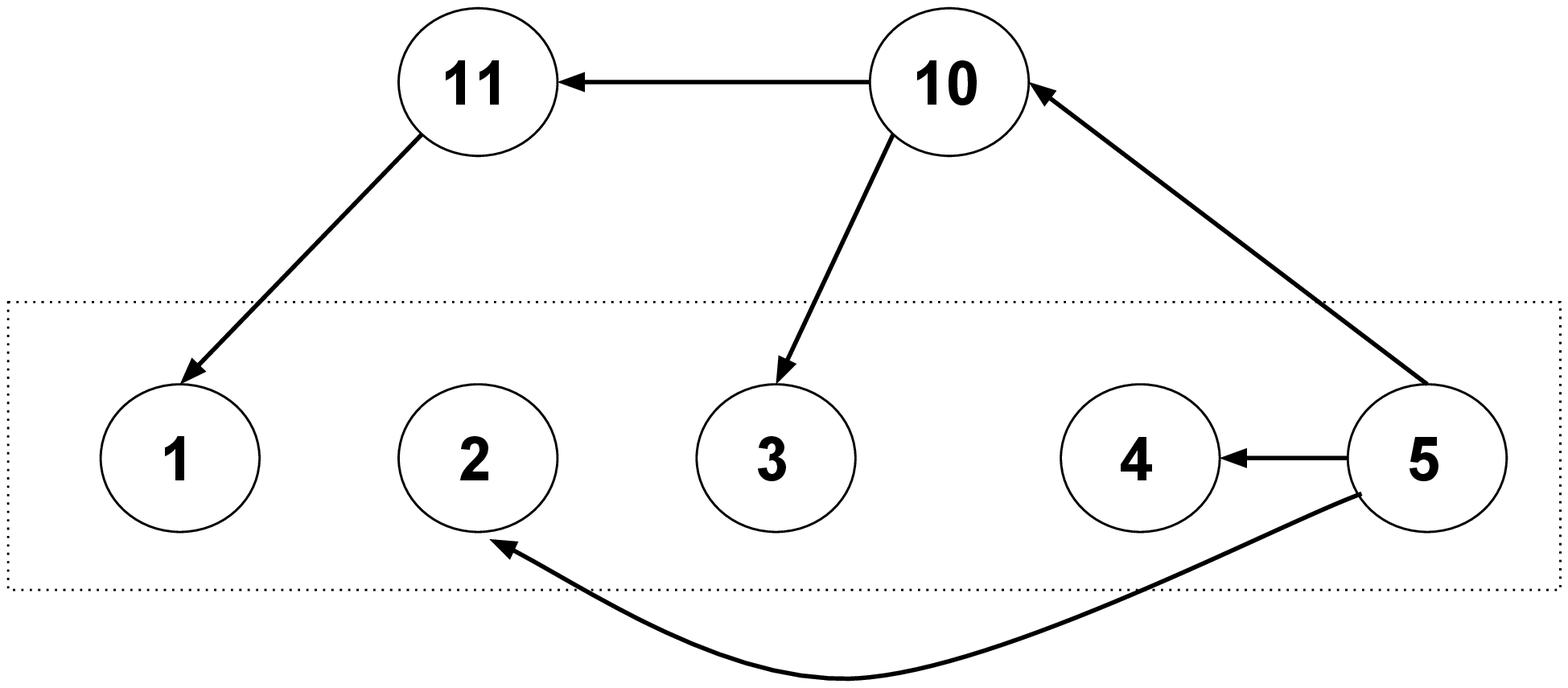}
  \caption{}
  \label{rt25}
\end{subfigure}
\caption{Figures showing rooted trees of inner vertices $ 1,2,3,4,5 $ of $ \mathcal{G}_2 $, respectively.}
\end{figure*}
Using \textit{Construction} $1$, the index code symbols are 
	\begin{eqnarray*}
		W_I &=& x_1 \oplus x_2 \oplus x_3 \oplus x_4 \oplus x_5 \\
		W_6 &=& x_6 \oplus x_3 \\
		W_7 &=& x_7 \oplus x_8 \\
		W_8 &=& x_8 \oplus x_4 \oplus x_9\\
		W_9 &=& x_9 \oplus x_5 \oplus x_8\\
		W_{10} &=& x_{10} \oplus x_3 \oplus x_{11}\\
		W_{11} &=& x_{11} \oplus x_1.
	\end{eqnarray*}
Now, consider the rooted tree for the inner vertex $1$, shown in Fig. \ref{rt21}. By applying \textit{Algorithm} $1$ to decode $x_1$, we get
\begin{displaymath}
	Z_1=W_I \oplus W_6 \oplus W_7 \oplus W_8\oplus W_9
\end{displaymath}
which results in 
\begin{displaymath}
	Z_1=x_1 \oplus x_2 \oplus x_6 \oplus x_7 \oplus x_8,
\end{displaymath} using which, message $x_1$ is not decodable by the user requesting it because $ x_8 $ is not available at that user as side-information.
\end{ex}
\section{Main Results}

The two examples  in the previous section motivate \textit{Theorem} $1$ which imposes a set of necessary and sufficient conditions on a given IC structure for an index code obtained by using \textit{Construction} $1$ for that IC structure to be decodable using \textit{Algorithm} $1$.

Recall that for the $K$-IC structure, $\mathcal{G}$, having inner vertex set $V_I=\lbrace 1,2,\dots,K\rbrace$ and non-inner vertices $ V_{NI}=\lbrace K+1, K+2,\dots,N \rbrace $   $T_i$ is the rooted tree corresponding to the inner vertex $i$ where $i \in \lbrace 1,2,\dots,K\rbrace$  and $V_{NI}(i)$ is the set of non-inner vertices in $\mathcal{G}$ which appear in the rooted tree $T_i$ of an inner vertex $ i $. 

For each $ i \in \lbrace1,2,\dots,K\rbrace $ and for a non-inner vertex $ j $ which is at a depth $ \geq2 $ in the rooted tree $ T_i $, define $ a_{i,j} $ as the number of vertices in $V_{NI}(i)$ for which $ j $ is in out-neighbourhood in $ \mathcal{G} $, i.e., for each $ i \in \lbrace1,2,\dots,K\rbrace $ and for $j \in {V_{NI}(i) \backslash N^+_{T_i}(i)} $,
\begin{displaymath}
a_{i,j} \triangleq |\lbrace v:v\in V_{NI}(i), j\in N^{+}_{\mathcal{G}}(v)\rbrace|.
\end{displaymath}
Also, for each $ i \in \lbrace1,2,\dots,K\rbrace $ and for a non-inner vertex $ j $ not in the rooted tree $ T_i $, define $ b_{i,j} $ as the number of vertices in $ V_{NI}(i) $ for which $ j $ is in out-neighbourhood in $ \mathcal{G}$, i.e., for each $ i \in \lbrace1,2,\dots,K\rbrace $ and $ j \in V(\mathcal{G}) \backslash V(T_i) $,
\begin{displaymath}
b_{i,j} \triangleq |\lbrace v:v\in V_{NI}(i),~j\in N^{+}_{\mathcal{G}}(v)\rbrace|.
\end{displaymath}
First, the following \textit{Lemma} is proved.
\begin{lm} 
\label{lem1}
Given an IC structure $ \mathcal{G} $ with inner vertex set $ V_I=\lbrace1,2,\dots,K \rbrace $, $ b_{i,j} \in \lbrace 0,1\rbrace $ for each $ i \in \lbrace1,2,\dots,K\rbrace $ and $ j \in V(\mathcal{G}) \backslash V(T_i) $.
\end{lm}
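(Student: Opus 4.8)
The plan is to argue by contradiction. Suppose $b_{i,j}\ge 2$, so there exist distinct $v_1,v_2\in V_{NI}(i)$ with $v_1\to j$ and $v_2\to j$ in $\mathcal{G}$, where $j\in V(\mathcal{G})\backslash V(T_i)$; since $V(T_i)$ contains every inner vertex, $j$ must be non-inner. Because $v_1,v_2\in V(T_i)$, the rooted tree $T_i$ contains the directed paths $i\rightsquigarrow v_1$ and $i\rightsquigarrow v_2$, all of whose internal vertices are non-inner (inner vertices occur in $T_i$ only as the root or as leaves). Appending the edges $v_1\to j$ and $v_2\to j$ yields two distinct directed paths $\widehat P_1,\widehat P_2$ from $i$ to $j$ whose only inner vertex is $i$, and which differ at least in their last edge.

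Next I would extend past $j$ to reach an inner vertex. As $\mathcal{G}$ is the union of the $K$ rooted trees and $j$ is non-inner, $j$ appears as a non-leaf of some tree and hence admits a directed path $Q\colon j\rightsquigarrow\ell$ to an inner vertex $\ell$ with no internal inner vertex. If both concatenations $\widehat P_1\cdot Q$ and $\widehat P_2\cdot Q$ were \emph{simple} paths, then (for $\ell\neq i$) they would be two distinct I-paths from $i$ to $\ell$, contradicting the uniqueness in condition~2; and if $\ell=i$ such a simple walk would be a cycle whose only inner vertex is $i$, i.e.\ an I-cycle, contradicting condition~1. This is the shape of the intended contradiction.

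The main obstacle is precisely this simplicity. The excerpt's own motivating examples (for instance the all-non-inner cycle $9\to 10\to 11\to 12\to 13\to 9$ in $\mathcal{G}_1$) show that an IC structure may contain cycles consisting only of non-inner vertices, and such a cycle can force $Q$ to re-enter $V(T_i)$, destroying simplicity. To control this I would take $Q$ to be a \emph{shortest} directed path from $j$ that re-enters $V(T_i)$, ending at its first re-entry vertex $y\in V(T_i)$, and then split on the location of $y$. Writing $s$ for the least common ancestor of $v_1$ and $v_2$ in $T_i$, the branches $s\rightsquigarrow v_1$ and $s\rightsquigarrow v_2$ are internally disjoint. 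If $y$ is \emph{not} an ancestor of (say) $v_1$, then replacing the tail of $Q$ by the tree path $y\rightsquigarrow\ell$ down to a leaf $\ell$ of $T_i$ below $y$ produces a genuinely simple walk $i\rightsquigarrow v_1\to j\rightsquigarrow y\rightsquigarrow\ell$ using the non-tree edge $v_1\to j$, hence a second I-path from $i$ to $\ell\neq i$ and the contradiction above; the subcase $y=i$ yields an I-cycle in the same way.

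The delicate remaining case, which I expect to be the crux, is when the first re-entry $y$ is a common ancestor of both $v_1$ and $v_2$ (i.e.\ $y$ lies on the shared segment $i\rightsquigarrow s$): then every such continuation loops back through $y$, and the naive construction produces only the \emph{permitted} non-inner cycles $y\rightsquigarrow v_k\to j\rightsquigarrow y$ rather than a forbidden I-path or I-cycle. Here I would show that forcing every exit of $j$ to re-enter $T_i$ strictly above $s$ is incompatible with $v_1,v_2$ both lying in $V(T_i)$ and with the simultaneous existence of the trees rooted at leaves below $v_1$ and $v_2$, again via uniqueness of I-paths (condition~2); a minimal-counterexample setup on $|V(\mathcal{G})|$ should make this precise. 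Once this case is closed we obtain $b_{i,j}\le 1$, and since $b_{i,j}\ge 0$ is immediate, $b_{i,j}\in\{0,1\}$ follows.
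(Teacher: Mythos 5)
Your overall strategy is the same as the paper's: assume $b_{i,j}\ge 2$, take two predecessors $v_1,v_2\in V_{NI}(i)$ of $j$, and manufacture either two distinct I-paths between a pair of inner vertices or an I-cycle, the obstacle being that the continuation from $j$ back to an inner vertex may re-enter $V(T_i)$ and destroy simplicity. Your handling of the cases where the first re-entry vertex $y$ fails to be an ancestor of one of $v_1,v_2$ is essentially sound. But the proof is not complete: the case you yourself identify as the crux --- $y$ a common ancestor of both $v_1$ and $v_2$, so that the naive construction only produces \emph{permitted} cycles of non-inner vertices --- is left as a declaration of intent (``I would show\dots'', ``a minimal-counterexample setup \dots should make this precise'') with no actual argument. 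Since IC structures genuinely admit such non-inner cycles, that case is exactly where the content of the lemma lives, and leaving it open is a genuine gap rather than a routine detail.

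For comparison, the paper closes this difficulty by a different case split and a reachability claim rather than by analyzing a single exit path $Q$. It distinguishes whether $p=v_1$ and $q=v_2$ precede a common inner leaf of $T_i$ (so one is an ancestor of the other) or precede different leaves. In the first case it notes that the arc $p\to j$ must occur in some other rooted tree, yielding an I-path $s\rightsquigarrow p\to j\rightsquigarrow m$ alongside the distinct I-path $s\rightsquigarrow p\rightsquigarrow q\to j\rightsquigarrow m$, contradicting uniqueness. In the second case the key step is the claim that both $p$ and $q$ must be \emph{successors} of $j$ (they lie in $S_{i,j}$), because otherwise $i\rightsquigarrow p\to j\rightsquigarrow t_p$ would be a second I-path from $i$ to some $t_p\in V_I(j)$; once a path $j\rightsquigarrow q$ is guaranteed, the walk $i\rightsquigarrow p\to j\rightsquigarrow q\rightsquigarrow m$ is a second I-path from $i$ to $m$, again a contradiction. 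Your case (c) needs an analogue of this ``both $v_1$ and $v_2$ are reachable from $j$'' claim (or some other concrete mechanism exploiting that \emph{every} exit of $j$ eventually reaches an inner vertex); without it the argument does not go through.
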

\begin{proof}
	Suppose, for an $ i \in \lbrace 1,2,\dots,K \rbrace $ and a $ j \in V(\mathcal{G})\backslash V(T_i)$, let $ b_{i,j}=a $, for some integer $ a \geq2 $. Let $ p $ and $ q $  be any two different vertices in the set $ \lbrace v:v\in V_{NI}(i), j\in N^{+}_{\mathcal{G}}(v)\rbrace $. Then $ j \in N^+_{\mathcal{G}}(p) $ and $ j \in N^+_{\mathcal{G}}(q) $. In $ T_i $, $ p $ and $ q $ can be predecessors of a single inner vertex or two different inner vertices. 
\begin{case}[i]
Consider the case where $ p $ and $ q $ are predecessors of a single inner vertex, $ n \in V_I \backslash \lbrace i \rbrace $, i.e., non-inner vertices $ p $ and $ q $ are on the I-path from the inner vertex $ i $ to the inner vertex $ n $. Also let $ q $ be reached from $ p $, i.e., in the I-path from $ i $ to $ n $, the vertex $ p $ is reached first and $ q $ is reached from $ p $. Since every non-inner vertex has to be a predecessor of at least one inner vertex (by definition of IC structure), the non-inner vertex $ j $ must also be a predecessor of an inner vertex. Let $ j $ be a predecessor of an inner vertex $ m \in V_I\backslash \lbrace i,n \rbrace $. Since the arc from $ p $ to $ j $ does not exist in $ T_i $ but exists in $ \mathcal{G} $, there exists an I-path from an inner vertex $ s\in V_I\backslash\lbrace i,n \rbrace $ to the inner vertex $ m $, passing through the non-inner vertex $ p $ and then through $ j $. But now, there exist two I-paths in $ \mathcal{G} $ from $ s $ to $ m $, one that has a direct arc between $ p $ and $ j $,
	\begin{displaymath}
	s \rightarrow \dots \rightarrow  p \rightarrow j \dots \rightarrow m
	\end{displaymath} and one in which $ j $ is reached from $ p $ through $ q $.
	\begin{displaymath}
	s \rightarrow \dots \rightarrow p \dots \rightarrow q\rightarrow j \dots \rightarrow m,
	\end{displaymath}
which is not allowed in an IC structure and hence $ p $ and $ q $ being predecessors of a single inner vertex is not possible.
\end{case}	
\begin{case}[ii]

	The other case is where $ p $ and $ q $ are predecessors of different inner vertices. Let $ p $ be predecessor of an inner vertex $ n \in V_I \backslash \lbrace i \rbrace $ and $ q $ be predecessor of an inner vertex $ m \in V_I \backslash \lbrace i,n\rbrace $. Let the set of inner vertices that are reached from the non-inner vertex $ j $ through some path in $ \mathcal{G} $ be $ V_I(j) $. Define $ S_j $ as the set of vertices that are successors of $ j $ and predecessors of vertices in $ V_I(j) $. We have $ V_I(j) \subset V(T_i) $. The paths from $ i $ to some vertices in $ V_I(j) $ pass through some vertices in $ S_j $ and paths from $ i $ to remaining vertices in $ V_I(j) $ will not pass through any vertex in $ S_j $. The subset of vertices in $ S_j $ which are also successors of the inner vertex $ i $ is denoted by $ S_{i,j} $.
	\begin{figure*}[!t]
	\centering
		\includegraphics[height=\columnwidth,width=\columnwidth,angle=0]{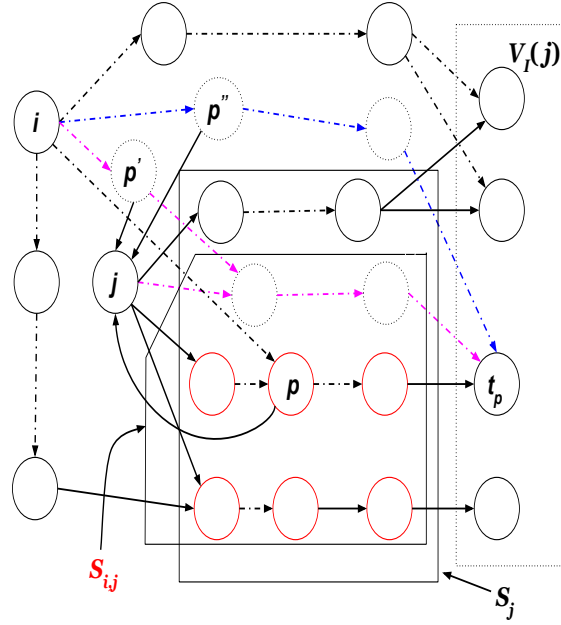}
		\caption{Illustration of $ S_j $ and $ S_{i,j}$.}
		\label{pf2}
	\end{figure*}Fig. \ref{pf2} illustrates $ S_j $ and $ S_{i,j} $. Dotted arrows indicate presence of some vertices in the path. Now 
	\begin{clm}
		$ p,q\in S_{i,j} $.
	\end{clm} 
	\begin{proof}
		Since there is an edge from $ p $ to $ j $ in $ \mathcal{G} $, $ V_I(j) \subseteq V_I(p) $. Let $ t_p $ be a vertex in $ V_I(j) $. Then $ p $ will be a predecessor of $ t_p $. 
		 Fig. \ref{pf2} also illustrates all the possible positions of $ p $. In the cases when $ p \not\in S_{i,j} $ the corresponding vertices are shown to be dotted and $ p $ is marked as $ p' $ and $ p'' $. It is seen that there exist two I-paths from $ i $ and $ t_p $ in the two cases when $ p \not \in S_{i,j} $ which is not allowed in an IC structure. So $ p \in S_{i,j} $ which means that there exists a path from $ j $ to $ p $. Similarly, there exists an inner vertex $ t_q \in V_I(j) $ to which $ q $ is a predecessor. This implies that $ q \in S_{i,j} $. Hence there exists a path from $ j $ to $ q $. 
	\end{proof}Since both $ p$, $ q \in S_{i,j}$, i.e., $ j $ is a predecessor of both $ p $ and $ q $, there exists a path from $ p $ to $ q $ through $ j $ in $ \mathcal{G} $ which implies that there exists an I- path from $ i $ to $ m $ through $ p $ and $ q $, which is a contradiction to the assumption that $ p $ and $ q $ don't lead to the same inner vertex.
\end{case}Thus $ b_{i,j} $ cannot take values more than 1. Hence $ b_{i,j} \in \lbrace0,1\rbrace $.
\end{proof}

After Theorem \ref{thm2} several examples are discussed for some of which $b_{i,j}=0$ and for the remaining ones $b_{i,j}=1.$

\begin{thm}
\label{thm1}
	The index code obtained from \textit{Construction} $1$ on $ \mathcal{G} $ is decodable using \textit{Algorithm} $1$ if and only if the IC structure, $ \mathcal{G} $, satisfies the following two conditions \textit{c}$1$ and \textit{c}$2$.
	\begin{cond}[\textit{c}$1$] $ a_{i,j} $ must be an odd number for each $ i \in \lbrace1,2,\dots,K\rbrace $ and $ j \in V_{NI}(i)\backslash N^+_{T_i}(i)$.  
	\end{cond}
	\begin{cond}[\textit{c}$2$] $ b_{i,j} $ must be zero for each $ i \in \lbrace1,2,\dots,K\rbrace $ and $ j \in V(\mathcal{G}) \backslash  V(T_i)$.
	\end{cond}
\end{thm}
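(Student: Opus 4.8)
The plan is to reduce the theorem to a parity computation on $Z_i$. Since all messages are independent and uniform and $\mathbb{F}_q$ has characteristic $2$, \textit{Algorithm} $1$ returns $x_i$ for the requesting user exactly when the linear combination it forms is literally $x_i \oplus \bigoplus_{k \in N^+_{T_i}(i)} x_k$ (the subtracted terms being precisely that user's side-information); any surviving extra term makes the output differ from $x_i$. So \textit{Algorithm} $1$ works iff, for every inner vertex $i$, the expansion of $Z_i = W_I \oplus \bigoplus_{q \in V_{NI}(i)} W_q$ has coefficient $1$ on $x_i$ and on $x_k$ for $k \in N^+_{T_i}(i)$, and coefficient $0$ on every other $x_v$. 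Writing $d_i(v)$ for the number of vertices of $V_{NI}(i)$ having $v$ as an out-neighbour in $\mathcal{G}$, substitution shows that the coefficient of $x_v$ in $Z_i$ is (one if $v \in V_I$) plus (one if $v \in V_{NI}(i)$) plus $d_i(v)$, all modulo $2$. Thus the whole statement becomes a set of parity conditions on the numbers $d_i(v)$.

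First I would expand $Z_i$ and telescope from root to leaves over $T_i$: splitting each $N^+_{\mathcal{G}}(q)$ into its tree-children $N^+_{T_i}(q)$ and its remaining non-tree out-neighbours, the tree part of $\bigoplus_{q} W_q$ cancels against $W_I$ to leave exactly $x_i \oplus \bigoplus_{k \in N^+_{T_i}(i)} x_k$ — each non-root non-inner vertex cancels with its tree-parent's term and each inner leaf cancels with $W_I$. This isolates an error $E_i = Z_i \oplus x_i \oplus \bigoplus_{k \in N^+_{T_i}(i)} x_k$ whose coefficient on $x_v$ counts, modulo $2$, the non-tree arcs from $V_{NI}(i)$ into $v$; \textit{Algorithm} $1$ works at $i$ iff $E_i = 0$. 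Classifying $v$ by its position relative to $T_i$: for $v \notin V(T_i)$ every such arc is non-tree, so its coefficient is $b_{i,v}$, which by Lemma \ref{lem1} is $0$ or $1$ and hence vanishes iff $b_{i,v}=0$, giving \textit{c}$2$; for $v \in V_{NI}(i)$ at depth $\geq 2$ exactly one incoming arc (the tree-parent arc) is a tree arc, so the coefficient is $a_{i,v}-1$, vanishing iff $a_{i,v}$ is odd, giving \textit{c}$1$. The remaining vertices — the root $i$, the inner leaves, and the non-inner children of the root — carry no $a$- or $b$-label, and I must show their coefficients vanish automatically: an arc of $\mathcal{G}$ from $V_{NI}(i)$ into $i$ would close a cycle through the single inner vertex $i$, forbidden by the no-I-cycle axiom, and a second incoming arc from $V_{NI}(i)$ at an inner leaf (or any such arc at an inner child of the root) would create a second I-path to that leaf, contradicting uniqueness of I-paths, so $d_i$ there is forced to the required parity.

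The main obstacle is the non-inner children $j$ of the root, where I must show $d_i(j)$ is even. Any arc $q \to j$ with $q \in V_{NI}(i)$ is of one of two kinds: (i) $q$ is not a descendant of $j$ in $T_i$, in which case it splices with the tree paths to produce a second I-path to some inner vertex in the subtree of $j$, which is impossible; or (ii) $q$ is a descendant of $j$, forcing a cycle of non-inner vertices through $j$. Case (ii) can genuinely occur in an IC structure, so it cannot be excluded by the axioms alone; instead I would show it is incompatible with \textit{c}$1$ (equivalently, with the absence of non-inner cycles treated in \textit{Theorem} \ref{thm2}). Placing the back-arc $q\to j$ into its host rooted tree via the union-of-trees axiom and following the non-inner cycle around inside that tree, one reaches a non-inner vertex whose tree-parent arc and cycle arc are two distinct incoming arcs from the host tree's non-inner set, which pushes the corresponding $a$-value up and is meant to force \textit{c}$1$ (or \textit{c}$2$) to fail. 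Turning this "two-arcs" observation into a clean statement about the \emph{parity} of the relevant $a$-value — rather than merely exhibiting one doubled arc — is where I expect the real difficulty to lie, and it is exactly the point where the structure of non-inner cycles must be used.

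Granting these cancellations, the two directions follow at once. For sufficiency, if \textit{c}$1$ and \textit{c}$2$ hold then every coefficient of $E_i$ vanishes for each $i$, so $Z_i = x_i \oplus \bigoplus_{k\in N^+_{T_i}(i)} x_k$ and \textit{Algorithm} $1$ succeeds. For necessity, if \textit{c}$1$ fails then some $a_{i,v}-1$ is odd and if \textit{c}$2$ fails then some $b_{i,v}=1$; in either case $E_i\neq 0$, so $Z_i$ carries a genuine extra message $x_v$ with $v\notin N^+_{T_i}(i)$, which \textit{Algorithm} $1$ does not subtract, and the value it returns is therefore not $x_i$. Hence \textit{Algorithm} $1$ works if and only if both \textit{c}$1$ and \textit{c}$2$ hold.
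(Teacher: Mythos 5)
Your overall strategy---expanding $Z_i$ and tracking the parity of the coefficient of each message---is the same as the paper's, and your treatment of the vertices governed by \textit{c}$1$ and \textit{c}$2$ (non-inner vertices at depth $\geq 2$ in $T_i$, and vertices outside $T_i$) matches the paper's proof. However, there is a genuine gap, and it is one you create yourself by setting the decodability criterion too strictly. You require $Z_i$ to equal \emph{exactly} $x_i \oplus \bigoplus_{k \in N^{+}_{T_i}(i)} x_k$, which forces you to control the coefficient of $x_j$ for every non-inner child $j$ of the root, i.e.\ to show that $d_i(j)$ is even. You correctly identify this as the hard point and admit you cannot close it. The reason you cannot close it is that the claim is false: in Example \ref{exam7}, for $T_1$ the only non-inner child of the root is $6$, and exactly one vertex of $V_{NI}(1)=\lbrace 6,7,8\rbrace$ (namely $8$) has $6$ in its out-neighbourhood, so $d_1(6)=1$ is odd, $x_6$ cancels entirely, $Z_1 = x_1 \oplus x_2$, and \textit{Algorithm} $1$ still succeeds. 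No parity condition on the children of the root is needed or provable.

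The correct observation, which is how the paper closes the sufficiency direction, is that the user requesting $x_i$ holds $x_k$ for every $k \in N^{+}_{\mathcal{G}}(i) = N^{+}_{T_i}(i)$ as side-information, so the coefficients of those messages in $Z_i$ are irrelevant: it suffices that $Z_i = x_i \oplus \bigoplus_{j \in S} x_j$ for \emph{some} subset $S \subseteq N^{+}_{T_i}(i)$. Once you replace your target identity with this weaker one, your "main obstacle" disappears, your case (ii) analysis (splicing back-arcs into host trees and chasing non-inner cycles) becomes unnecessary, and the rest of your argument---the telescoping cancellation of the tree part, the identification of the residual coefficients with $a_{i,j}-1$ and $b_{i,j}$, and the necessity direction---goes through and coincides with the paper's proof. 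One further point that both you and the paper leave implicit in the necessity direction: you must also check that the surviving $x_v$ (with $a_{i,v}$ even or $b_{i,v}=1$) is not in $N^{+}_{\mathcal{G}}(i)$; this follows from the uniqueness of I-paths, since an arc $i \to v$ together with the tree path to $v$ (or the path through the tree of $v$) would produce two I-paths from $i$ to an inner vertex reachable from $v$.
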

\begin{proof}
	The proof of the \textit{if} part is as follows. Let the inner vertex of interest be $i$ and its rooted tree be $T_i$. Using \textit{Algorithm} $1$ to decode $x_i$, $Z_i$ is computed as 
	\begin{displaymath}
	Z_i=W_I \underset{j\in V_{NI}(i)}{\oplus}W_j.
	\end{displaymath}
In $Z_i$, the messages corresponding to the inner vertices that are not directly connected to $i$ will be cancelled since each such message appears exactly twice, once in $W_I$ and once in the index code symbol corresponding to the non-inner vertex which is the immediate predecessor of the inner vertex in $T_i$. Message $x_j$ that corresponds to a non-inner vertex $j$ at a depth $\geq 2$ in $T_i$ appears exactly even number of times in $Z_i$, once in $W_j$ and in odd number of index code symbols corresponding to the non-inner vertices that are in $T_i$ as it is in out-neighbourhood of odd number of vertices in $ V_{NI}(i) $, in $ \mathcal{G} $, by hypothesis and hence it is also cancelled. Finally, message corresponding to a non-inner vertex that is not in $T_i$ is not present in $Z_i$ as it appears in none of the index code symbols corresponding to the vertices $V_{NI}(i)$, since, by hypothesis, a non-inner vertex not in $T_i$, is in out-neighbourhood of no vertices that are in $ V_{NI}(i) $, in $ \mathcal{G} $. So, $Z_i$ will be of the form 
	\begin{displaymath}
	Z_i=x_i \underset{j \in S \subseteq N^{+}_{T_{i}}(i)}{\oplus}x_j
	\end{displaymath} and hence $x_i$ is decodable by the user requesting $ x_i $ as the user will have messages corresponding to vertices in $ N^+_{T_i}(i) $ as side-information.
	
	The proof of the \textit{only if} part follows. Let the condition \textit{c}$ 1 $ be violated and \textit{c}$2$ be true, and let, in the rooted tree $T_i$ of an inner vertex $i$, a non-inner vertex $j$ at depth $\geq 2$ is in out-neighbourhood of even number of vertices in $ V_{NI}(i) $, in $ \mathcal{G} $ i.e., $ a_{i,j} $ is even. It is evident that $x_j$ is not cancelled in $Z_i$ because $x_j$ appears once in $W_j$ and in even number of index code symbols corresponding to vertices in $ V_{NI}(i)\backslash \lbrace j \rbrace$. As a result, $x_i$ is not decodable by user requesting it since $ x_j $ is not available to the user requesting $ x_i $ as side-information.\\
	Now, let \textit{c}$1$ be true and \textit{c}$2$ be false. Let the non-inner vertex $j$ violate \textit{c}$2$ for a rooted tree $T_i$ corresponding to an inner vertex $i$, i.e., $ b_{i,j} = 1 $. Then $x_j$ is not cancelled in $Z_i$ because it appears only once in the index code symbols corresponding to the vertices that are in $ V_{NI}(i)$ and thus inhibiting decodability of $x_i$ for the user requesting it, since the user does not have $ x_j $ as side-information. 
\end{proof}
\begin{thm}
\label{thm2}
	An IC structure which has no cycles containing only non-inner vertices satisfies both the conditions \textit{c}$ 1 $ and \textit{c}$ 2 $.
\end{thm}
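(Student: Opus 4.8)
The plan is to prove the statement in contrapositive spirit: assuming $\mathcal{G}$ has no cycle consisting entirely of non-inner vertices, I would verify \textit{c}$1$ and \textit{c}$2$ separately, using throughout the two defining axioms of an IC structure — the absence of I-cycles and the uniqueness of the I-path between any two inner vertices.

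For \textit{c}$1$, fix $i$ and a non-inner vertex $j$ at depth $\geq 2$ in $T_i$. First I would note that the unique parent of $j$ in $T_i$ is itself non-inner (it is an internal, non-root node of $T_i$, and inner vertices other than the root are leaves), so it lies in $V_{NI}(i)$ and supplies an arc into $j$; hence $a_{i,j}\geq 1$. It then suffices to show $a_{i,j}=1$. Suppose a second vertex $v\in V_{NI}(i)$ had $v\to j$ in $\mathcal{G}$, and split into cases by the position of $v$ relative to $j$ in $T_i$. If $v$ is a proper descendant of $j$, then the tree path $j\rightarrow\cdots\rightarrow v$ together with the arc $v\to j$ is a cycle all of whose vertices (internal tree nodes, $j$, and $v$) are non-inner, contradicting the hypothesis. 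If $v$ is an ancestor of $j$ or incomparable to $j$, let $m$ be the inner leaf of $T_i$ that $j$ leads to; I would show $i\rightarrow\cdots\rightarrow v\to j\rightarrow\cdots\rightarrow m$ is a \emph{second} I-path from $i$ to $m$, distinct from the tree I-path through the parent of $j$, contradicting uniqueness. The only delicate point is that this candidate path is simple, which holds because the segment $i\rightarrow\cdots\rightarrow v$ consists of $v$ and its ancestors, a set disjoint from $j$ and its tree-descendants down to $m$. Thus no second in-neighbour exists, $a_{i,j}=1$ is odd, and \textit{c}$1$ holds.

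For \textit{c}$2$, fix $i$ and $j\in V(\mathcal{G})\backslash V(T_i)$ and suppose for contradiction that $b_{i,j}\geq 1$, so some $v\in V_{NI}(i)$ has $v\to j$. Since (as used in \textit{Lemma}~\ref{lem1}) every non-inner vertex is a predecessor of some inner vertex, I would pick a directed path from $j$ to an inner vertex $t$ whose interior is non-inner, and form the walk $W: i\rightarrow\cdots\rightarrow v\to j\rightarrow\cdots\rightarrow t$, whose interior is entirely non-inner (its initial segment is the tree path from $i$ to $v$). If $t\neq i$ and $W$ were simple, it would be an I-path from $i$ to $t$, hence by uniqueness would lie in $T_i$, forcing $j\in V(T_i)$, a contradiction; so $W$ is not simple and contains a directed cycle. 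As the only inner vertices that can appear on $W$ are its endpoints $i$ and $t$, any cycle avoiding $i$ is a non-inner cycle (contradicting the hypothesis), while a cycle through $i$ has $i$ as its sole inner vertex and is therefore an I-cycle (contradicting the IC axiom). The degenerate case $V_I(j)=\{i\}$, which forces $t=i$, is handled identically: $W$ is then a closed walk at $i$ whose cycle decomposition yields either a non-inner cycle or an I-cycle through $i$. In every case a contradiction results, so $b_{i,j}=0$ and \textit{c}$2$ holds.

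I expect the argument for \textit{c}$2$ to be the main obstacle: one must ensure the ``return'' walk from $j$ genuinely re-enters $\mathcal{G}$ through non-inner vertices and closes into a forbidden cycle, and in particular one must dispose of the degenerate situation in which $j$ reaches no inner vertex other than $i$ — it is exactly here that the no-I-cycle axiom, rather than the no-non-inner-cycle hypothesis, is invoked. This step is the natural continuation of \textit{Lemma}~\ref{lem1}: the mechanism that forces a path from $j$ back to each of its in-neighbours there is what manufactures the contradiction-producing cycle here.
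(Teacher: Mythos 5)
Your proof is correct. For condition \textit{c}$1$ it is essentially the paper's argument reorganized: the paper first shows (via I-path uniqueness, with the justification delegated to a figure) that any second in-neighbour $q$ of $j$ must be a successor of $j$ inside $T_i$, and then closes the non-inner cycle $q\rightarrow j\rightarrow\cdots\rightarrow q$; your case split on the position of $v$ relative to $j$ (proper descendant versus ancestor/incomparable) performs exactly the same two steps, with the added benefit of an explicit check that the competing I-path $i\rightarrow\cdots\rightarrow v\rightarrow j\rightarrow\cdots\rightarrow m$ is simple. For condition \textit{c}$2$ your route is genuinely different. The paper argues locally: it claims the in-neighbour $p\in V_{NI}(i)$ of $j$ must lie in $S_{i,j}$ (justified only by pointing to Fig.~\ref{pf2}), so that a path from $j$ back to $p$ exists and the edge $p\rightarrow j$ closes a non-inner cycle. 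You instead extend forward to an inner vertex $t$ reachable from $j$ and analyse the walk $i\rightarrow\cdots\rightarrow p\rightarrow j\rightarrow\cdots\rightarrow t$ globally: if it is simple it is \emph{the} I-path from $i$ to $t$ and forces $j\in V(T_i)$, and if it is not simple it contains either a non-inner cycle or an I-cycle through $i$. This buys two things the paper's write-up lacks: the contradiction no longer rests on reading the figure, and the degenerate case in which $i$ is the only inner vertex reachable from $j$ (where the walk closes up at $i$ and the no-I-cycle axiom, rather than the no-non-inner-cycle hypothesis, supplies the contradiction) is treated explicitly. It also makes \textit{Lemma}~\ref{lem1} unnecessary for this part, since you exclude $b_{i,j}\geq 1$ outright rather than only $b_{i,j}=1$.
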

\begin{proof} The proof is in two parts. The condition \textit{c}$ 1 $ is shown to be satisfied in Part $ 1 $ and the condition \textit{c}$ 2 $ in Part $ 2 $.
	\begin{part}
	Let $ \mathcal{G} $ be an IC structure which has no cycles containing only non-inner vertices. It will be shown that $ a_{i,j} \geq 2 $ is not possible for any inner vertex $ i $ and a non-inner vertex $j$ at depth $ \geq2 $ in $ T_i $. Consider for such $ i $ and $ j $, $ a_{i,j} = z $ for some integer $ z\geq2 $. Since $ j $ is at depth $ \geq2 $, there exists a non-inner vertex $ p $ which is a successor of $ i $, predecessor of $ j $ and has $ j $ in its out-neighbourhood. So, $ a_{i,j} \geq1$. Let $ q $ be a non-inner vertex in the set $ \lbrace v:v\in V_{NI}(i), j\in N^{+}_{\mathcal{G}}(v)\rbrace $ and $ q\not = p $. Let $ V_I(j) $ be the set of inner vertices reached from $ j $ in $ T_i $. Let $ S_j $ denotes the set of non-inner vertices that are successors of $ j $ and predecessors of the vertices in $ V_I(j) $.
\begin{clm}
		$ q \in S_j $
	\end{clm}
	\begin{proof}
		Suppose not. Then there exists two I-paths from $ i $ to vertices in $ V_I(j) $ (one passing through $ p $ and the other passing through $ q $, see Fig. \ref{imag}). 
		\begin{figure}[!t]
		\centering
			\includegraphics[height=2in,width=3in,angle=0]{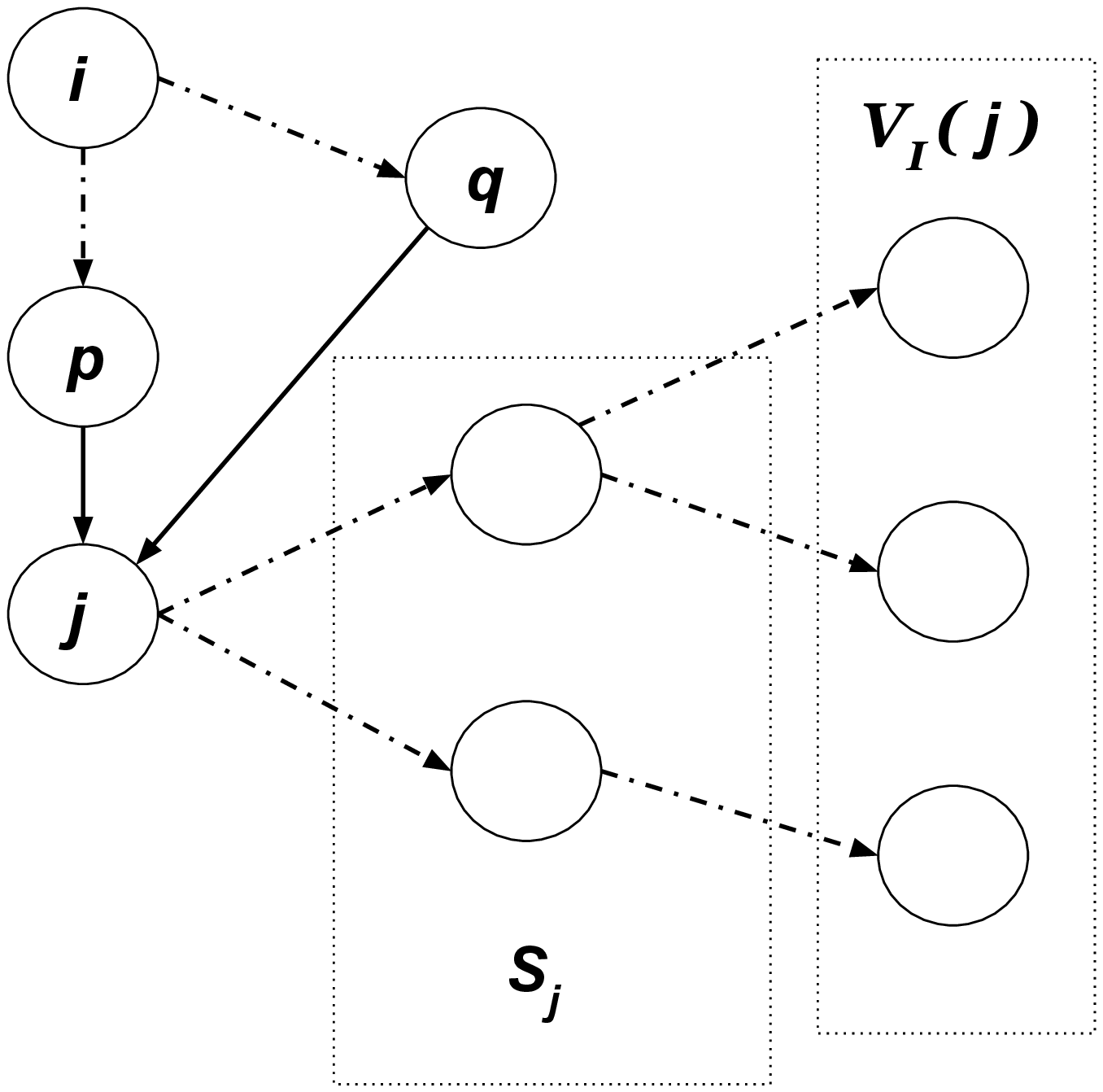}
			\caption{Figure illustrating $ q \not \in S_j $.}
			\label{imag}
		\end{figure}
		Hence it is a contradiction.
\end{proof}
	Now, $ q\in S_j $ means that there is a path from $ j $ to $ q $. As there exists an edge from $ q $ to $ j $, by definition of $ q $, a cycle containing only non-inner vertices which include $ j $ and $ q $ is formed. This is a contradiction to the assumption that $ \mathcal{G} $ doesn't have any cycles containing only non-inner vertices. Hence $ a_{i,j} \leq1 $. As a result, $ a_{i,j} = 1 $ (an odd number) for any inner vertex $ i $ and a non-inner vertex $ j $ which is at a depth $ \geq2 $ in $ T_i $. 
	\end{part}
\begin{part}
	Let $ \mathcal{G} $ be an IC structure which has no cycles containing only non-inner vertices. Suppose $ b_{i,j}=1 $ for an inner vertex $ i $ and a non-inner vertex $ j $ which is not in the rooted tree $ T_i $. This implies that there exists a non-inner vertex $ p $ in $ T_i $ which has $ j $ in its out-neighbourhood. Let $ V_I(j)$ be the set of inner vertices that are successors of non-inner vertex $ j $. Let $ S_j $ be the set of non-inner vertices that are successors of $ j $ and predecessors of vertices in $ V_I(j) $ and let $ S_{i,j} $ be the subset of non-inner vertices in $ S_j $ which are successors of $ i $ (see Fig. \ref{pf2}).
\begin{clm}
		$ p \in S_{i,j} $.
	\end{clm}
	\begin{proof}
		Suppose not. Then it leads to existence of two I-paths between $ i $ and $ t_p $ which is not allowed in an IC structure. See Fig. \ref{pf2}.
\end{proof}
	Since $p \in S_{i,j}$, there exists a path from $ j $ to $ p $. Since an edge exists from $ p $ to $ j $, by definition of $ p $, a cycle containing only non-inner vertices which include $ p $ and $ j $ exists in $ \mathcal{G} $. This is a contradiction.
\end{part}Hence $ b_{i,j} \not =1 $ for any inner vertex $ i $ and any non-inner vertex $ j $ not in the rooted tree $ T_i $. As $ b_{i,j} \in \lbrace0,1\rbrace $ (as proved in \textit{Lemma} $ 1 $), $ b_{i,j} = 0 $ for any inner vertex $ i $ and any non-inner vertex $ j $ which is not present in rooted tree $ T_i $. 
\end{proof}

The conditions of Theorem \ref{thm1} are now illustrated for Example  \ref{exam1} and Example \ref{exam2} discussed in the previous section. Also, it is shown that though the constructed codes for these two examples are not decodable using \textit{Algorithm} 1 they are decodable using only linear combinations of the index code symbols.   
\begin{exmp}[continued]
Table \ref{table1} shows that $ \mathcal{G}_1 $  violates \textit{c}$ 1 $ and Table \ref{table2} shows that \textit{c}$ 2 $ is also violated by $ \mathcal{G}_1 $. Since the conditions are violated in the rooted trees of inner vertices $ 1 $, $ 2 $ and $ 3 $, messages $ x_1 $, $ x_2 $ and $ x_3 $ are not decodable by using \textit{Algorithm} $ 1 $ whereas the messages $ x_4 $, $ x_5 $ and $ x_6 $ are decodable using \textit{Algorithm} $ 1 $.

\begin{remark}
However, the messages $ x_1 $, $ x_2 $ and $ x_3 $ are decodable by using other linear combinations of the index code symbols, as shown.
\begin{itemize}
\item $ x_1 $ is decoded using $ Z'_1= W_I \oplus W_7 \oplus W_9 \oplus W_{10} \oplus W_{11} \oplus W_{12} 	\oplus W_{13} $ which results in $ Z'_1= x_1 \oplus x_3 \oplus x_7 $.
\item $ x_2 $ and $ x_3 $ are decoded using	$ Z'_2 = W_I \oplus W_9 \oplus W_{10} \oplus W_{11} \oplus W_{12} \oplus W_{13} $ which results in $ Z'_2= x_1 \oplus x_2 \oplus x_3 \oplus x_6 $.
\end{itemize}
\end{remark}
\begin{table}[!t]
\centering
\begin{tabular}{|c|c|c|c|}
\hline
$ T_i $ & $ V_{NI}(i) $ & $ j \in V_{NI}(T_i)\backslash N^+_{T_i}(i) $ & $ a_{i,j} $ \\
\hline\hline
$ T_1 $ & $ \lbrace7,9,10,11,12,13,14\rbrace $ & $ \lbrace9,10,11,12,13\rbrace $ & $ 2 $,$ 1 $,$ 1 $,$ 1 $,$ 1 $ \\ 	
\hline
$ T_2 $ & $ \lbrace10,11,12,13\rbrace $ & $ \lbrace11,12,13\rbrace $ & $ 1 $,$ 1 $,$ 1 $ \\
\hline
$ T_3 $ & $ \lbrace11,12,13\rbrace $ & $ \lbrace12,13\rbrace $ & $ 1 $,$ 1 $ \\ 	 
\hline
$ T_4 $ & $ \lbrace7,8\rbrace $ & $ \phi $ & $ - $ \\ 		
\hline
$ T_5 $ & $ \phi $ & $ \phi $ & $ - $ \\
\hline
$ T_6 $ & $9,10,11,12,13$ & $ 9,10,11,12$ & $ 1 $,$ 1 $,$ 1 $,$ 1 $ \\
\hline
\end{tabular}\\
\caption{Table that verifies \textit{c}$ 1 $ for $ \mathcal{G}_1 $}
\label{table1}
\end{table}
\begin{table*}[!t]
\centering
	\begin{tabular}{|c|c|c|c|}
		\hline
		$ T_i $ & $ V_{NI}(i) $ & $ j \in V(\mathcal{G}_1) \backslash V(T_i) $ & $ b_{i,j} $ \\ 
		\hline\hline
		$ T_1 $ & $ \lbrace7,9,10,11,12,13,14\rbrace $ & $ \lbrace8\rbrace $ & $ 0 $ \\ 	
		\hline
		$ T_2 $ & $ \lbrace10,11,12,13\rbrace $ & $ \lbrace7,8,9,14\rbrace $ & $ 0 $,$ 0 $,$ 1 $,$ 0 $ \\
		\hline
		$ T_3 $ & $ \lbrace11,12,13\rbrace $ & $ \lbrace7,8,9,10,14\rbrace $ & $ 0 $,$ 0 $,$ 1 $,$ 0 $ \\ 	 
		\hline
		$ T_4 $ & $ \lbrace7,8\rbrace $ & $ \lbrace9,10,11,12,13,14\rbrace $ & $ 0 $,$ 0 $,$ 0 $,$ 0 $,$ 0 $,$ 0 $\\ 		
		\hline
		$ T_5 $ & $ \phi $ & $ \lbrace7,8,9,10,11,12,13,14\rbrace $ & $ - $ \\
		\hline
		$ T_6 $ & $9,10,11,12,13$ & $\lbrace7,8,14\rbrace$ & $ 0 $,$ 0 $,$ 0 $\\
		\hline
	\end{tabular}\\
	\caption{Table that verifies \textit{c}$ 2 $ for $ \mathcal{G}_1 $}
	\label{table2}
\end{table*}
\end{exmp}
\begin{exmp}[continued]
	Table \ref{table3} shows that $ \mathcal{G}_2 $  violates \textit{c}$ 1 $ and Table \ref{table4} shows that \textit{c}$ 2 $ is satisfied by $ \mathcal{G}_2 $. Since \textit{c}$ 1 $ is violated in the rooted tree of inner vertex $ 1 $, the message $ x_1 $ is not decodable by using \textit{Algorithm} $ 1 $ and the rest of the messages corresponding to inner vertices are decodable by using \textit{Algorithm} $ 1 $.
	\begin{remark}
However, $ x_1 $ is decodable using the the linear combination $ Z''= W_I \oplus W_6 \oplus W_8 \oplus W_9 $ which results in $ Z''= x_1 \oplus x_2 \oplus x_6 $.
	\end{remark}
\begin{table}[!t]\centering
\begin{tabular}{|c|c|c|c|} \hline
			$ T_i $ & $ V_{NI}(i) $ & $ j \in V_{NI}(T_i)\backslash  N^+_{T_i}(i)  $ & $ a_{i,j} $ \\
			\hline\hline
			$ T_1 $ & $ \lbrace6,7,8,9\rbrace $ & $ \lbrace8,9\rbrace $ & $ 2 $,$ 1 $ \\ 	
			\hline
			$ T_2 $ & $ \phi $ & $ \phi $ & $ - $\\
			\hline
			$ T_3 $ & $ \lbrace8,9,11\rbrace $ & $ \lbrace8\rbrace $ & $ 1 $\\ 	 
			\hline
			$ T_4 $ & $ \lbrace10,11\rbrace $ & $ \lbrace11\rbrace $ & $ 1 $ \\ 		
			\hline
			$ T_5 $ & $ \lbrace10,11\rbrace $ & $ \lbrace11\rbrace $ & $ 1 $ \\
			\hline
		\end{tabular}\\
		\caption{Table that verifies \textit{c}$ 1 $ for $ \mathcal{G}_2 $}
		\label{table3}
	\end{table}
	\begin{table}[!t]
		\centering
		\begin{tabular}{|c|c|c|c|}
			\hline
			$ T_i $ & $ V_{NI}(i) $ & $ j \in V(\mathcal{G}_2) \backslash V(T_i) $ & $ b_{i,j} $ \\ 
			\hline\hline
			$ T_1 $ & $ \lbrace7,8,9\rbrace $ & $ \lbrace6,10,11\rbrace $ & $ 0 $,$ 0 $,$ 0 $ \\ 	
			\hline
			$ T_2 $ & $ \phi $ & $ \lbrace6,7,8,9,10,11\rbrace $ & $ - $\\
			\hline
			$ T_3 $ & $ \lbrace8,9,11\rbrace $ & $ \lbrace6,7,10\rbrace $ & $ 0 $,$ 0 $,$ 0 $ \\ 	 
			\hline
			$ T_4 $ & $ \lbrace10,11\rbrace $ & $ \lbrace6,7,8,9\rbrace $ & $ 0 $,$ 0 $,$ 0 $,$ 0 $\\ 		
			\hline
			$ T_5 $ & $ \lbrace10,11\rbrace $ & $ \lbrace6,7,8,9\rbrace $ & $ 0 $,$ 0 $,$ 0 $,$ 0 $ \\
			\hline
		\end{tabular}\\
		\caption{Table that verifies \textit{c}$ 2 $ for $ \mathcal{G}_2 $}
		\label{table4}
	\end{table}
\end{exmp}
Example \ref{exam3} and Example \ref{exam4} discussed below illustrate Theorem \ref{thm2}. 

\begin{ex}
\label{exam3}
Consider $\mathcal{G}_3$, a side-information graph which is a $3$-IC structure shown in Fig. \ref{f4}. Notice that $\mathcal{G}_3$ does not have any cycles consisting of only non-inner vertices and that it is 
\begin{figure}[!t]
	\centering
			\includegraphics[height=\columnwidth,width=\columnwidth,angle=0]{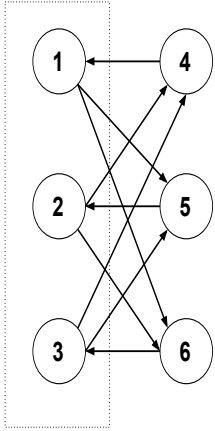}
		\caption{$ 3 $-IC structure $\mathcal{G}_3$ with $ V_I=\lbrace1,2,3\rbrace$.}
		\label{f4}
\end{figure}
indeed  a 3-IC structure with inner vertex set $V_I=\lbrace1,2,3\rbrace$ since 
	\begin{enumerate}
		\item there are no cycles containing only one vertex from the set $\lbrace1,2,3\rbrace $ in $\mathcal{G}_3$ (i.e., no I-cycles),
		\item using the rooted trees for each vertex in the set $\lbrace1,2,3\rbrace$, which are given in Fig. \ref{rt41}, \ref{rt42} and \ref{rt43} respectively, it is verified that there exists a unique path between any two different vertices in $ V_I $ in $ \mathcal{G}_3 $ and does not contain any other vertex in $ V_I $ (i.e., unique I-path between any pair of inner vertices),
		\item $\mathcal{G}_3$ is the union of all the $3$ rooted trees.
	\end{enumerate}
\begin{figure*}[!t]
	\centering
	\begin{subfigure}{.31\textwidth}
		\centering
		\includegraphics[width=15pc]{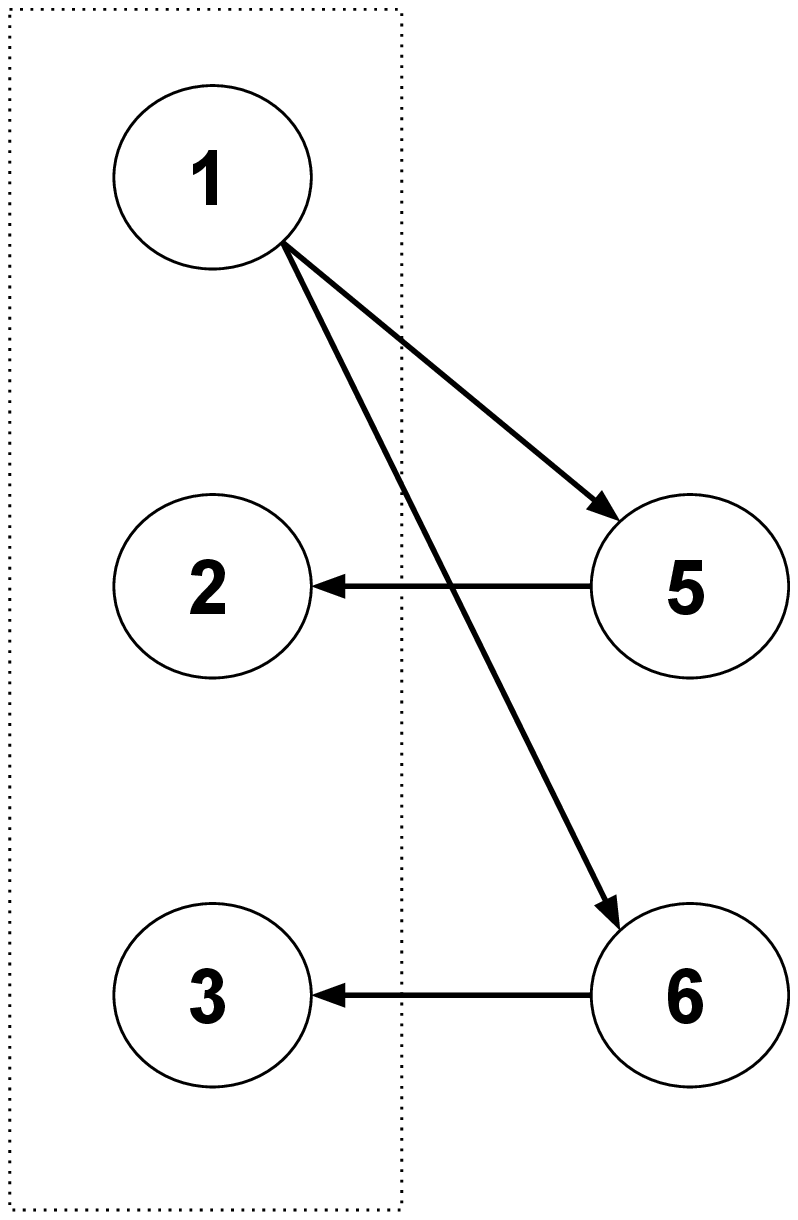}
		\caption{}
		\label{rt41}
	\end{subfigure}%
	\begin{subfigure}{.31\textwidth}
		\centering
		\includegraphics[width=15pc]{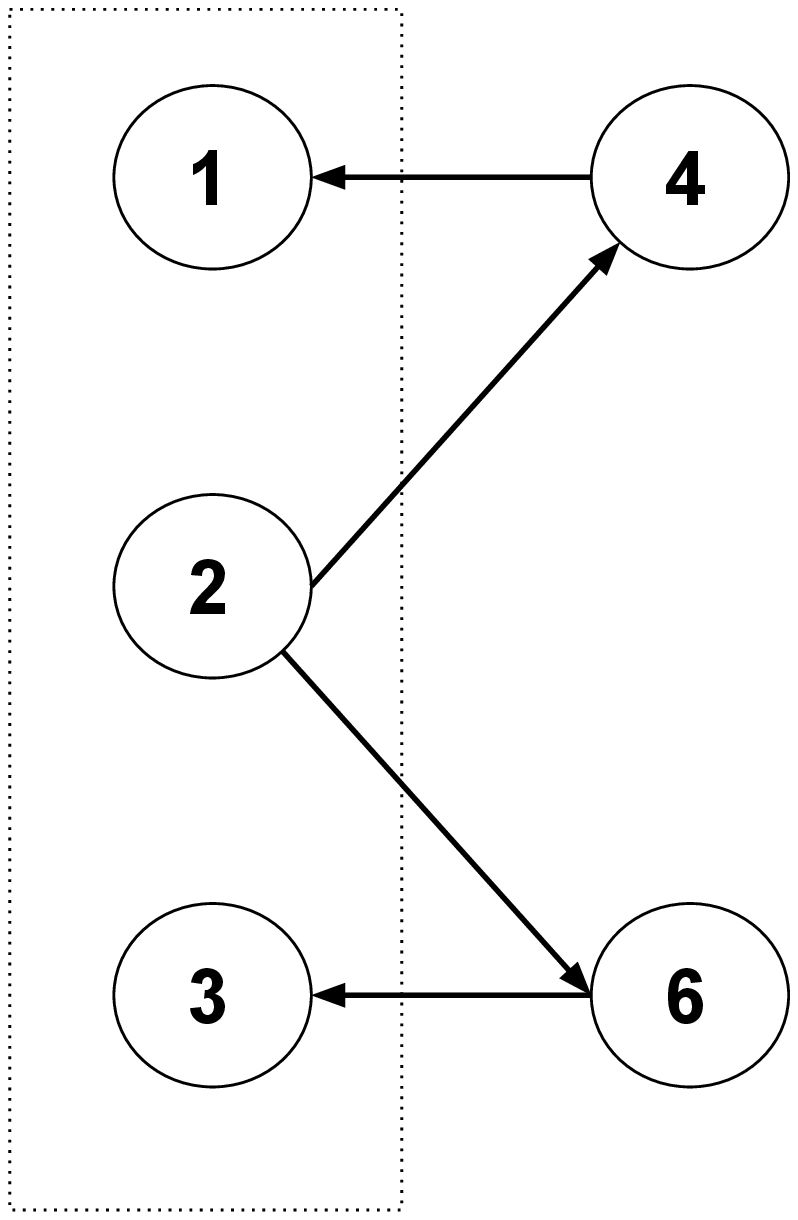}
		\caption{}
		\label{rt42}
	\end{subfigure}
	\begin{subfigure}{.31\textwidth}
		\centering
		\includegraphics[width=15pc]{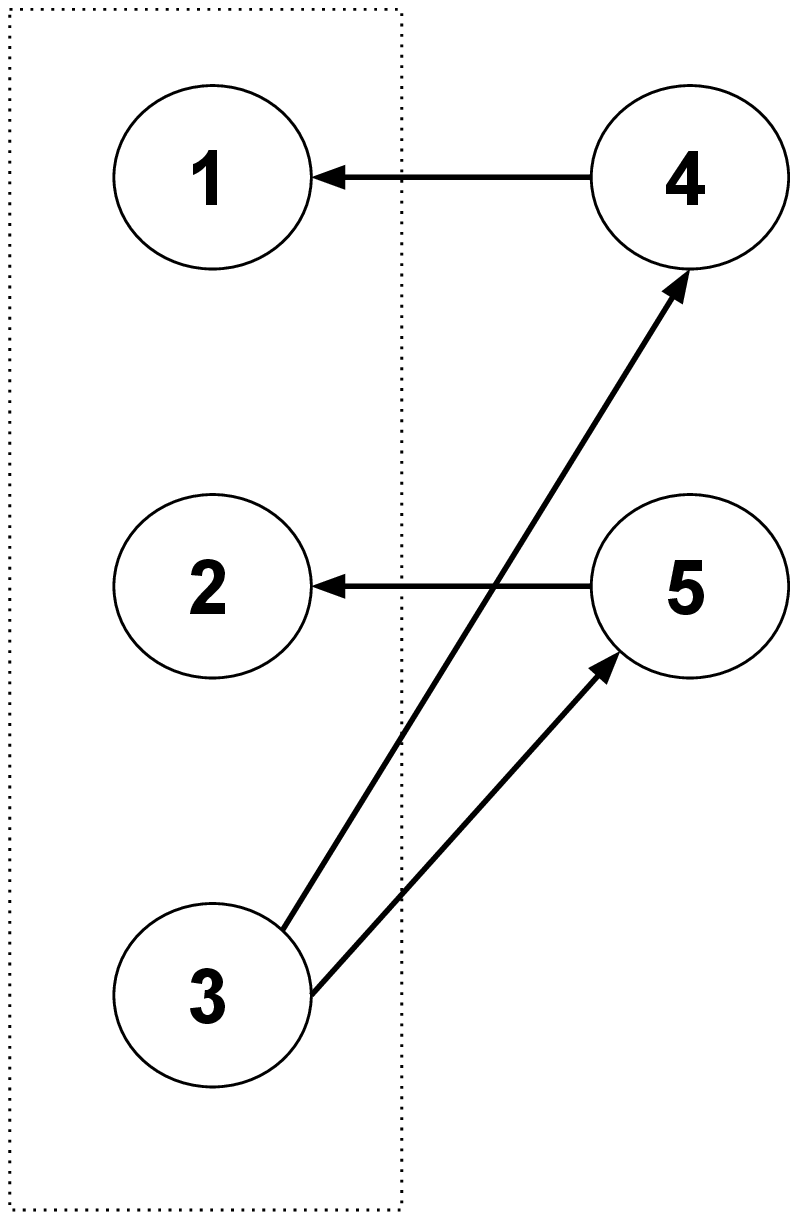}
		\caption{}
		\label{rt43}
	\end{subfigure}
	\caption{Figures showing rooted trees of inner vertices $ 1,2,3$ of $ \mathcal{G}_3 $, respectively.}
\end{figure*}
Conditions \textit{c}$ 1 $ and \textit{c}$ 2 $ are illustrated for $ \mathcal{G}_3 $ as follows.
The rooted trees $T_1$, $ T_2 $ and $ T_3 $ have no non-inner vertices at depth $ \geq2 $ and hence \textit{c}$ 1 $ need not be verified. From Table \ref{table5}, it is clear that $ b_{i,j} = 0$ for each $ i \in \lbrace1,2,3\rbrace $ and $ j \in V(\mathcal{G})\backslash V(T_i) $.
\begin{table}[h!]
\centering
	\hspace{5mm}
\begin{tabular}{|c|c|c|c|} 
	\hline
	$ T_i $ & $ V_{NI}(i) $ & $ j \in V(\mathcal{G}_3) \backslash V(T_i) $ & $ b_{i,j} $ \\ 
	\hline\hline
	$ T_1 $ & $ \lbrace 5,6 \rbrace $ & $ \lbrace4\rbrace $ & $ 0 $ \\ 
	\hline
	$ T_2$ & $ \lbrace 4,6 \rbrace $ & $ \lbrace5\rbrace $ & $ 0 $ \\
	\hline
	$ T_3$ & $ \lbrace 4,5 \rbrace $ & $ \lbrace6\rbrace $ & $ 0 $\\
	\hline 

\end{tabular}\\
\caption{Table that illustrates \textit{c}$ 2 $ for $ \mathcal{G}_3 $.}
\label{table5}	
\end{table}
It is thus verified that \textit{c}$ 1 $ and \textit{c}$ 2 $ are satisfied by $ \mathcal{G}_3 $.
\end{ex}
\begin{ex}
\label{exam4}
Consider $\mathcal{G}_4$, a side-information graph which is a $4$-IC structure shown in Fig. \ref{f5}.
\begin{figure}[h!]
	\includegraphics[height=\columnwidth,width=\columnwidth,angle=0]{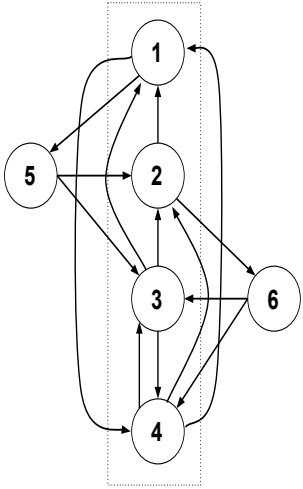}
	\caption{$ 4 $-IC structure $\mathcal{G}_4$ with $ V_I=\lbrace1,2,3,4\rbrace$.}
	\label{f5}
\end{figure}
It is a $ 4 $-IC structure with inner vertex set  $V_I=\lbrace1,2,3,4\rbrace$ since 
\begin{enumerate}
	\item there are no cycles with only one vertex from the set $ \lbrace1,2,3,4\rbrace $ in $\mathcal{G}_4$ (i.e., no I-cycles),
	\item using the rooted trees for each vertex in the set $\lbrace1,2,3,4\rbrace$, which are given in Fig. \ref{rt51}, \ref{rt52}, \ref{rt53} and \ref{rt54} respectively, it is verified that there exists a unique path between any two different vertices in $ V_I $ in $ \mathcal{G}_4 $ and does not contain any other vertex in $ V_I $ (i.e., unique I-path between any pair of inner vertices),
	\item $\mathcal{G}_4$ is the union of all the $4$ rooted trees
\end{enumerate}
and it has no cycles consisting of only non-inner vertices.
\begin{figure*}[!t]
	\centering
	\begin{subfigure}{.31\textwidth}
		\hspace{-4mm}
		\includegraphics[width=15pc]{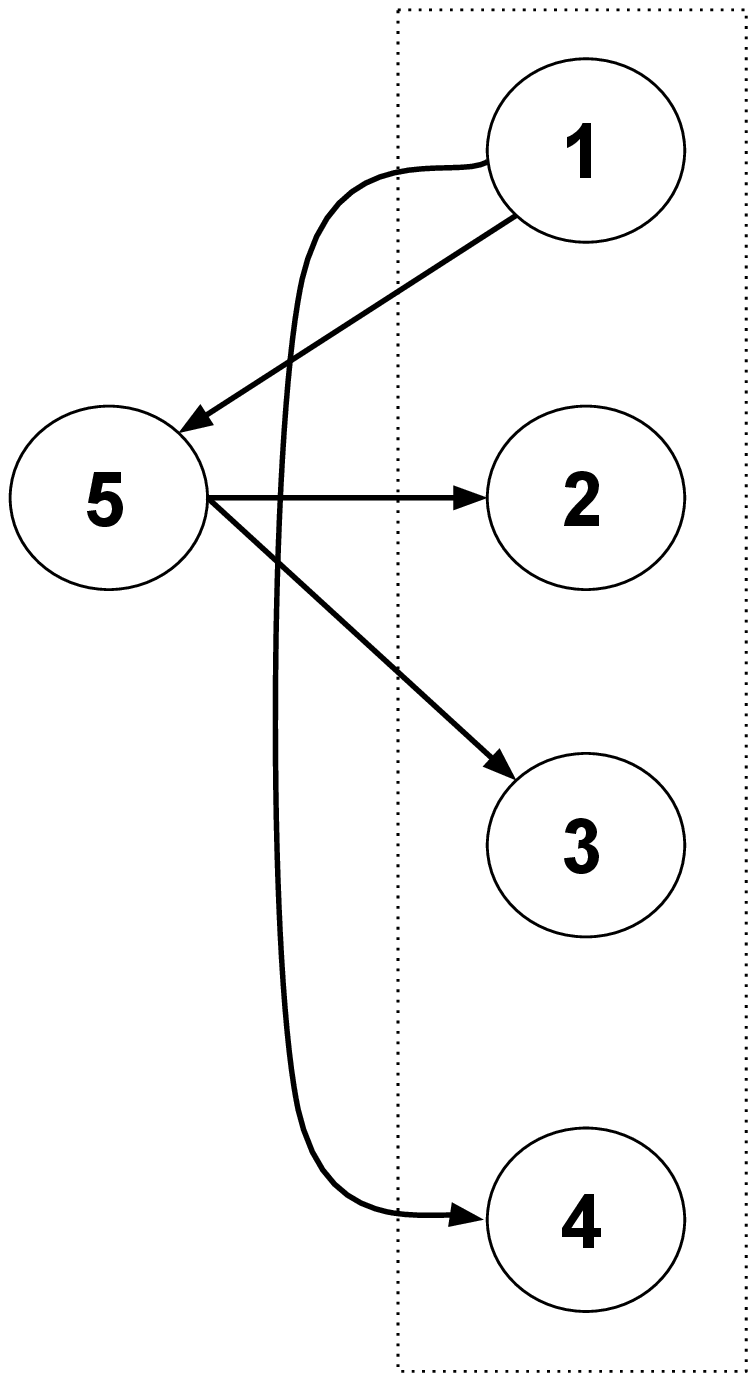}
		\caption{}
		\label{rt51}
	\end{subfigure}%
	\begin{subfigure}{.31\textwidth}
		\hspace{-4mm}
		\includegraphics[width=15pc]{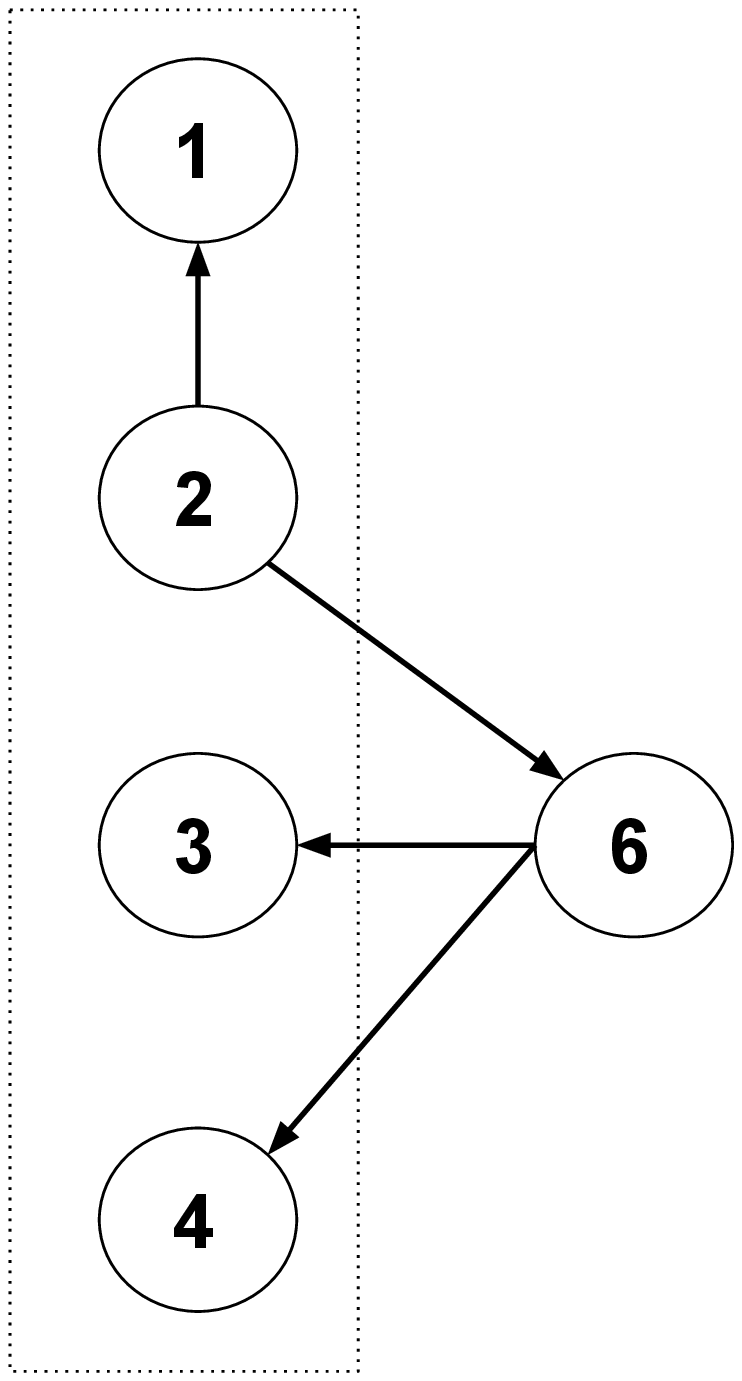}
		\caption{}
		\label{rt52}
	\end{subfigure}
	\begin{subfigure}{.31\textwidth}
		\hspace{-4mm}
		\includegraphics[width=15pc]{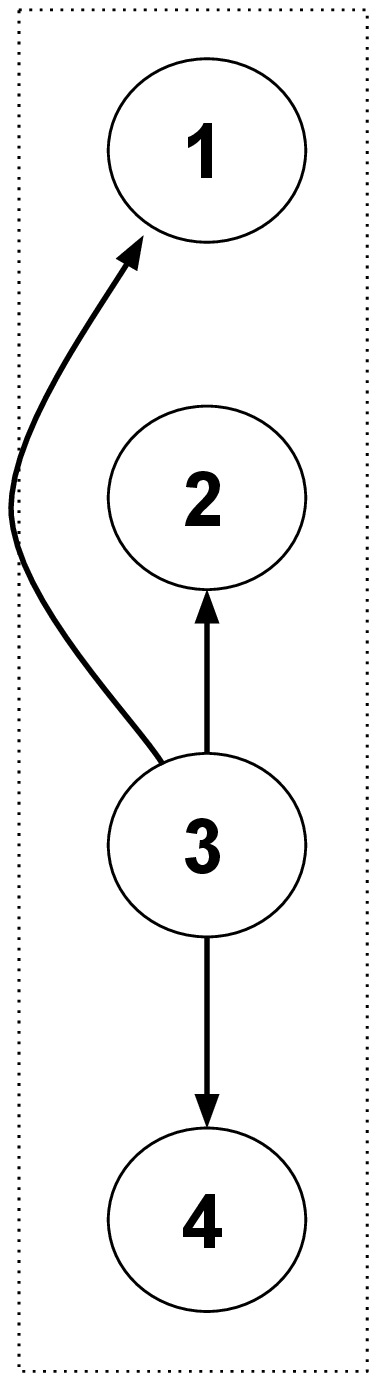}
		\caption{}
		\label{rt53}
	\end{subfigure}
	\begin{subfigure}{.31\textwidth}
		\hspace{-4mm}
		\includegraphics[width=15pc]{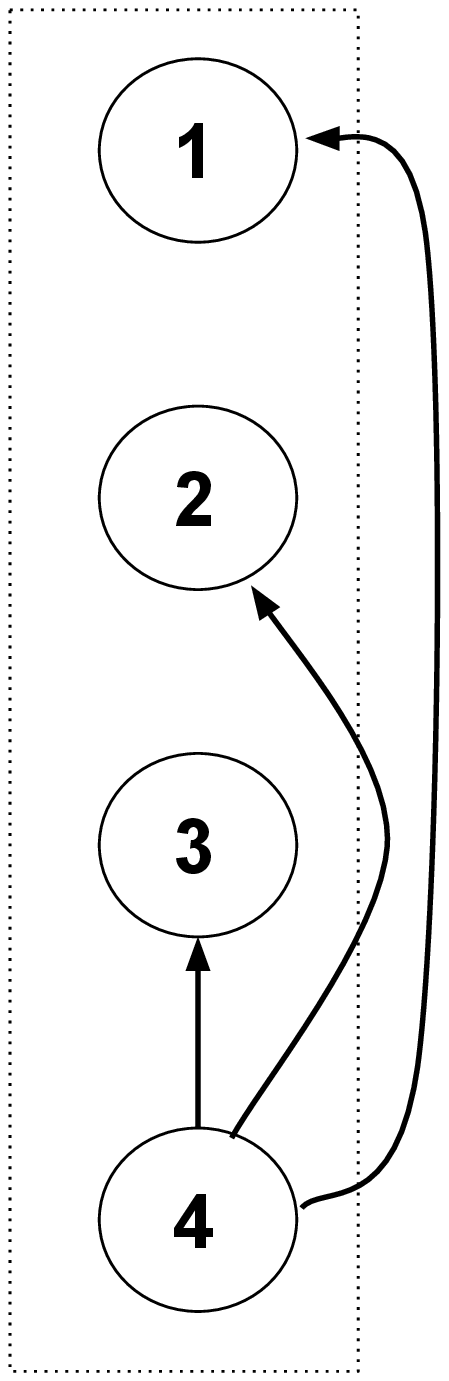}
		\caption{}
		\label{rt54}
	\end{subfigure}%
	\caption{Figures showing rooted trees of inner vertices $ 1,2,3,4 $ of $ \mathcal{G}_4 $, respectively.}
\end{figure*}

Conditions \textit{c}$ 1 $ and \textit{c}$ 2 $ are illustrated for $ \mathcal{G}_4 $ as follows. The rooted trees $T_1$, $ T_2 $, $ T_3 $ and $ T_4 $ have no non-inner vertices at depth $ \geq2 $ and hence \textit{c}$ 1 $ need not be verified. Verification of \textit{c}$ 2 $ is done using Table \ref{table6}.
\begin{table}[h!]
\centering	
\hspace{7mm}
\begin{tabular}{|c|c|c|c|} 
	\hline
	$ T_i $ & $ V_{NI} $ & $ j \in V(\mathcal{G}_4)\backslash V(T_i) $ & $ b_{i,j} $ \\ 
	\hline\hline
	$ T_1 $ & $ \lbrace 5\rbrace $ & $ \lbrace6\rbrace $ & $ 0 $ \\ 
	\hline
	$ T_2$ & $ \lbrace 6 \rbrace $ & $ \lbrace5\rbrace $ & $ 0 $ \\
	\hline
	$ T_3$ & $ \phi $ & $ \lbrace5,6\rbrace $ & $ - $\\
	\hline 
	$ T_4$ & $ \phi $ & $ \lbrace 5,6 \rbrace $ & $ - $\\
	\hline 
\end{tabular}\\
\caption{Table that illustrates \textit{c}$ 2 $ for $ \mathcal{G}_4 $.}
\label{table6}
\end{table}It is thus verified that \textit{c}$ 1 $ and \textit{c}$ 2 $ are satisfied by $ \mathcal{G}_4$.
\end{ex}
The following three examples (Examples \ref{exam5}, \ref{exam6} and \ref{exam7}) illustrate Theorem \ref{thm1} for some IC structures having at least one cycle consisting of only non-inner vertices. 
\begin{ex}
\label{exam5}
Consider $\mathcal{G}_5$, a side-information graph which is a $5$-IC structure, shown in Fig. \ref{f3}.
\begin{figure}[h!]

	\includegraphics[height=\columnwidth,width=\columnwidth,angle=0]{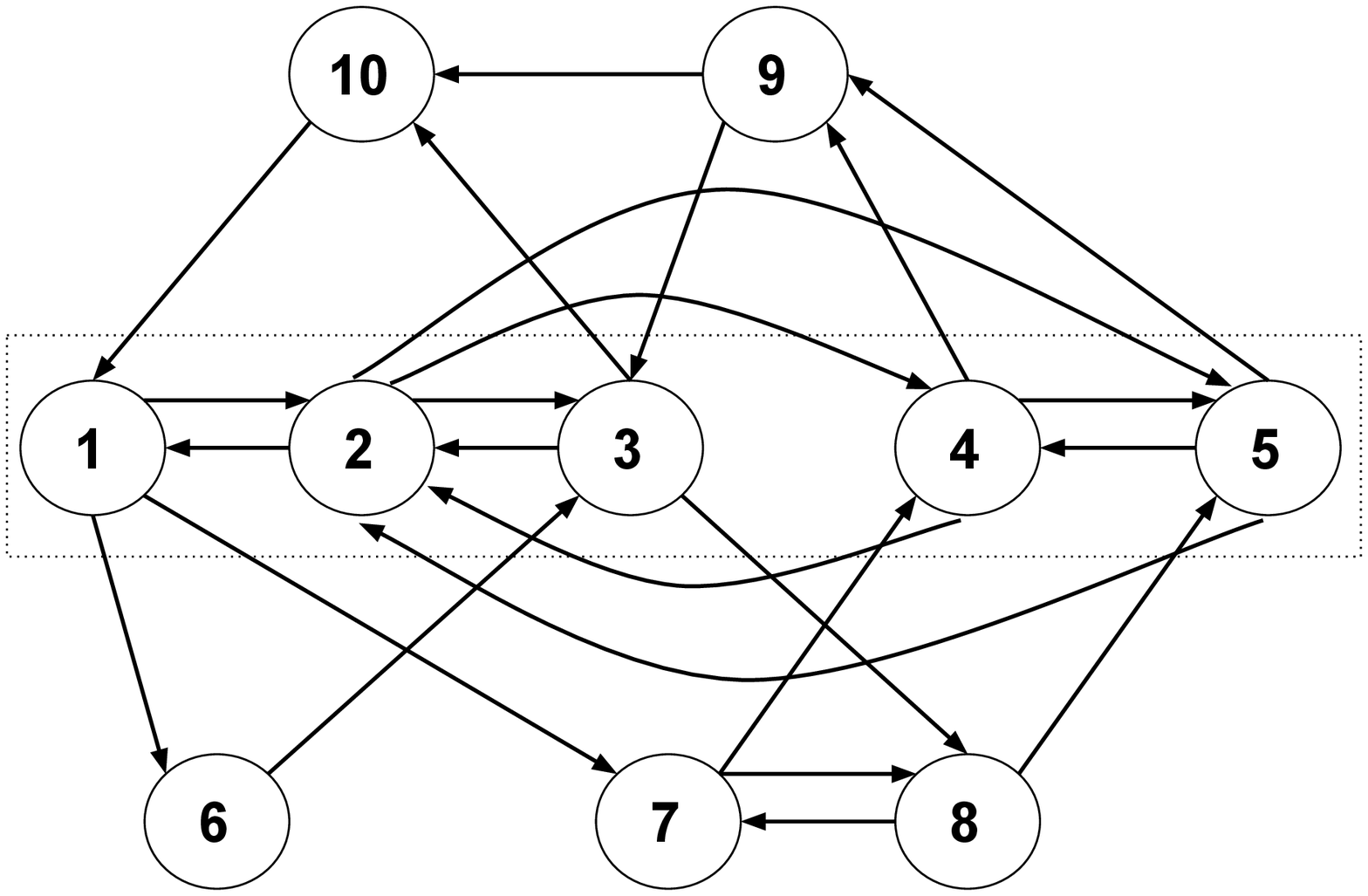}
	\caption{$5$-IC structure $\mathcal{G}_5$ with $V_I=\lbrace1,2,3,4,5\rbrace$.}
		\label{f3}
\end{figure}
It can be easily verified that it is a $ 5 $-IC structure with inner vertex set $V_I=\lbrace1,2,3,4,5\rbrace$ 
because
\begin{enumerate}
\item there are no cycles with only one vertex from the set $\lbrace1,2,3,4,5\rbrace$ in $\mathcal{G}_5$ (i.e., no I-cycles),
\item using the rooted trees for each vertex in the set $\lbrace1,2,3,4,5\rbrace$, which are given in Fig. \ref{rt31}, \ref{rt32}, \ref{rt33}, \ref{rt34} and \ref{rt35} respectively, it is verified that there exists a unique path between any two different vertices in $ V_I $ in $ \mathcal{G}_5 $ and does not contain any other vertex in $ V_I $ (i.e., unique I-path between any pair of inner vertices),
\item $\mathcal{G}_5$ is the union of all the $5$ rooted trees.
\end{enumerate}
Note that the vertices 7 and 8 form a cycle consisting of non-inner vertices. 
\begin{figure*}[!t]
	\centering
	\begin{subfigure}{.31\textwidth}
		\hspace{-6mm}
		\includegraphics[width=15pc]{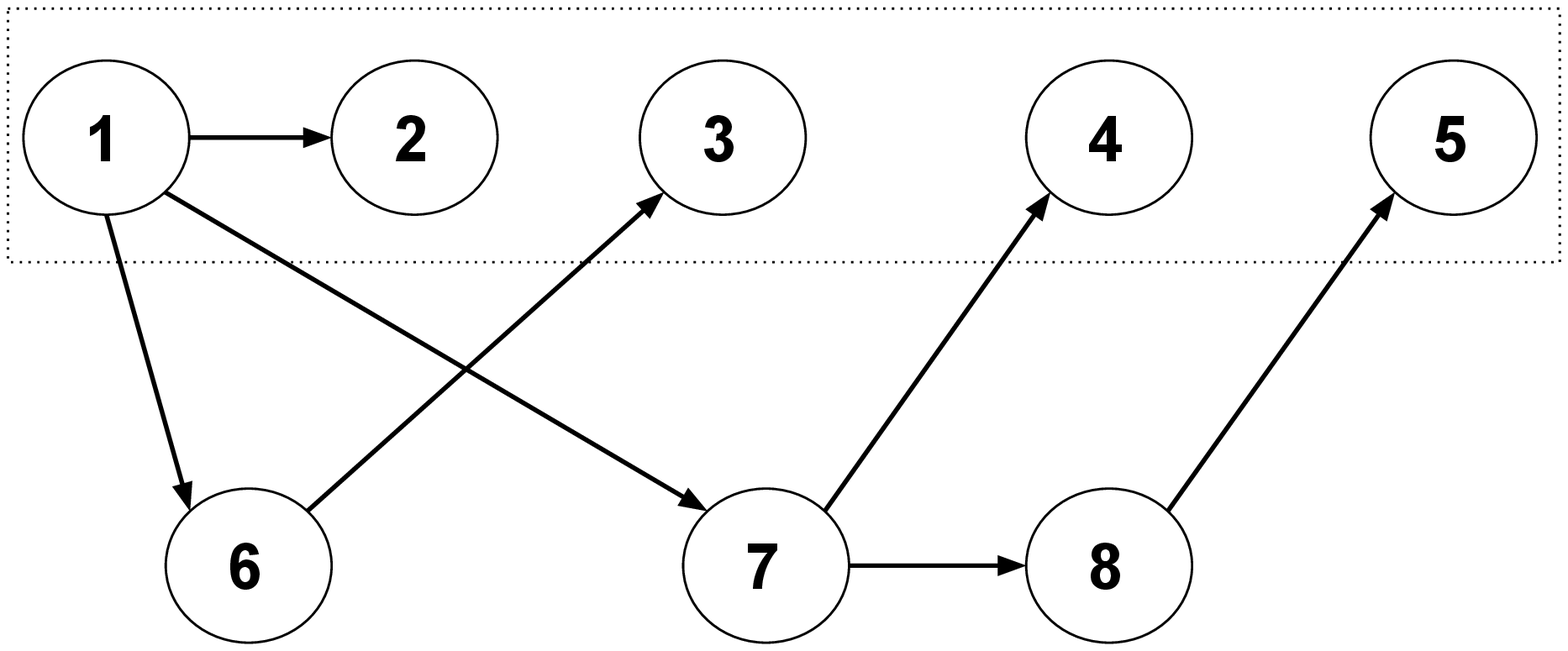}
		\caption{}
		\label{rt31}
	\end{subfigure}%
	\begin{subfigure}{.31\textwidth}
		\hspace{-6mm}
		\includegraphics[width=15pc]{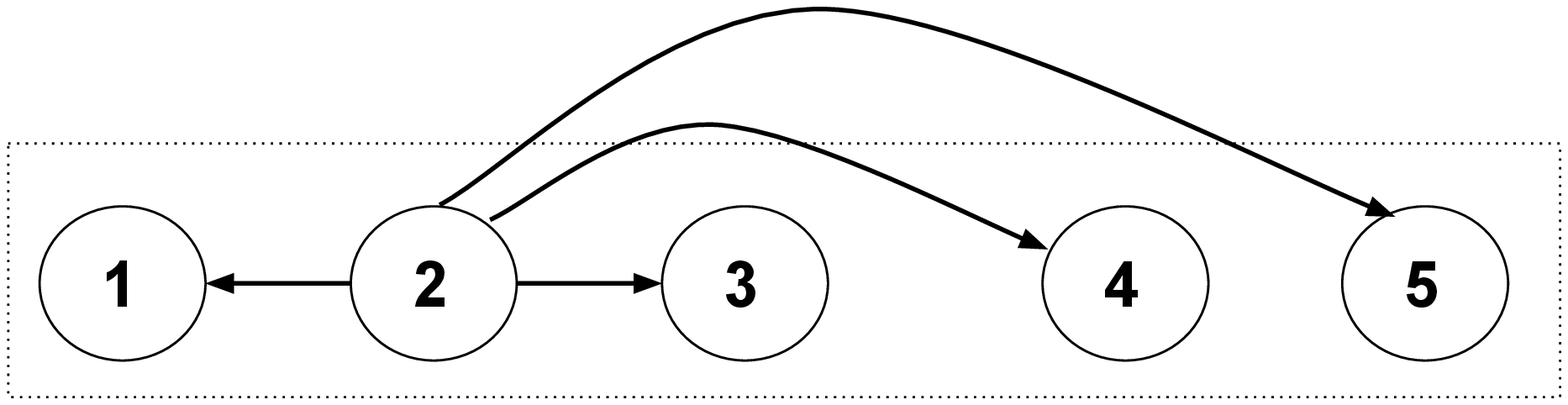}
		\caption{}
		\label{rt32}
	\end{subfigure}
	\begin{subfigure}{.31\textwidth}
		\hspace{-6mm}
		\includegraphics[width=15pc]{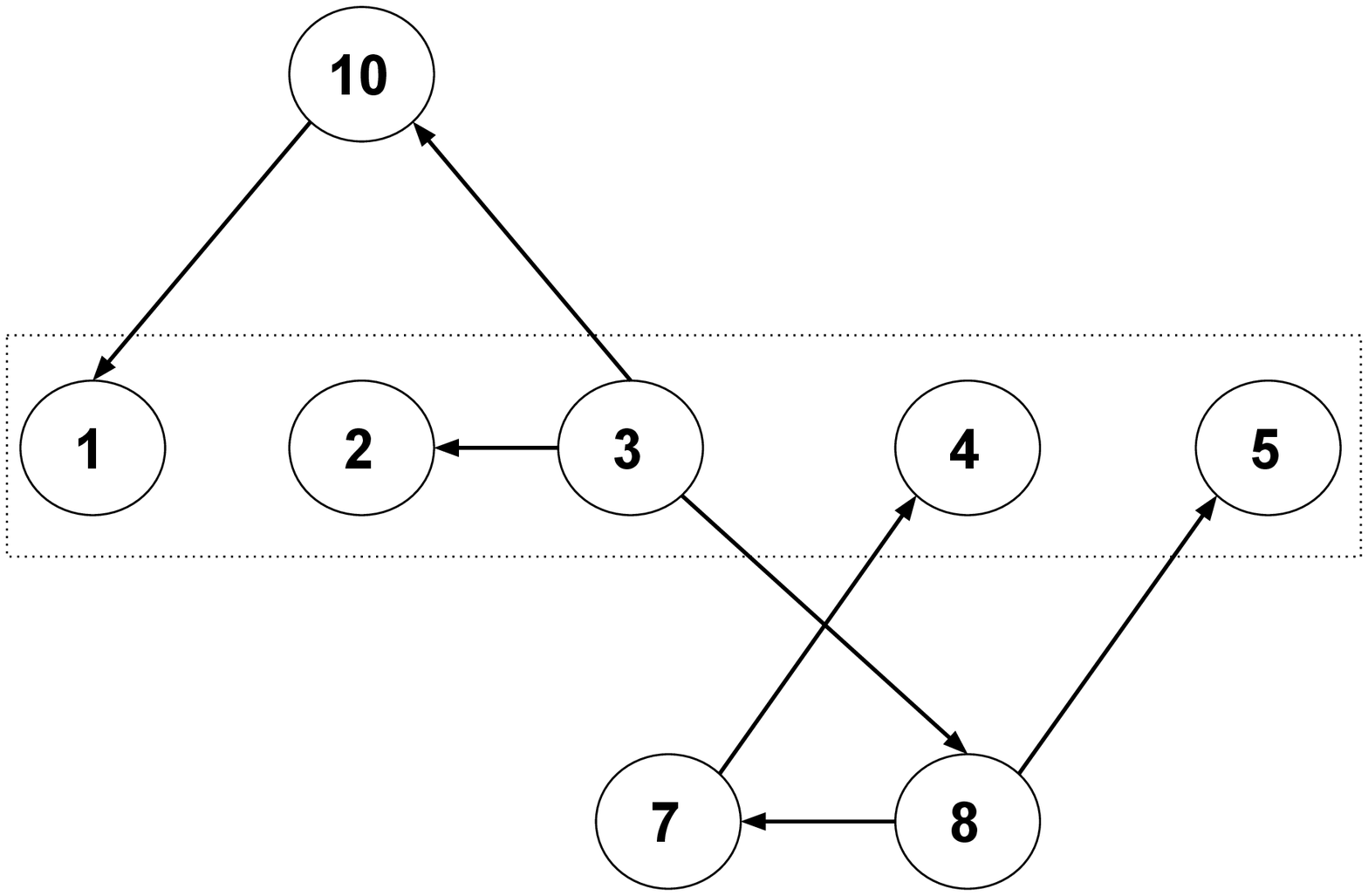}
		\caption{}
		\label{rt33}
	\end{subfigure}
	\centering
	\begin{subfigure}{.31\textwidth}
		\hspace{-6mm}
		\includegraphics[width=15pc]{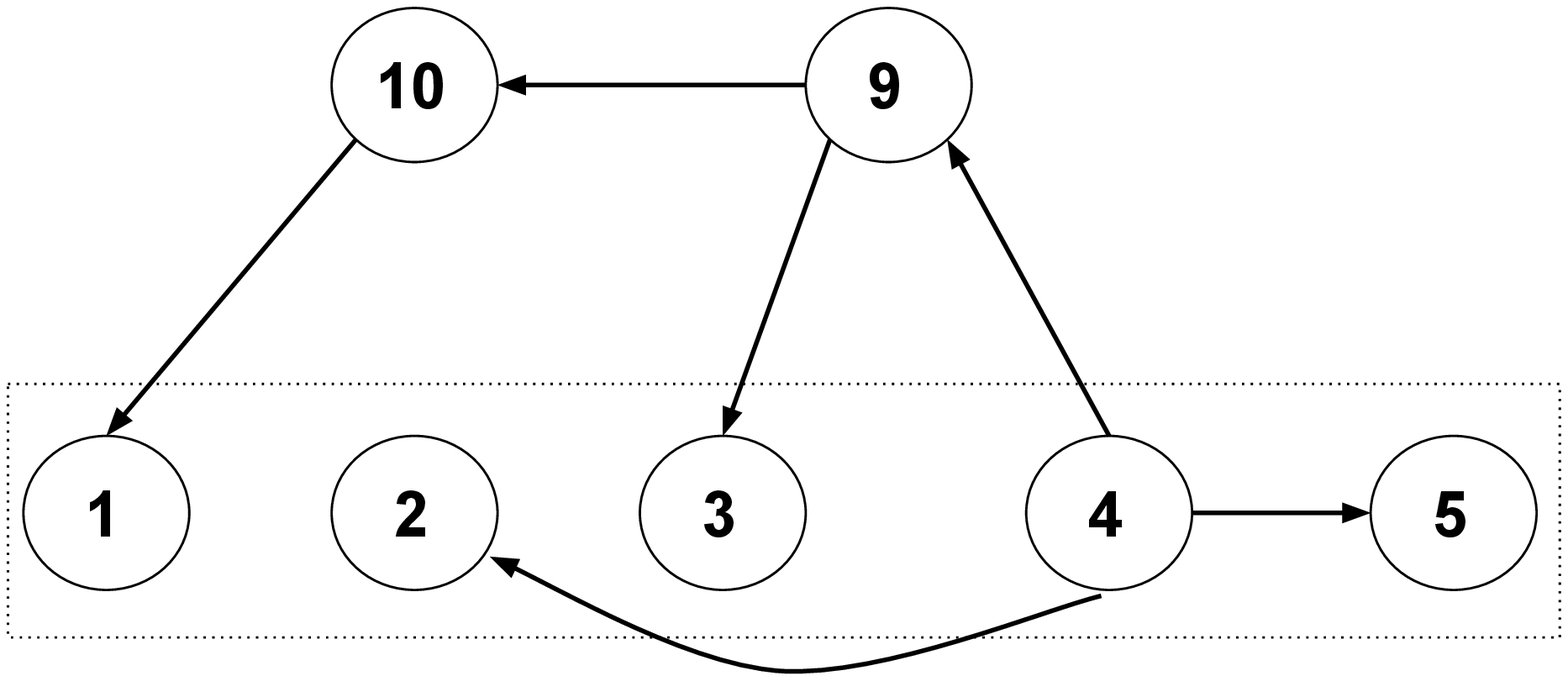}
		\caption{}
		\label{rt34}
	\end{subfigure}%
	\begin{subfigure}{.31\textwidth}
		\hspace{-6mm}
		\includegraphics[width=15pc]{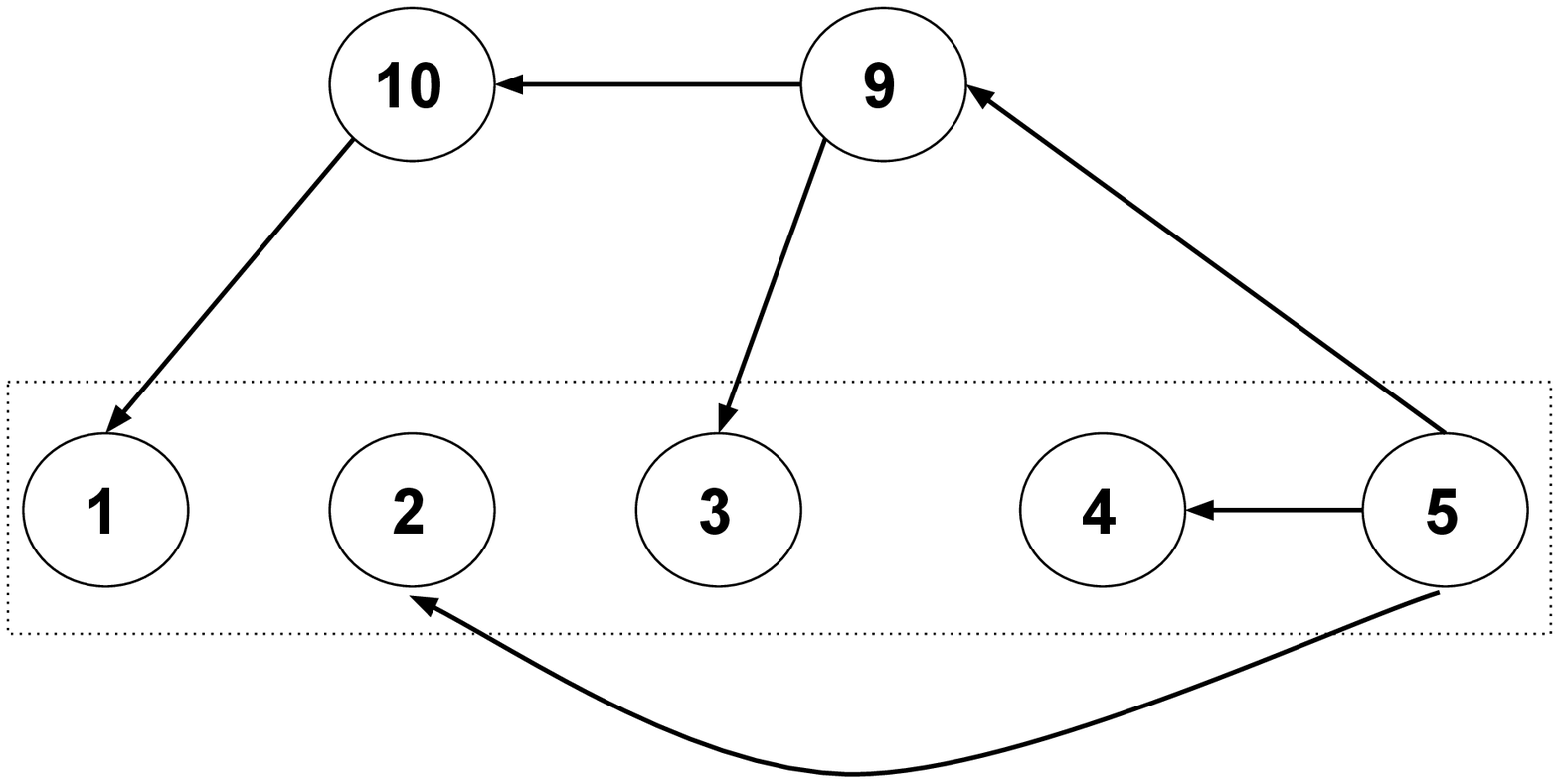}
		\caption{}
		\label{rt35}
	\end{subfigure}
	\caption{Figures showing rooted trees of inner vertices $ 1,2,3,4,5 $ of $ \mathcal{G}_5 $, respectively.}
\end{figure*}

\begin{table}[h!]
\centering
\begin{tabular}{|c|c|c|c|}
	\hline
	$ T_i $ & $ V_{NI}(i) $ & $ j \in V_{NI}(T_i)\backslash  N^+_{T_i}(i) $ & $ a_{i,j} $ \\
	\hline\hline
	$ T_1 $ & $ \lbrace6,7,8\rbrace $ & $ \lbrace8\rbrace $ & $ 1 $ \\ 	
	\hline
	$ T_2 $ & $ \phi $ & $ \phi $ & $ - $ \\
	\hline
	$ T_3 $ & $ \lbrace7,8,10\rbrace $ & $ \lbrace7\rbrace $ & $ 1 $ \\ 	 
	\hline
	$ T_4 $ & $ \lbrace9,10\rbrace $ & $ \lbrace10\rbrace $ & $ 1 $ \\ 		
	\hline
	$ T_5 $ & $ \lbrace9,10\rbrace $ & $ \lbrace10\rbrace $ & $ 1 $ \\
	\hline
\end{tabular}\\
\caption{Table that illustrates \textit{c}$ 1 $ for $ \mathcal{G}_5 $.}
\label{table7}
\end{table}
\begin{table}[h!]
\centering
	\hspace{4mm}
\begin{tabular}{|c|c|c|c|}
\hline
$ T_i $ & $ V_{NI}(i) $ & $ j \in V(\mathcal{G}_5)\backslash V_{T_i}$ & $ b_{i,j} $ \\
\hline\hline
	$ T_1 $ & $ \lbrace6,7,8\rbrace $ & $ \lbrace9,10\rbrace $ & $0$, $ 0 $\\ 	
\hline
	$ T_2 $ & $ \phi $ & $ \lbrace6,7,8,9,10\rbrace $ & $ - $ \\ 	
\hline
	$ T_3$ & $ \lbrace7,8,10\rbrace $ & $ \lbrace6,9\rbrace $ & $ 0 $, $ 0 $\\ 	
\hline
	$ T_4 $ & $ \lbrace9,10\rbrace $ & $ \lbrace6,7,8\rbrace $ & $ 0 $, $ 0 $, $ 0 $\\ 	
\hline
	$ T_5 $ & $ \lbrace9,10\rbrace $ & $ \lbrace6,7,8\rbrace $ & $0$, $ 0 $, $ 0 $\\ 	
\hline
\end{tabular}\\
\caption{Table that illustrates \textit{c}$ 2 $ for $ \mathcal{G}_5 $.}
\label{tab4}
\end{table}
 Conditions \textit{c}$ 1 $ and \textit{c}$ 2 $ are illustrated for $ \mathcal{G}_5 $ in Table \ref{table7} and Table \ref{tab4}, respectively. As a result, \textit{Algorithm} $ 1 $ can be used to decode an index code obtained by using \textit{Construction} $ 1$ on the IC structure $ \mathcal{G}_5 $.
\end{ex}
\begin{ex}
\label{exam6}
	Consider $\mathcal{G}_6$, a side-information graph which is a $6$-IC structure, shown in Fig. \ref{f7}.
	\begin{figure}[h!]

		\includegraphics[height=\columnwidth,width=\columnwidth,angle=0]{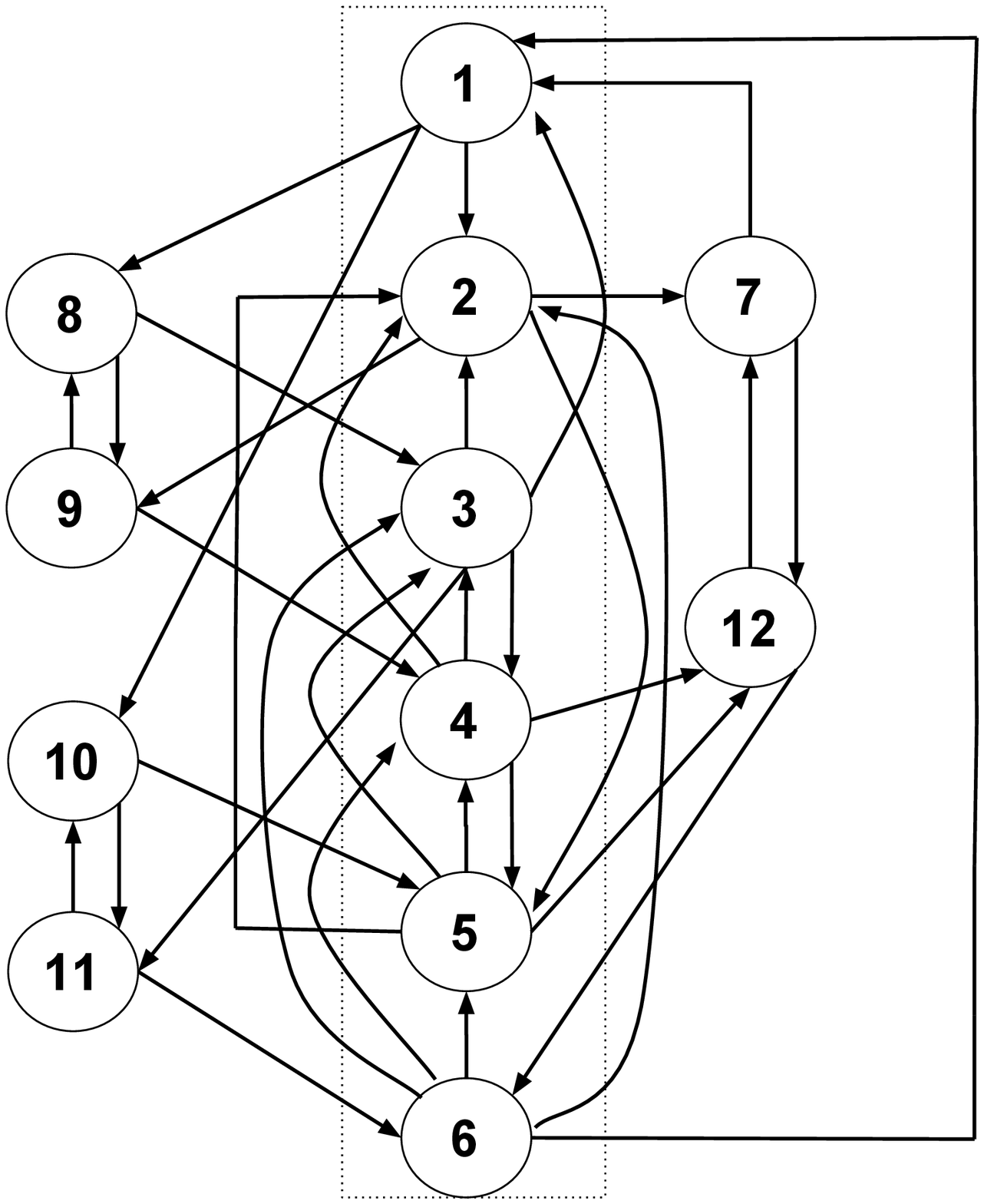}
		\caption{$6$-IC structure $\mathcal{G}_6$ with $V_I=\lbrace1,2,3,4,5,6\rbrace$.}
				\label{f7}
	\end{figure}$\mathcal{G}_6$ is a $ 6 $-IC structure with inner vertex set $V_I=\lbrace1,2,3,4,5,6\rbrace$ because
	\begin{enumerate}
		\item there are no cycles with only one vertex from the set $\lbrace1,2,3,4,5,6\rbrace$ in $\mathcal{G}_6$ (i.e., no I-cycles),
		\item using the rooted trees for each vertex in the set, $\lbrace1,2,3,4,5,6\rbrace$, which are given in Fig. \ref{rt71}, \ref{rt72}, \ref{rt73}, \ref{rt74}, \ref{rt75} and \ref{rt76} respectively, it is verified that there exists a unique path between any two different vertices in $ V_I $ in $ \mathcal{G}_6 $ and does not contain any other vertex in $ V_I $ (i.e., unique I-path between any pair of inner vertices),
		\item $\mathcal{G}_6$ is the union of all the $6$ rooted trees.
	\end{enumerate}
Also, notice that there are three disjoint cycles each of them consisting of only the non-inner vertices.
\begin{figure*}[!t]
		\centering
		\begin{subfigure}{.31\textwidth}
			\hspace{-6mm}
			\includegraphics[width=15pc]{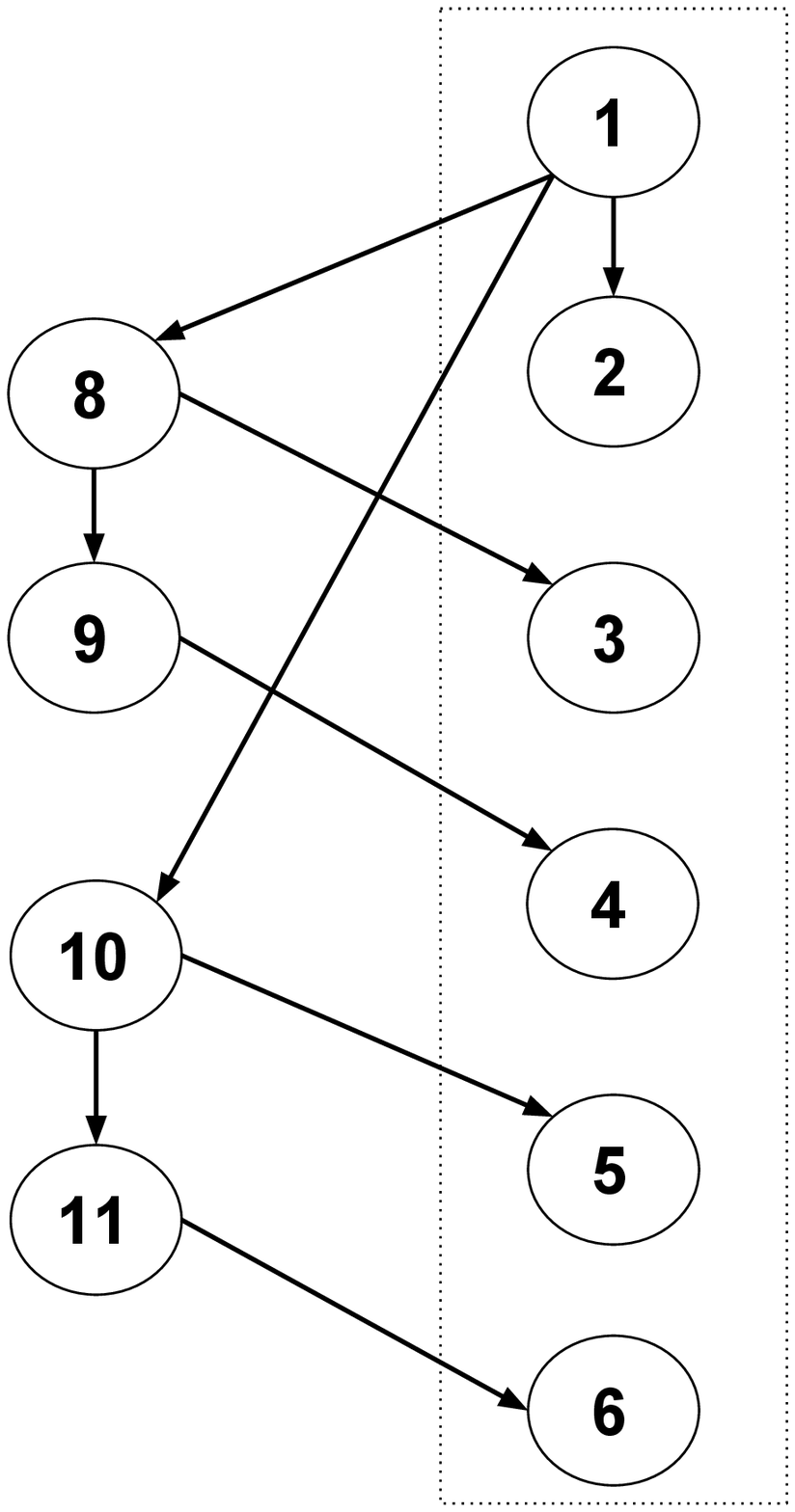}
			\caption{}
			\label{rt71}
		\end{subfigure}%
		\begin{subfigure}{.31\textwidth}
			\hspace{-6mm}
			\includegraphics[width=15pc]{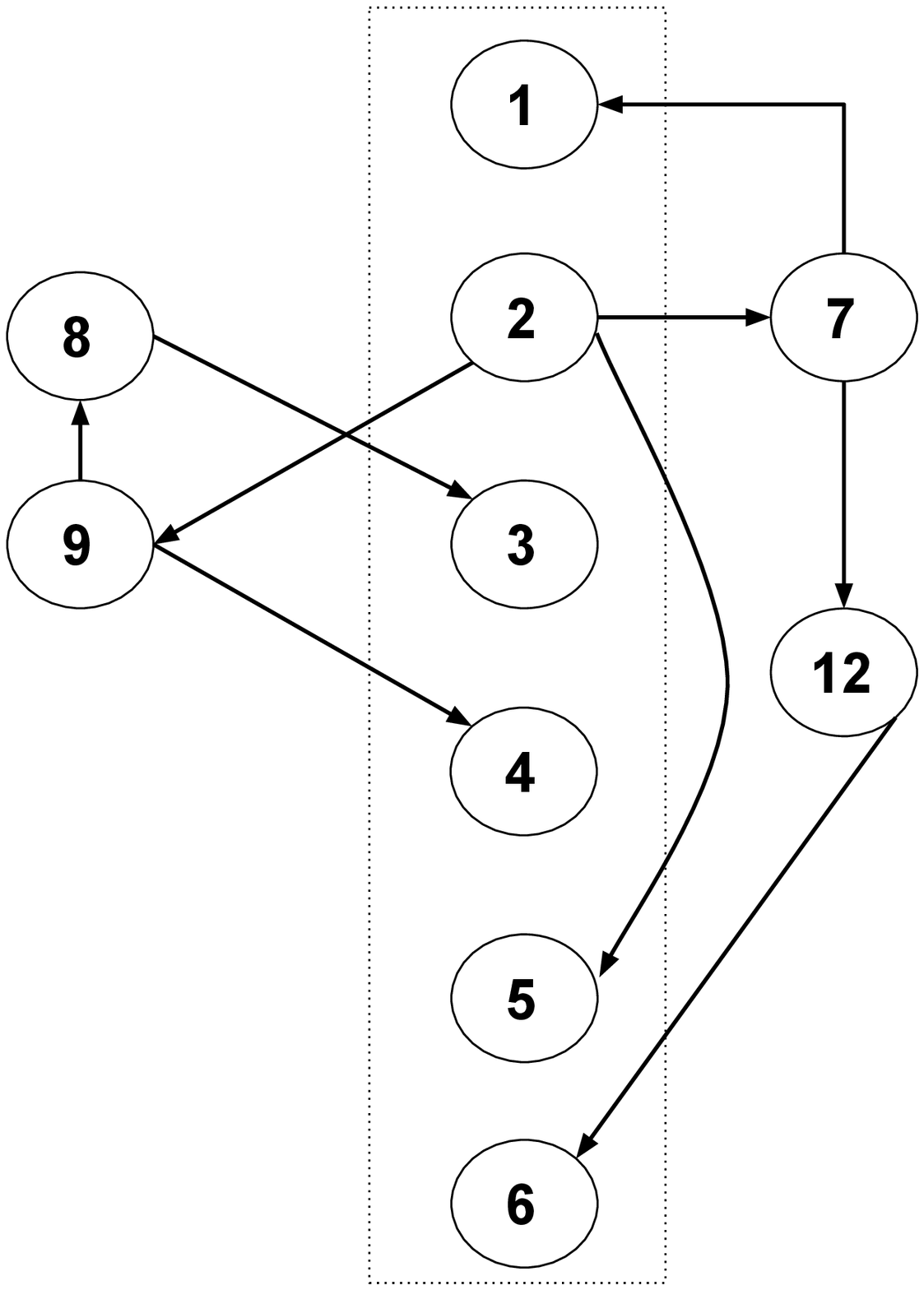}
			\caption{}
			\label{rt72}
		\end{subfigure}
		\begin{subfigure}{.31\textwidth}
			\hspace{-6mm}
			\includegraphics[width=15pc]{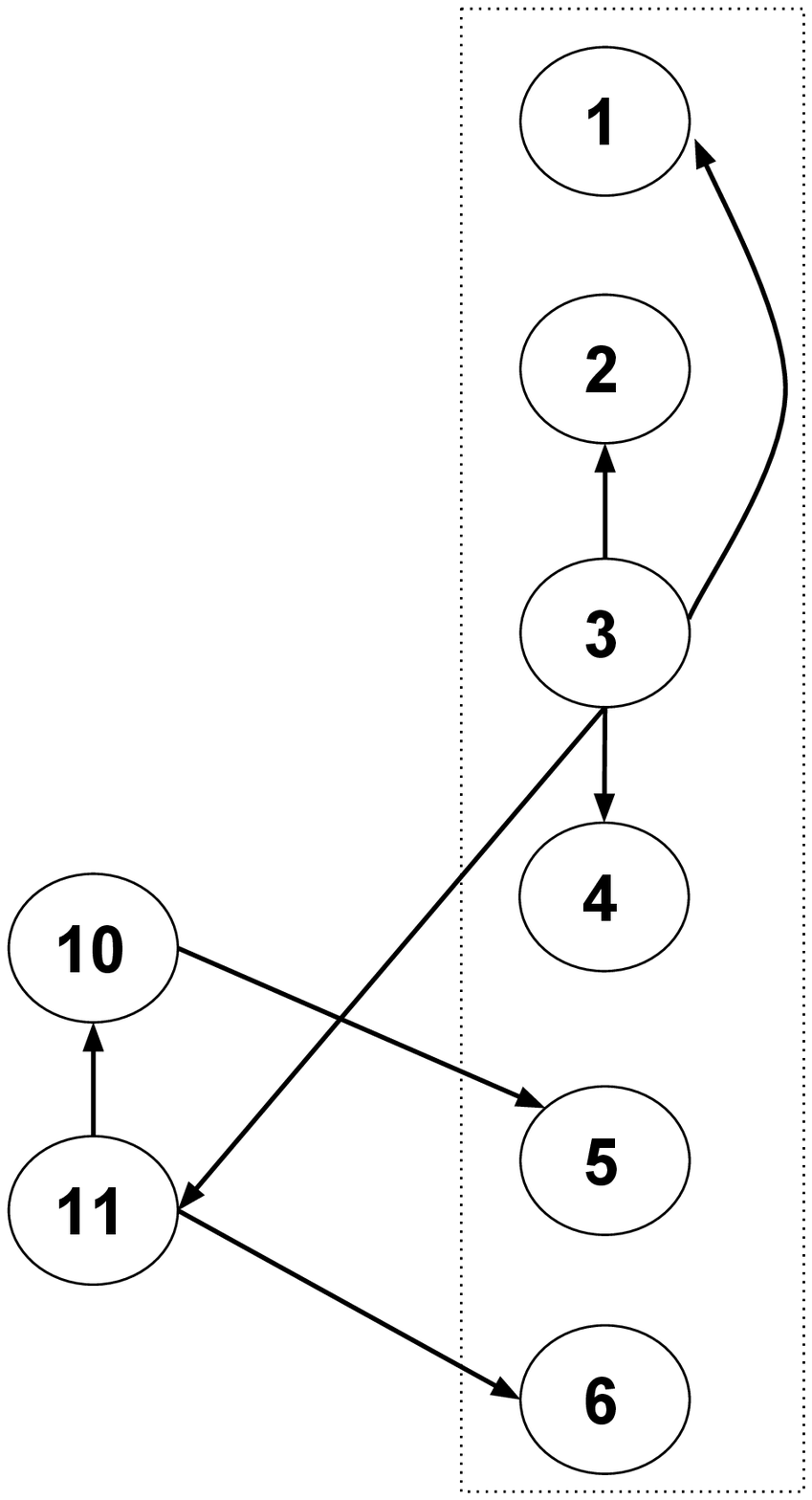}
			\caption{}
			\label{rt73}
		\end{subfigure}
		\centering
		\begin{subfigure}{.31\textwidth}
			\hspace{-6mm}
			\includegraphics[width=15pc]{RT_1_4}
			\caption{}
			\label{rt74}
		\end{subfigure}%
		\begin{subfigure}{.37\textwidth}
			\hspace{-2mm}
			\includegraphics[width=15pc]{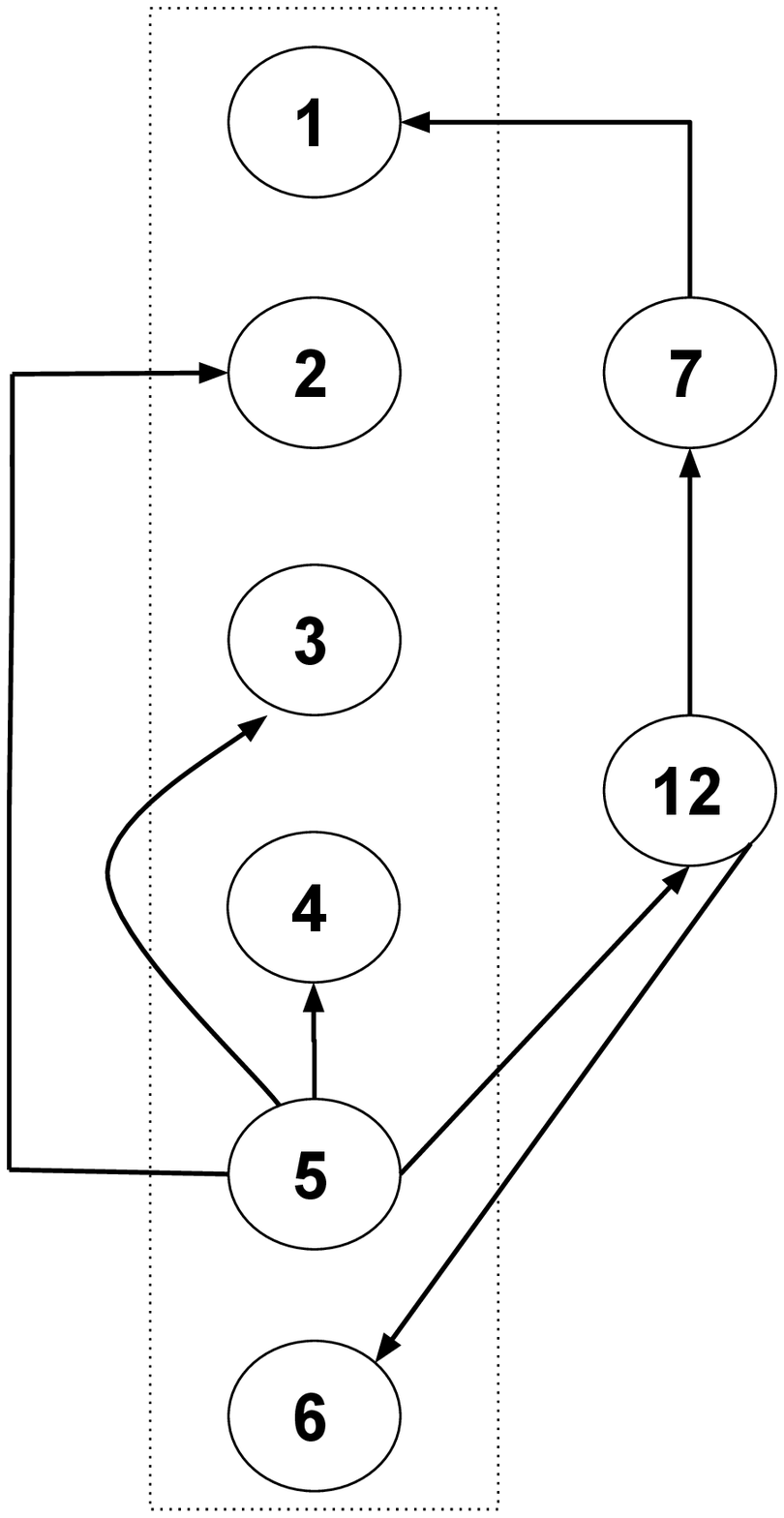}
			\caption{}
			\label{rt75}
		\end{subfigure}
		\begin{subfigure}{.31\textwidth}
			\hspace{-6mm}
			\includegraphics[width=15pc]{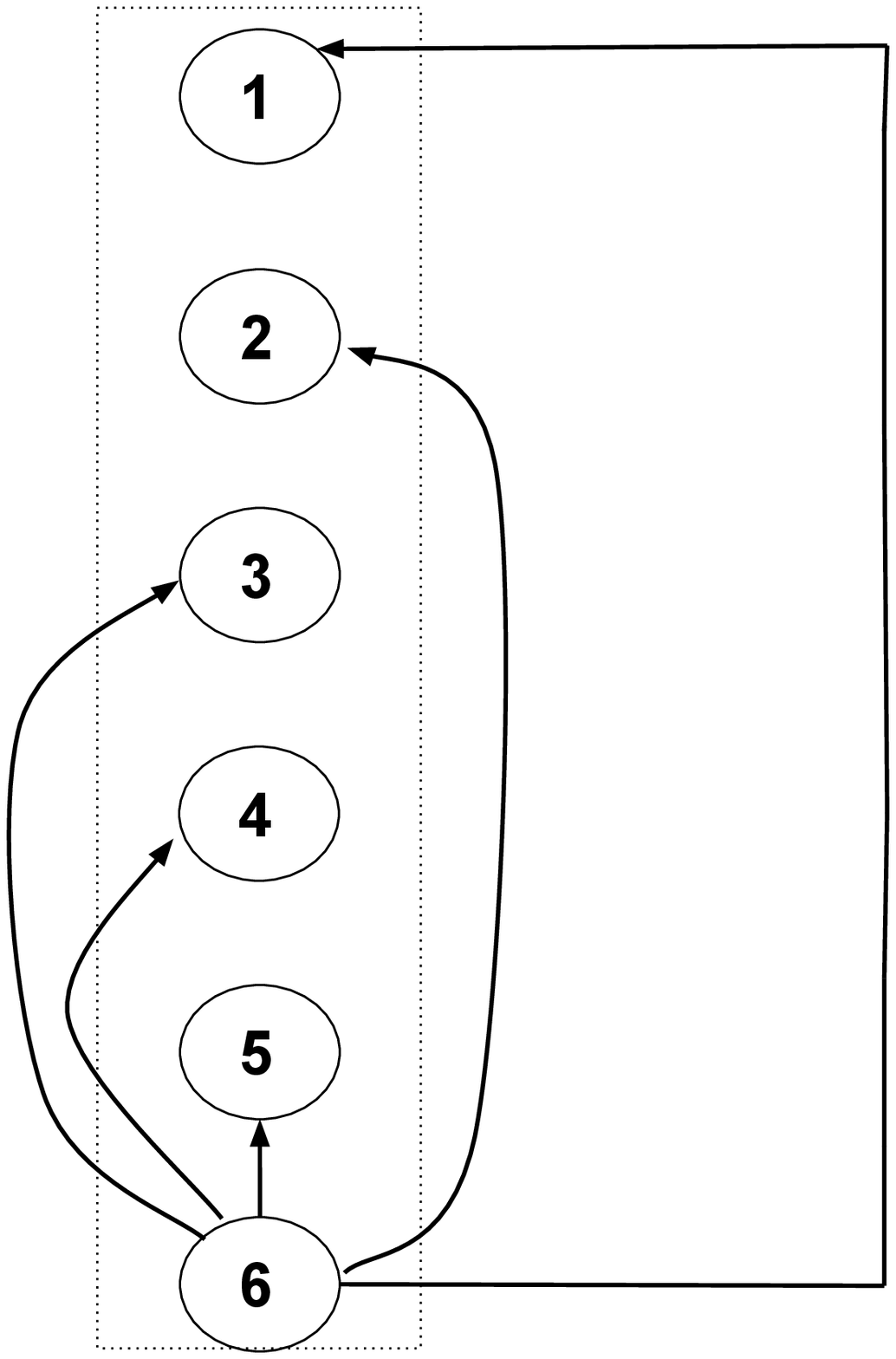}
			\caption{}
			\label{rt76}
		\end{subfigure}
		\caption{Figures showing rooted trees of inner vertices $ 1,2,3,4,5,6 $ of $ \mathcal{G}_6 $, respectively.}
	\end{figure*}
		\begin{table}
	\centering	
		\begin{tabular}{|c|c|c|c|}
			\hline
		 $ T_i $ & $ V_{NI}(i) $ & $ j \in V(T_i)\backslash\lbrace V_I \cup N^+_{T_i}(i) \rbrace $ & $ a_{i,j} $ \\
			\hline\hline
			$ T_1 $ & $ \lbrace8,9,10,11\rbrace $ & $ \lbrace9,11\rbrace $ & $ 1 $, $ 1 $ \\ 	
			\hline
			$ T_2 $ & $ \lbrace7,8,9,12\rbrace $ & $ \lbrace8,12\rbrace $ & $ 1 $, $ 1 $ \\
			\hline
			$ T_3 $ & $ \lbrace10,11\rbrace $ & $ \lbrace10\rbrace $ & $ 1 $ \\ 	 
			\hline
			$ T_4 $ & $ \lbrace7,12\rbrace $ & $ \lbrace7\rbrace $ & $ 1 $ \\ 		
			\hline
			$ T_5 $ & $ \lbrace7,12\rbrace $ & $ \lbrace7\rbrace $ & $ 1 $ \\
			\hline
			$ T_6 $ & $\phi$ & $ \phi $ & $ - $ \\
			\hline
		\end{tabular}\\
		\caption{Table that verifies \textit{c}$ 1 $ for $ \mathcal{G}_6 $.}
		\label{table11}
	\end{table}
	\begin{table}[h!]
\centering
		\begin{tabular}{|c|c|c|c|}
			\hline
			$ T_i $ & $ V_{NI}(i) $ & $ j \in V(\mathcal{G}_6)\backslash V(T_i) $ & $ b_{i,j} $ \\
			\hline\hline
			$ T_1 $ & $ \lbrace8,9,10,11\rbrace $ & $ \lbrace7,12\rbrace $ & $0$, $ 0 $\\ 	
			\hline
			$ T_2 $ & $ \lbrace7,8,9,12\rbrace $ & $ \lbrace10,11\rbrace $ & $ 0 $, $ 0 $ \\ 	
			\hline
			$ T_3$ & $ \lbrace10,11\rbrace $ & $ \lbrace7,8,9,12\rbrace $ & $ 0 $, $ 0 $, $ 0 $, $ 0 $\\ 	
			\hline
			$ T_4 $ & $ \lbrace7,12\rbrace $ & $ \lbrace8,9,10,11\rbrace $ & $ 0 $, $ 0 $, $ 0 $, $ 0 $\\ 	
			\hline
			$ T_5 $ & $ \lbrace7,12\rbrace $ & $ \lbrace8,9,10,11\rbrace $ & $ 0 $, $ 0 $, $ 0 $, $ 0 $\\ 	
			\hline
			$ T_6 $ & $ \phi $ & $ \lbrace7,8,9,10,11,12\rbrace $ & $ - $\\ 
			\hline
		\end{tabular}\\
		\caption{Table that verifies \textit{c}$ 2 $ for $ \mathcal{G}_6 $.}
		\label{table12}
	\end{table}
	\begin{v1} From Table \ref{table11} and Table \ref{table12}, it is observed that \textit{c}$ 1 $ and \textit{c}$ 2 $ are satisfied by $ \mathcal{G}_6 $. As a result, \textit{Algorithm} $ 1 $ can be used to decode an index code obtained by using \textit{Construction} $ 1$ on the IC structure $ \mathcal{G}_6 $.\end{v1}
	The index code obtained is 
$W_I = x_1 \oplus x_2 \oplus x_3 \oplus x_4 \oplus x_5 \oplus x_6; ~~W_7 = x_7 \oplus x_1 \oplus x_{12}; ~~ W_8 = x_8 \oplus x_3 \oplus x_9; ~~ W_9 = x_9 \oplus x_4 \oplus x_8; ~~ W_{10} = x_{10} \oplus x_5 \oplus x_{11}; W_{11} = x_{11} \oplus x_6 \oplus x_{10}; W_{12} = x_{12} \oplus x_6 \oplus x_7.$
Messages $x_7$, $x_8$, $x_9$, $x_{10}$, $ x_{11} $ and $ x_{12} $ are decoded directly using $W_7$, $W_8$, $W_9$, $W_{10}$, $ W_{11} $ and $ W_{12} $ respectively. The computation of $ Z_i $, for $ i=1,2,\dots,6 $ using \textit{Algorithm} $ 1 $ is shown in Table \ref{table13} and the decoding of messages $ x_1 $, $ x_2 $, $ x_3 $, $ x_4 $, $ x_5 $ and $ x_6 $ is shown in Table \ref{table14}.
	\begin{table}[h!]
		\hspace{1cm}
		\begin{tabular}{|c|c|}
			\hline
			Message $ x_i $ &  Computation of $ Z_i$\\
			\hline\hline
			$x_1$ & $ W_I \oplus W_8 \oplus W_9 \oplus W_{10}\oplus W_{11}$\\
			\hline
			$x_2$ & $ W_I \oplus W_7 \oplus W_8 \oplus W_9 \oplus W_{12} $\\
			\hline
			$ x_3 $ & $ W_I \oplus W_{10} \oplus W_{11} $\\
			\hline
			$ x_4 $ & $W_I \oplus W_7 \oplus W_{12}  $\\
			\hline
			$ x_5 $ & $W_I \oplus W_7 \oplus W_{12}  $\\
			\hline
			$ x_6 $	& $ W_I $\\
			\hline	
		\end{tabular}\\
		\caption{Table that shows the working of algorithm $ 1 $ on index code obtained from construction $ 1 $ on $ \mathcal{G}_6 $.}
		\label{table13}
	\end{table}
	\begin{table}[h!]
		\begin{tabular}{|c|c|c|}
			\hline
			Message $ x_i $ &  $ Z_i$ & $ N^+_{\mathcal{G}_6}(i) $\\
			\hline\hline
			$x_1$ & $x_1\oplus x_2 \oplus x_8 \oplus x_{10} $ & $ x_2 $, $x_8 $, $ x_{10} $\\
			\hline
			$x_2$ & $ x_2 \oplus x_5 \oplus x_7 \oplus x_9$ & $ x_5 $, $ x_7 $, $ x_9 $\\
			\hline
			$ x_3 $ & $ x_3 \oplus x_1 \oplus x_2 \oplus x_4\oplus x_{11} $ &$ x_1 $, $ x_2 $, $ x_4 $, $ x_{11} $\\
			\hline
			$ x_4 $ & $ x_4 \oplus x_2 \oplus x_3 \oplus x_5\oplus x_{12} $ &$ x_2 $, $ x_3 $, $ x_5 $, $ x_{12} $\\
			\hline
			$ x_5 $ & $ x_5 \oplus x_2 \oplus x_3 \oplus x_4\oplus x_{12} $ &$ x_2 $, $ x_3 $, $ x_4 $, $ x_{12} $\\
			\hline
			$ x_6 $	& $ x_1 \oplus x_2 \oplus x_3 \oplus x_4 \oplus x_5 \oplus x_6$ & $ x_1 $, $ x_2 $, $ x_3 $, $ x_4 $, $ x_5 $\\
			\hline
		\end{tabular}\\
		\caption{Table showing the decoding of messages using algorithm $ 1 $ on index code obtained from construction $ 1 $ on $ \mathcal{G}_6$.}
		\label{table14}
		
	\end{table}Thus, \textit{Algorithm} $ 1 $ is used to decode the index code obtained by using \textit{Construction}$ 1 $ on $ \mathcal{G}_6 $.
	
\end{ex}
\begin{ex}
\label{exam7}
	Consider $\mathcal{G}_7$, a side-information graph which is a $5$-IC structure, shown in Fig. \ref{f8}.
	\begin{figure}[h!]

		\includegraphics[height=\columnwidth,width=\columnwidth,angle=0]{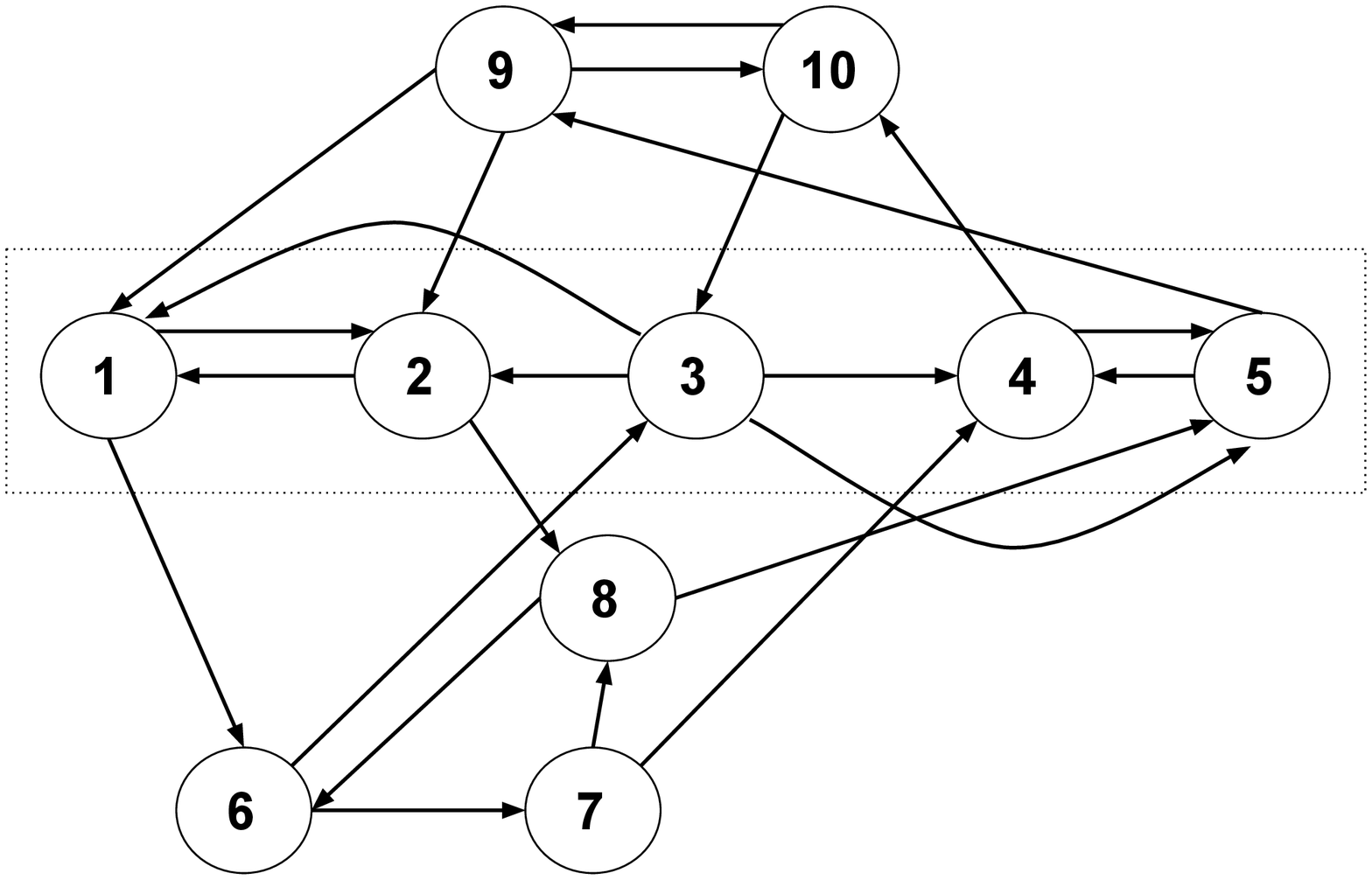}
		\caption{$5$-IC structure $\mathcal{G}_7$ with $V_I=\lbrace1,2,3,4,5\rbrace$.}
				\label{f8}
	\end{figure}$\mathcal{G}_7$ is a $ 5$-IC structure with inner vertex set $V_I=\lbrace1,2,3,4,5\rbrace$ since 
	\begin{enumerate}
		\item there are no cycles with only one vertex from the set$\lbrace1,2,3,4,5\rbrace $ in $\mathcal{G}_7$ (i.e., no I-cycles),
		\item using the rooted trees for each vertex in the set, $\lbrace1,2,3,4,5\rbrace$, which are given in Fig. \ref{rt81}, \ref{rt82}, \ref{rt83}, \ref{rt84} and \ref{rt85} respectively, it is verified that there exists a unique path between any two different vertices in $ V_I $ in $ \mathcal{G}_7 $ and does not contain any other vertex in $ V_I $ (i.e., unique I-path between any pair of inner vertices),
		\item $\mathcal{G}_7$ is the union of all the $5$ rooted trees.
	\end{enumerate}
There are two cycles consisting of only the non-inner vertices.
\begin{figure*}[!t]
		\centering
		\begin{subfigure}{.31\textwidth}
			\centering
			\includegraphics[width=15pc]{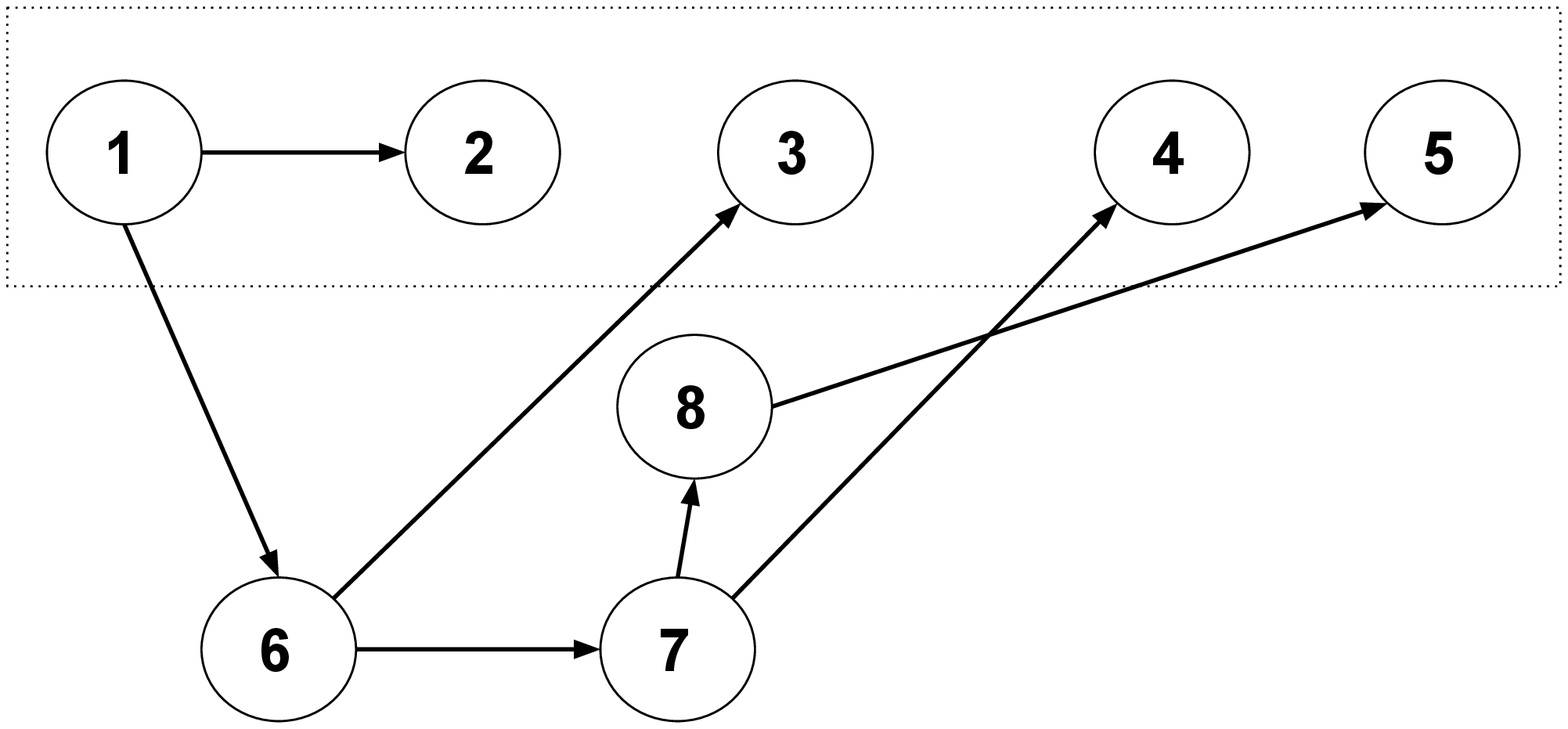}
			\caption{}
			\label{rt81}
		\end{subfigure}%
		\begin{subfigure}{.31\textwidth}
			\centering
			\includegraphics[width=15pc]{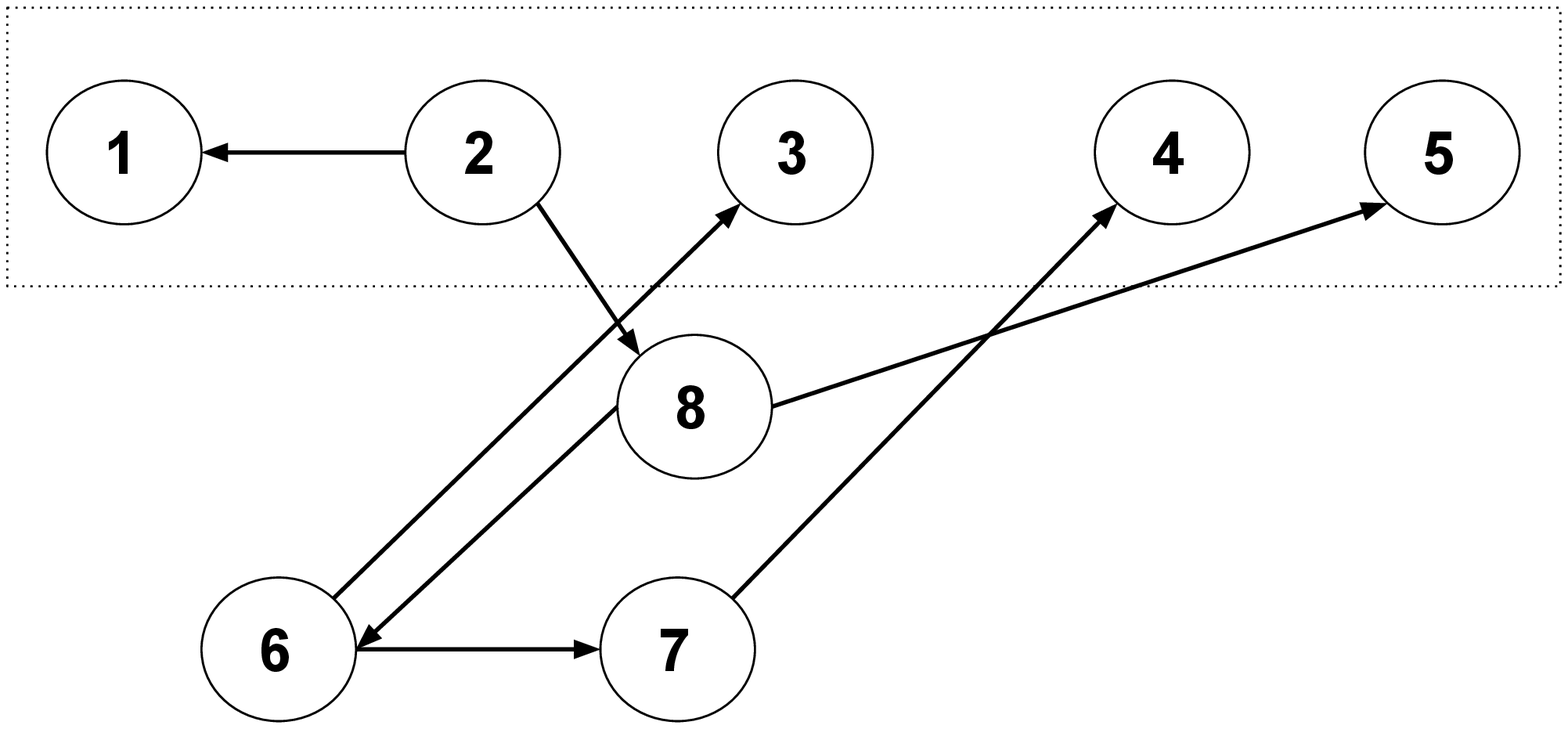}
			\caption{}
			\label{rt82}
		\end{subfigure}
		\begin{subfigure}{.31\textwidth}
			\centering
			\includegraphics[width=15pc]{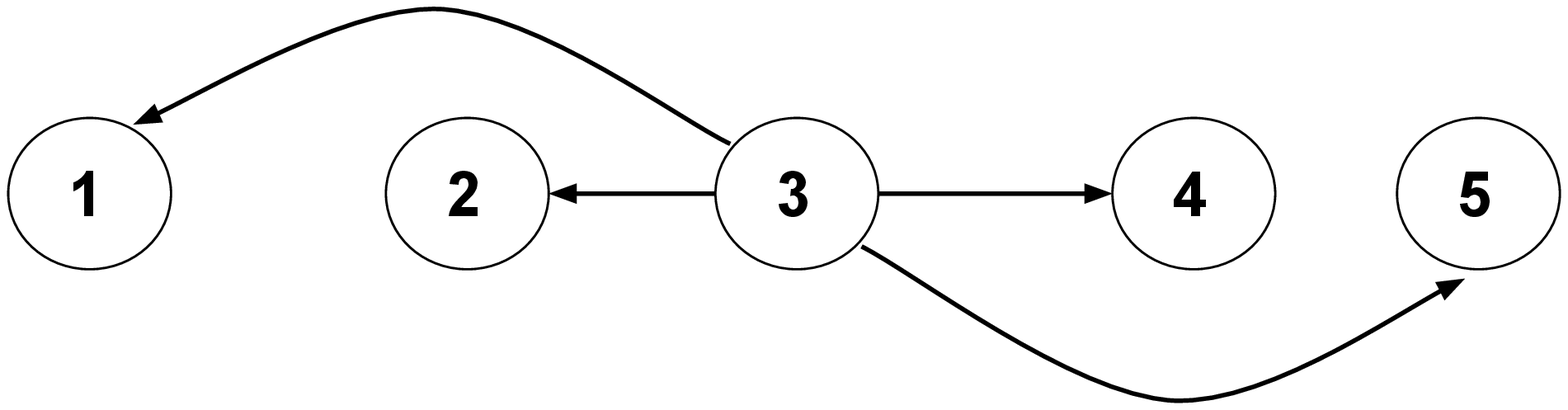}
			\caption{}
			\label{rt83}
		\end{subfigure}
		\centering
		\begin{subfigure}{.31\textwidth}
			\centering
			\includegraphics[width=15pc]{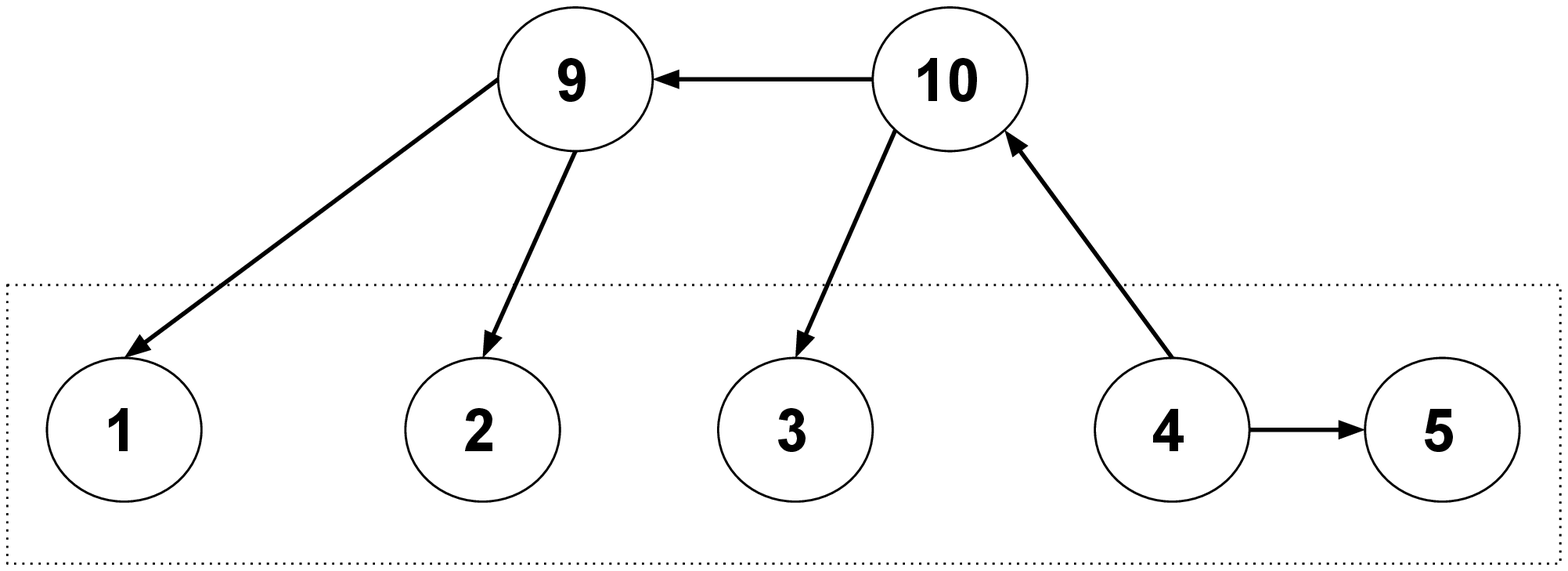}
			\caption{}
			\label{rt84}
		\end{subfigure}%
		\begin{subfigure}{.31\textwidth}
			\centering
			\includegraphics[width=15pc]{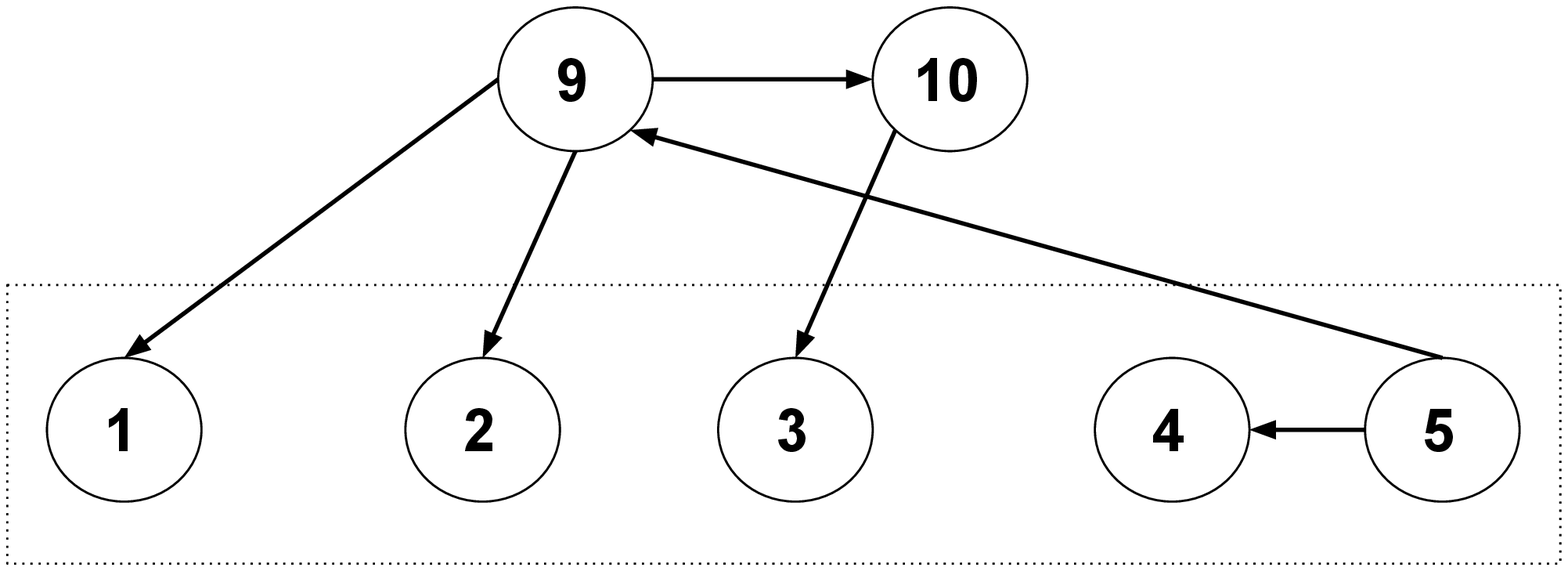}
			\caption{}
			\label{rt85}
		\end{subfigure}
		\caption{Figures showing rooted trees of inner vertices $ 1,2,3,4,5 $ of $ \mathcal{G}_7 $, respectively.}
	\end{figure*}
	\begin{table}
\centering
		\hspace{2mm}
		\begin{tabular}{|c|c|c|c|}
			\hline
			$ T_i $ & $ V_{NI}(i) $ & $ j \in V_{NI}(T_i)\backslash N^+_{T_i}(i) $ & $ a_{i,j} $ \\
			\hline\hline
			$ T_1 $ & $ \lbrace6,7,8\rbrace $ & $ \lbrace7,8\rbrace $ & $ 1 $, $ 1 $ \\ 	
			\hline
			$ T_2 $ & $ \lbrace6,7,8\rbrace $ & $ \lbrace6,7\rbrace $ & $ 1 $, $ 1 $ \\
			\hline
			$ T_3 $ & $ \lbrace \phi \rbrace $ & $ \lbrace\phi\rbrace $ & $ - $ \\ 	 
			\hline
			$ T_4 $ & $ \lbrace9,10\rbrace $ & $ \lbrace9\rbrace $ & $ 1 $ \\ 		
			\hline
			$ T_5 $ & $ \lbrace9,10\rbrace $ & $ \lbrace10\rbrace $ & $ 1 $ \\
			\hline
		\end{tabular}\\
		\caption{Table that verifies \textit{c}$ 1 $ for $ \mathcal{G}_7 $.}
		\label{table15}
	\end{table}
\begin{table}[h!]
\centering
		\begin{tabular}{|c|c|c|c|}
			\hline
			$ T_i $ & $ V_{NI}(i) $ & $ j \in V(\mathcal{G}_7)\backslash V(T_i) $ & $ b_{i,j} $ \\
			\hline\hline
			$ T_1 $ & $ \lbrace6,7,8\rbrace $ & $ \lbrace9,10\rbrace $ & $0$, $ 0 $\\ 	
			\hline
			$ T_2 $ & $ \lbrace6,7,8\rbrace $ & $ \lbrace9,10\rbrace $ & $ 0 $, $ 0 $ \\ 	
			\hline
			$ T_3$ & $ \lbrace\phi\rbrace $ & $ \lbrace6,7,8,9,10\rbrace $ & $ - $\\ 	
			\hline
			$ T_4 $ & $ \lbrace9,10\rbrace $ & $ \lbrace6,7,8\rbrace $ & $ 0 $, $ 0 $, $ 0 $\\ 	
			\hline
			$ T_5 $ & $ \lbrace9,10\rbrace $ & $ \lbrace6,7,8\rbrace $ & $ 0 $, $ 0 $, $ 0 $\\ 	
			\hline
		\end{tabular}\\
		\caption{Table that verifies \textit{c}$ 2 $ for $ \mathcal{G}_7 $.}
		\label{table16}
	\end{table}
	\begin{v1} From Table \ref{table15} and Table \ref{table16}, it is observed that \textit{c}$ 1 $ and \textit{c}$ 2 $ are satisfied by $ \mathcal{G}_7 $. As a result, \textit{Algorithm} $ 1 $ can be used to decode an index code obtained by using \textit{Construction} $ 1$ on the IC structure $ \mathcal{G}_7 $.\end{v1}

	The index code obtained is 
$W_I = x_1 \oplus x_2 \oplus x_3 \oplus x_4 \oplus x_5; ~~W_6 = x_6 \oplus x_3 \oplus x_7; ~~W_7 = x_7 \oplus x_8 \oplus x_4; ~~ W_8 = x_8 \oplus x_5 \oplus x_6; ~~ W_9 = x_9 \oplus x_1 \oplus x_2 \oplus x_{10}; ~~		W_{10} = x_{10} \oplus x_3 \oplus x_9.$
Messages $ x_6 $, $x_7$, $x_8$, $x_9$ and $x_{10}$ are decoded directly using $ W_6 $, $W_7$, $W_8$, $W_9$ and $W_{10}$ respectively. The computation of $ Z_i $, for $ i=1,2,\dots,5 $ using \textit{Algorithm} $ 1 $ is shown in Table \ref{table17} and the decoding of messages $ x_1 $, $ x_2 $, $ x_3 $, $ x_4 $ and $ x_5 $ is shown in Table \ref{table18}.
\begin{table}[h!]
\centering
		\hspace{1.3cm}
		\begin{tabular}{|c|c|}
			\hline
			Message $ x_i $ &  Computation of $ Z_i$\\
			\hline\hline
			$x_1$ & $ W_I \oplus W_6 \oplus W_7 \oplus W_8$\\
			\hline
			$x_2$ & $ W_I \oplus W_6 \oplus W_7 \oplus W_8$\\
			\hline
			$ x_3 $ & $ W_I$ \\
			\hline
			$ x_4 $ & $W_I \oplus W_9 \oplus W_{10}  $\\
			\hline
			$ x_5 $ & $W_I \oplus W_9 \oplus W_{10}  $\\
			\hline
		\end{tabular}\\
		\caption{Table that shows the working of algorithm $ 1 $ on index code obtained from construction $ 1 $ on $ \mathcal{G}_7 $.}
		\label{table17}
	\end{table}
	\begin{table}[h!]
\centering
		\begin{tabular}{|c|c|c|}
			\hline
			Message $ x_i $ &  $ Z_i$ & $ N^+_{\mathcal{G}_7}(i) $\\
			\hline\hline
			$x_1$ & $x_1\oplus x_2  $ & $ x_2 $, $x_6 $\\
			\hline
			$x_2$ & $ x_2 \oplus x_1 $ & $ x_1 $, $ x_8 $\\
			\hline
			$ x_3 $ & $ x_3 \oplus x_1 \oplus x_2 \oplus x_4\oplus x_5 $ &$ x_1 $, $ x_2 $, $ x_4 $, $ x_5 $\\
			\hline
			$ x_4 $ & $ x_4 \oplus x_5 $ &$ x_5 $, $ x_{10} $\\
			\hline
			$ x_5 $ & $ x_5 \oplus x_4  $ &$ x_4 $, $ x_{10} $\\
			\hline
		\end{tabular}\\
		\caption{Table that shows the decoding of messages using algorithm $ 1 $ on index code obtained from construction $ 1 $ on $ \mathcal{G}_7$.}
		\label{table18}
		
	\end{table}Thus, \textit{Algorithm} $ 1 $ is used to decode the index code obtained by using \textit{Construction}$ 1 $ on $ \mathcal{G}_7$.  
\end{ex}
In the following last example it is shown that for some IC structures the code constructed using \textit{Construction 1} is not decodable using any algorithm using only linear combinations of the index code symbols.

\begin{ex}
	\label{exam8}
	Consider $\mathcal{G}_8$, a side-information graph which is a $5$-IC structure, shown in Fig. \ref{f6}.
	\begin{figure}[h!]
		
		\includegraphics[height=\columnwidth,width=\columnwidth,angle=0]{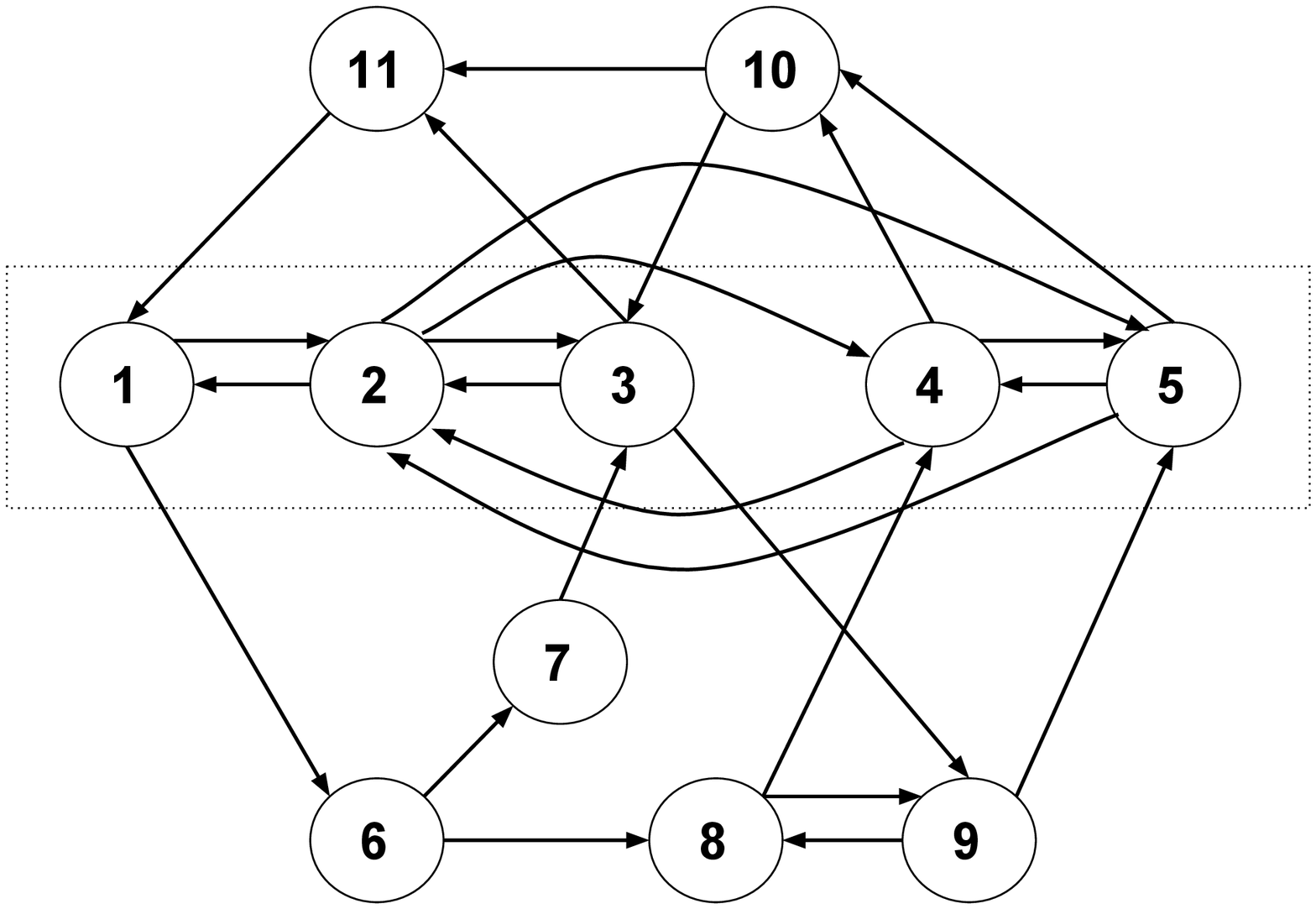}
		\caption{$5$-IC structure $\mathcal{G}_8$ with $V_I=\lbrace1,2,3,4,5\rbrace$.}
		\label{f6}
	\end{figure}$
\mathcal{G}_8$ is a $ 6 $-IC structure with inner vertex set $ V_I=\lbrace1,2,3,4,5\rbrace$ since 
	\begin{enumerate}
		\item there are no cycles with only one vertex from the set $\lbrace1,2,3,4,5\rbrace $ in $\mathcal{G}_8$ (i.e., no I-cycles),
		\item using the rooted trees for each vertex in the set, $\lbrace1,2,3,4,5\rbrace$, which are given in Fig. \ref{rt61}, \ref{rt62}, \ref{rt63}, \ref{rt64} and \ref{rt65} respectively, it is verified that there exists a unique path between any two different vertices in $ V_I $ in $ \mathcal{G}_8 $ and does not contain any other vertex in $ V_I $ (i.e., unique I-path between any pair of inner vertices),
		\item $\mathcal{G}_8$ is the union of all the $5$ rooted trees.
	\end{enumerate}
	\begin{figure*}[!t]
		\centering
		\begin{subfigure}{.31\textwidth}
			\hspace{-6mm}
			\includegraphics[width=15pc]{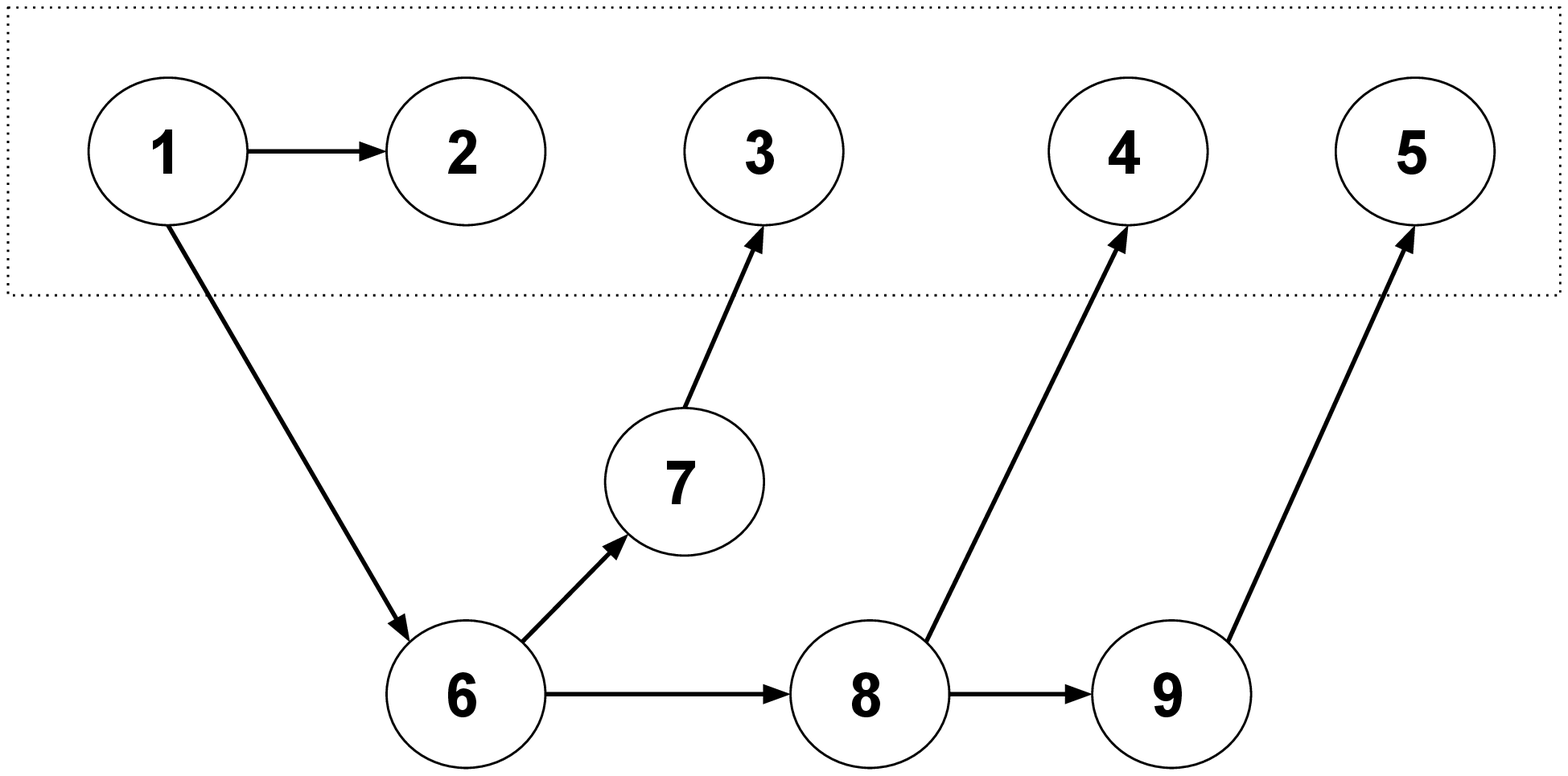}
			\caption{}
			\label{rt61}
		\end{subfigure}%
		\begin{subfigure}{.31\textwidth}
			\hspace{-6mm}
			\includegraphics[width=15pc]{RT_2_2}
			\caption{}
			\label{rt62}
		\end{subfigure}
		\begin{subfigure}{.31\textwidth}
			\hspace{-6mm}
			\includegraphics[width=15pc]{RT_2_3}
			\caption{}
			\label{rt63}
		\end{subfigure}
		\centering
		\begin{subfigure}{.31\textwidth}
			\hspace{-6mm}
			\includegraphics[width=15pc]{RT_2_4}
			\caption{}
			\label{rt64}
		\end{subfigure}%
		\begin{subfigure}{.31\textwidth}
			\hspace{-6mm}
			\includegraphics[width=15pc]{RT_2_5}
			\caption{}
			\label{rt65}
		\end{subfigure}
		\caption{Figures showing rooted trees of inner vertices $ 1,2,3,4,5$ of $ \mathcal{G}_8 $, respectively.}
	\end{figure*}
	\begin{table}[h!] \centering
		\begin{tabular}{|c|c|c|c|}
			\hline
			$ T_i $ & $ V_{NI}(i) $ & $ j \in V_{NI}(T_i)\backslash N^+_{T_i}(i) $ & $ a_{i,j} $ \\
			\hline\hline
			$ T_1 $ & $ \lbrace6,7,8,9\rbrace $ & $ \lbrace7,8,9\rbrace $ & $ 1 $, $ 2 $, $ 1 $ \\	
			\hline
		\end{tabular}\\
		\caption{Table that verifies \textit{c}$ 1 $ for $ \mathcal{G}_8 $.}
		\label{table17a}
	\end{table}
	\begin{table}[h!]
\centering
	\begin{tabular}{|c|c|c|c|}
			\hline
			$ T_i $ & $ V_{NI}(i) $ & $ j \in V(\mathcal{G}_8)\backslash V(T_i) $ & $ b_{i,j} $ \\
			\hline\hline
			$ T_1 $ & $ \lbrace6,7,8,9\rbrace $ & $ \lbrace10,11\rbrace $ & $0$, $ 0 $\\ 	
			\hline
			$ T_2 $ & $ \lbrace10,11\rbrace $ & $ \lbrace6,7,8,9\rbrace $ & $ 0 $, $ 0 $, $0$, $ 0 $\\ 	
			\hline
			$ T_3$ & $ \lbrace8,9,11\rbrace $ & $ \lbrace6,7,10\rbrace $ & $ 0 $, $ 0 $, $ 0 $\\ 	
			\hline
			$ T_4 $ & $ \lbrace10,11\rbrace $ & $ \lbrace6,7,8,9\rbrace $ & $ 0 $, $ 0 $, $ 0 $, $ 0 $\\ 	
			\hline
			$ T_5 $ & $ \lbrace10,11\rbrace $ & $ \lbrace6,7,8,9\rbrace $ & $ 0 $, $0$, $ 0 $, $ 0 $\\ 	
			\hline
		\end{tabular}\\
		\caption{Table that verifies \textit{c}$ 2 $ for $ \mathcal{G}_8 $.}
		\label{table8}
	\end{table}
	\begin{v1} From Table \ref{table17a} and Table \ref{table8}, it is observed that \textit{c}$ 1 $ is not satisfied ($ a_{1,8}=2 $, an even number) and \textit{c}$ 2 $ is satisfied by $ \mathcal{G}_8 $. As a result, \textit{Algorithm} $ 1 $ fails to decode an index code obtained by using \textit{Construction} $ 1$ on the IC structure $ \mathcal{G}_8 $. It is verified as follows.
	\end{v1}

	The index code obtained is 
$W_I = x_1 \oplus x_2 \oplus x_3 \oplus x_4 \oplus x_5: ~~ W_6 = x_6 \oplus x_7 \oplus x_8; ~~W_7 = x_7 \oplus x_3; W_8 = x_8 \oplus x_4 \oplus x_9; ~~  W_9 = x_9 \oplus x_5 \oplus x_8; ~~ W_{10} = x_{10} \oplus x_3 \oplus x_{11}; ~~ W_{11} = x_{11} \oplus x_1.$
	Messages $ x_6 $, $x_7$, $x_8$, $x_9$, $x_{10}$ and $ x_{11} $ are decoded directly using $ W_6 $, $W_7$, $W_8$, $W_9$, $W_{10}$ and $ W_{11} $ respectively. The computation of $ Z_1 $, for $ i=1,2,\dots,5 $ using \textit{Algorithm} $ 1 $ is shown in Table \ref{table9} and Table \ref{table10} illustrates the inability of the \textit{Algorithm} $ 1 $ to decode $ x_1 $.
\begin{table}[h!]
\centering
		\hspace{1cm}
		\begin{tabular}{|c|c|}
			\hline
			Message $ x_i $ &  Computation of $ Z_i$\\
			\hline\hline
			$x_1$ & $ W_I \oplus W_6 \oplus W_7 \oplus W_8 \oplus W_9$\\
			\hline
			$x_2$ & $ W_I $\\
			\hline
			$ x_3 $ & $ W_I \oplus W_8 \oplus W_9 \oplus W_{11} $\\
			\hline
			$ x_4 $ &$ W_I \oplus W_{10} \oplus W_{11} $\\
			\hline
			$ x_5 $ & $ W_I \oplus W_{10} \oplus W_{11} $\\
			\hline
		\end{tabular}\\
		\caption{Table that shows the working of algorithm $ 1 $ on index code obtained from construction $ 1 $ on $ \mathcal{G}_8 $.}
		\label{table9}
	\end{table}
\begin{table}[h!]
\centering
		\hspace{5mm}
		\begin{tabular}{|c|c|c|}
			\hline
			Message $ x_i $ &  $ Z_i$ & $ N^+_{\mathcal{G}_8}(i) $\\
			\hline\hline
			$x_1$ & $x_1\oplus x_2 \oplus x_6\oplus x_8$ & $ x_2 $, $x_6 $\\
			\hline
		\end{tabular}\\
		\caption{Table that shows failure of algorithm $ 1 $ on index code obtained from construction $ 1 $ on $ \mathcal{G}_8$.}
		\label{table10}

	\end{table}
Thus, \textit{Algorithm} $ 1 $ is fails to decode the index code obtained using \textit{Construction}$ 1 $ on $ \mathcal{G}_8 $ since user requesting message $ x_1 $ does not have $ x_8 $ in its side-information.

It turns out that $ x_1 $ cannot be decoded using any linear combination of the index code symbols. All the possible linear combinations of the index code symbols are listed in the Table \ref{tablef} along with the reason for $ x_1 $ not being decodable using that linear combination.
\begin{table*}
\centering
\begin{tabular}{|c|c|c|c|}
\hline
S.no & Linear combination & Obtained sum & Reason\\
\hline\hline
$ 1 $ & $ 0 $ & $ 0 $ & $ - $ \\
\hline
$ 2 $ & $ W_I $ & $ x_1\oplus x_2 \oplus x_3 \oplus x_4 \oplus x_5 $ & $ x_3 $ not in side-information\\
\hline
$ 3 $ & $ W_6 $ & $ x_6\oplus x_7\oplus x_8  $ & $ x_1 $ is absent\\
\hline
$ 4 $ & $ W_I \oplus W_6 $ & $ x_1 \oplus x_2 \oplus x_3 \oplus x_4 \oplus x_5 \oplus x_6 \oplus x_7\oplus x_8  $ & $ x_3 $ not in side-information\\
\hline
$ 5 $ & $ W_7  $ & $x_3\oplus x_7  $ & $ x_1 $ is absent\\
\hline
$ 6 $ & $ W_I \oplus W_7 $ & $ x_1 \oplus x_2 \oplus x_4 \oplus x_5 \oplus x_7 $ & $ x_4 $ not in side-information\\
\hline
$ 7 $ & $ W_6 \oplus W_7  $ & $ x_3\oplus x_6 \oplus x_8 $ & $ x_1 $ is absent\\
\hline
$ 8$ & $ W_I\oplus W_6\oplus W_7 $ & $ x_1 \oplus x_2 \oplus x_4 \oplus x_5 \oplus x_6 \oplus x_8 $ &$ x_4 $ not in side-information\\
\hline
$ 9 $ & $ W_8 $ & $ x_4\oplus x_8 \oplus x_9 $ & $ x_1 $ is absent \\
\hline
$ 10 $ & $ W_I \oplus W_8 $ & $ x_1 \oplus x_2 \oplus x_3 \oplus x_5 \oplus x_8\oplus x_9 $ & $ x_3 $ not in side-information \\
\hline
$ 11 $ & $ W_6\oplus W_8 $ & $ x_4 \oplus x_6 \oplus x_7 \oplus x_9 $ & $ x_1 $ is absent\\
\hline
$ 12 $ & $ W_I \oplus W_6\oplus W_8 $ & $ x_1 \oplus x_2 \oplus x_3 \oplus x_5 \oplus x_6 \oplus x_7\oplus x_9 $ & $ x_3 $ not in side-information \\
\hline
$ 13 $ &  $ W_7 \oplus W_8 $ & $ x_3\oplus x_4 \oplus x_7\oplus x_8\oplus x_9  $ & $ x_1 $ is absent \\
\hline
$ 14 $ &  $ W_I \oplus W_7 \oplus W_8 $ & $ x_1\oplus x_2 \oplus x_5\oplus x_7\oplus x_8\oplus x_9  $ & $ x_5 $ is not in side-information\\
\hline
$ 15 $ &  $ W_6 \oplus W_7 \oplus W_8 $ & $ x_3\oplus x_4 \oplus x_6\oplus x_9 $ & $ x_1 $ is absent \\
\hline
$ 16 $ &  $ W_I \oplus W_6 \oplus W_7 \oplus W_8 $ & $ x_1 \oplus x_2 \oplus x_5 \oplus x_6 \oplus x_9 $ & $ x_5 $ not in side-information \\
\hline
$ 17 $ &  $ W_9 $ & $ x_5\oplus x_{8}\oplus x_{9} $ & $ x_1 $ is absent \\
\hline
$ 18 $ &  $ W_I \oplus W_9 $ & $ x_1 \oplus x_2 \oplus x_3 \oplus x_4 \oplus x_8\oplus x_{9} $ & $x_3$ not in side-information \\
\hline
$ 19 $ &  $ W_6\oplus W_9 $ & $x_5 \oplus x_6 \oplus x_7 \oplus x_{9} $ & $ x_1 $ absent \\
\hline 
$ 20 $ &  $ W_I \oplus W_6\oplus W_9 $ & $ x_1 \oplus x_2 \oplus x_3 \oplus x_4 \oplus x_6 \oplus x_7 \oplus x_9 $ & $ x_3 $ not in side-information\\
\hline
$ 21 $ &  $ W_7 \oplus W_9 $ & $ x_3 \oplus x_5 \oplus x_7 \oplus x_8 \oplus x_{9} $ & $ x_1 $ is absent\\
\hline
$ 22 $ &  $ W_I \oplus W_7 \oplus W_9 $ & $ x_1 \oplus x_2 \oplus x_4 \oplus x_7 \oplus x_8 \oplus x_9 $ & $ x_4 $ not in side-information\\
\hline
$ 23 $ &  $ W_6\oplus W_7\oplus W_9 $ & $  x_3 \oplus x_5 \oplus x_6 \oplus x_9  $ & $ x_1 $ is absent \\
\hline
$ 24 $ &  $ W_I \oplus W_6\oplus W_7\oplus W_9 $ & $ x_1 \oplus x_2 \oplus x_4 \oplus x_6 \oplus x_9 $ & $x_4 $ not in side-information\\
\hline
$ 25 $ &  $ W_8 \oplus W_9 $ & $x_4 \oplus x_5 $ & $ x_1 $ is absent \\
\hline
$26  $ &  $ W_I \oplus W_8 \oplus W_9 $ & $ x_1 \oplus x_2 \oplus x_3 $ & $ x_3 $ not in side-information \\
\hline
$ 27 $ &  $ W_6 \oplus W_8 \oplus W_9 $ & $x_4\oplus x_5 \oplus x_6 \oplus x_7\oplus x_8 $ & $ x_1 $ is absent \\
\hline
$ 28 $ &  $ W_I \oplus W_6 \oplus W_8 \oplus W_9  $ & $ x_1 \oplus x_2 \oplus x_3 \oplus x_6 \oplus x_7 \oplus x_8 $ &$ x_3 $ not in side-information \\
\hline
$ 29 $ &  $ W_7\oplus W_8\oplus W_9 $ & $ x_3\oplus x_4\oplus x_5\oplus x_7 $ & $ x_1 $ is absent\\
\hline
$ 30 $ &  $ W_I\oplus W_7\oplus W_8\oplus W_9  $ & $ x_1\oplus x_2\oplus x_7 $ & $ x_7 $ not in side-information \\
\hline
$ 31 $ &  $ W_6\oplus W_7\oplus W_8\oplus W_9 $ & $  x_3\oplus x_4 \oplus x_5 \oplus x_6\oplus x_8 $ & $ x_1 $ is absent\\
\hline
$ 32 $ &  $ W_I\oplus W_6\oplus W_7\oplus W_8\oplus W_9 $ & $ x_1 \oplus x_2 \oplus x_6 \oplus x_8 $ & $ x_8 $ not in side-information\\
\hline
$ 33 $ &  $ W_{10} $ & $ x_3\oplus x_{10}\oplus x_{11} $ & $ x_1 $ is absent\\
\hline
$ 34 $ &  $ W_I\oplus W_{10} $ & $ x_1 \oplus x_2 \oplus x_4 \oplus x_5\oplus x_{11} $ & $ x_4 $ not in side information\\
\hline
$ 35 $ &  $ W_6\oplus W_{10} $ & $ x_3\oplus x_6\oplus x_7\oplus x_{8}\oplus x_{10}\oplus x_{11} $ & $ x_1 $ is absent\\
\hline
$ 36 $ &  $ W_I\oplus W_6\oplus W_{10} $ & $ x_1 \oplus x_2 \oplus x_4 \oplus x_5 \oplus x_6\oplus x_7 \oplus x_8 \oplus x_{10} \oplus x_{11} $ & $ x_4 $ not in side-information\\
\hline
$ 37 $ &  $ W_7\oplus W_{10} $ & $ x_7\oplus x_{10}\oplus x_{11} $ & $ x_1 $ is absent\\
\hline
$ 38 $ &  $ W_I \oplus W_7\oplus W_{10} $ & $ x_1\oplus x_2 \oplus x_3 \oplus x_4 \oplus x_5 \oplus x_7 \oplus x_{10} \oplus x_{11} $ & $ x_3 $ not in side-information\\
\hline
$ 39 $ &  $ W_6\oplus W_7\oplus W_{10} $ & $ x_6 \oplus x_8 \oplus x_{10} \oplus x_{11} $ & $ x_1 $ is absent \\
\hline
$ 40 $ &  $ W_I\oplus W_6\oplus W_7\oplus W_{10} $ & $x_1\oplus x_2 \oplus x_3 \oplus x_4 \oplus x_5 \oplus x_6 \oplus x_8 \oplus x_{10} \oplus x_{11} $ & $ x_3 $ not in side-information\\
\hline
$ 41 $ &  $ W_8\oplus W_{10} $ & $ x_3\oplus x_4\oplus x_8\oplus x_9\oplus x_{10}\oplus x_{11} $ & $ x_1 $ is absent\\
\hline
$ 42 $ &  $ W_I\oplus W_8\oplus W_{10} $ & $ x_1 \oplus x_2 \oplus x_5 \oplus x_8\oplus x_9 \oplus x_{10} \oplus x_{11} $ & $ x_5 $ not in -side-information\\
\hline
$ 43 $ &  $ W_6\oplus W_8\oplus W_{10} $ & $ x_3 \oplus x_4 \oplus x_6 \oplus x_7 \oplus x_{9} \oplus x_{10} \oplus x_{11} $ & $ x_1 $ is absent\\
\hline
$ 44 $ &  $ W_I\oplus W_6\oplus W_8\oplus W_{10} $ & $x_1 \oplus x_2 \oplus x_5 \oplus x_6 \oplus x_7 \oplus x_9 \oplus x_{10} \oplus x_{11} $ & $ x_5 $ not in side-information\\
\hline
$ 45 $ &  $ W_7\oplus W_8\oplus W_{10} $ & $ x_4\oplus x_7\oplus x_8\oplus x_9\oplus x_{10}\oplus x_{11} $ & $ x_1 $ is absent \\
\hline
$ 46 $ & $ W_I\oplus W_7\oplus W_8\oplus W_{10} $ & $ x_1\oplus x_2\oplus x_3\oplus x_5\oplus x_7\oplus x_8\oplus x_9\oplus x_{10}\oplus x_{11} $ & $ x_3 $ not in side-information\\
\hline
$ 47 $ & $ W_6\oplus W_7\oplus W_8\oplus W_{10} $ & $ x_4 \oplus x_6 \oplus x_9 \oplus x_{10} \oplus x_{11} $ & $ x_1 $ is absent \\
\hline
$ 48 $ & $ W_I\oplus W_6\oplus W_7\oplus W_8\oplus W_{10}  $ & $ x_1\oplus x_2 \oplus x_3 \oplus x_5 \oplus x_6 \oplus x_9 \oplus x_{10} \oplus x_{11} $ & $ x_3 $ not in side-information\\
\hline
$ 49 $ & $ W_9\oplus W_{10} $ & $ x_3\oplus x_5\oplus x_8\oplus x_9\oplus x_{10}\oplus x_{11} $ & $ x_1 $ is absent\\
\hline
$ 50 $ & $ W_I\oplus W_9\oplus W_{10} $ & $ x_1\oplus x_2\oplus x_4\oplus x_8\oplus x_9\oplus x_{10}\oplus x_{11}$ & $ x_4 $ not in side-information \\
\hline
$ 51 $ & $ W_6\oplus W_9\oplus W_{10} $ & $ x_3\oplus x_5\oplus x_6\oplus x_7\oplus x_9\oplus x_{10}\oplus x_{11} $ & $ x_1 $ is absent\\
\hline
$ 52 $ & $ W_I\oplus W_6\oplus W_9\oplus W_{10} $ & $ x_1\oplus x_2\oplus x_4\oplus x_6\oplus x_9\oplus x_{10}\oplus x_{11} $ & $ x_4 $ not in side-information\\
\hline
$ 53 $ & $ W_7 \oplus W_9\oplus W_{10} $ & $ x_5\oplus x_7\oplus x_8\oplus x_9\oplus x_{10}\oplus x_{11} $ & $ x_1 $ is absent \\
\hline
$ 54 $ & $ W_I\oplus W_7 \oplus W_9\oplus W_{10} $ & $ x_1\oplus x_2\oplus x_3\oplus x_4\oplus x_7\oplus x_8\oplus x_9\oplus x_{10}\oplus x_{11} $ & $ x_3 $ not in side-information\\
\hline
$ 55 $ & $ W_6\oplus W_7 \oplus W_9\oplus W_{10} $ & $  x_5\oplus x_6\oplus x_9\oplus x_{10}\oplus x_{11} $ & $ x_1 $ is absent \\
\hline
$ 56 $ & $ W_I\oplus W_6\oplus W_7 \oplus W_9\oplus W_{10} $ & $x_1\oplus x_2\oplus x_3\oplus x_4\oplus x_6\oplus x_9\oplus x_{10}\oplus x_{11} $ & $ x_3 $ not in side-information \\
\hline
$ 57 $ & $ W_8 \oplus W_9\oplus W_{10}  $ & $ x_3\oplus x_4\oplus x_5\oplus x_{10}\oplus x_{11} $ & $ x_1 $ is absent\\
\hline
$ 58 $ & $ W_I \oplus W_8 \oplus W_9\oplus W_{10}  $ & $  x_1\oplus x_2\oplus x_{10}\oplus x_{11} $ & $ x_{10} $ not in side-information \\
\hline
$ 59  $ & $ W_6 \oplus W_8 \oplus W_9\oplus W_{10} $ & $ x_3\oplus x_4\oplus x_5\oplus x_6\oplus x_7\oplus x_8\oplus x_{10}\oplus x_{11} $ &$ x_1 $ is absent \\
\hline
$ 60 $ & $ W_I\oplus W_6 \oplus W_8 \oplus W_9\oplus W_{10}  $ & $ x_1\oplus x_2\oplus x_6\oplus x_7\oplus x_8\oplus x_{10}\oplus x_{11} $ &  $ x_7 $ not in side-information\\
\hline
$ 61 $ & $ W_7 \oplus W_8 \oplus W_9\oplus W_{10} $ & $ x_4\oplus x_5\oplus x_7\oplus x_{10}\oplus x_{11} $ & $ x_1 $ is absent\\
\hline
$ 62 $ & $ W_I\oplus W_7 \oplus W_8 \oplus W_9\oplus W_{10} $ & $ x_1\oplus x_2\oplus x_3\oplus x_7\oplus x_{10}\oplus x_{11} $ & $ x_3 $ not in side-information \\
\hline
$ 63 $ & $ W_6\oplus W_7 \oplus W_8 \oplus W_9\oplus W_{10} $ & $ x_4\oplus x_5\oplus x_6\oplus x_8\oplus x_{10}\oplus x_{11} $ & $ x_1 $ is absent\\
\hline
$ 64 $ & $ W_I\oplus W_6\oplus W_7 \oplus W_8 \oplus W_9\oplus W_{10} $ & $ x_1\oplus x_2\oplus x_3\oplus x_6\oplus x_8\oplus x_{10}\oplus x_{11} $ &  $ x_3 $ not in side-information\\
\hline
$ 65 $ & $ W_{11} $ & $ x_1\oplus x_{11} $ &  $ x_{11} $ not in side-information\\
\hline
$ 66 $ & $ W_I\oplus W_{11} $ & $ x_2\oplus x_3\oplus x_4\oplus x_5\oplus x_{11} $ & $ x_1 $ is absent\\
\hline
$ 67 $ & $  W_6\oplus W_{11} $ & $ x_1\oplus\ x_6\oplus x_7\oplus x_8\oplus x_{11} $ &  $ x_7 $ not in side-information\\
\hline
$ 68 $ & $ W_I\oplus W_6\oplus W_{11} $ & $ x_2\oplus x_3\oplus x_4\oplus x_5\oplus x_6\oplus x_7\oplus x_8\oplus x_{11} $ & $ x_1 $ is absent\\
\hline
$ 69 $ & $ W_7\oplus W_{11} $ & $ x_1\oplus x_3\oplus x_7\oplus x_{11} $ &  $ x_3 $ not in side-information\\
\hline

\end{tabular}
\end{table*}
\begin{table*}
	\centering
\begin{tabular}{|c|c|c|c|}
	\hline
	$ 70 $ & $W_I\oplus W_7\oplus W_{11} $ & $ x_2\oplus x_4\oplus x_5\oplus x_7\oplus x_{11} $ &$ x_1 $ is absent\\
	\hline
	$ 71 $ & $ W_6\oplus W_7\oplus W_{11} $ & $x_1\oplus x_3\oplus x_6\oplus x_8\oplus x_{11} $ &  $ x_3 $ not in side-information\\
	\hline
	$ 72 $ & $ W_I\oplus W_6\oplus W_7\oplus W_{11} $ & $ x_2\oplus x_4\oplus x_5\oplus x_6\oplus x_8\oplus x_{11} $ & $ x_1 $ is absent\\
	\hline
	$ 73 $ & $ W_8\oplus W_{11} $ & $ x_1\oplus x_4\oplus x_8\oplus x_9\oplus x_{11} $ &  $ x_4 $ not in side-information\\
	\hline
	$ 74 $ & $ W_I\oplus W_8\oplus W_{11} $ & $ x_2\oplus x_3\oplus x_5\oplus x_8\oplus x_9\oplus x_{11} $ & $ x_1 $ is absent \\
	\hline
	
	$ 75 $ & $ W_6\oplus W_8\oplus W_{11} $ & $  x_{1} \oplus x_4 \oplus x_{6}\oplus x_7\oplus x_9\oplus x_{11} $ & $ x_4 $ not in side-information\\
	\hline
	$ 76 $ & $ W_I\oplus W_6\oplus W_8\oplus W_{11} $ & $x_2\oplus x_3\oplus x_5\oplus x_6\oplus x_7\oplus x_9\oplus x_{11}$ & $ x_1 $ is absent \\
	\hline
$ 77 $ & $ W_7\oplus W_8\oplus W_{11} $ & $ x_1\oplus x_3\oplus x_4\oplus x_7\oplus x_8\oplus x_9\oplus x_{11} $ &  $ x_3 $ not in side-information\\
\hline
$ 78 $ & $W_I\oplus W_7\oplus W_8\oplus W_{11} $ & $ x_2\oplus x_5\oplus x_7\oplus x_8\oplus x_9\oplus x_{11} $ & $ x_1 $ is absent\\
\hline
$ 79 $ & $ W_6\oplus W_7\oplus W_8\oplus W_{11} $ & $x_1\oplus x_3\oplus x_4\oplus x_6\oplus x_9\oplus x_{11} $ &  $ x_3 $ not in side-information\\
\hline
$ 80 $ & $W_I \oplus W_6\oplus W_7\oplus W_8\oplus W_{11}  $ & $ x_2\oplus x_5\oplus x_6\oplus x_9\oplus x_{11} $ & $ x_1 $ is absent\\
\hline
$ 81 $ & $ W_9\oplus W_{11} $ & $ x_1\oplus x_5\oplus x_8\oplus x_{9}\oplus x_{11} $ &  $ x_5 $ not in side-information\\
\hline
$ 82 $ & $ W_I\oplus W_9\oplus W_{11} $ & $  x_2\oplus x_3\oplus x_4\oplus x_8\oplus x_{9}\oplus x_{11} $ & $ x_1 $ is absent\\
\hline
$ 83 $ & $ W_6\oplus W_9\oplus W_{11} $ & $ x_1\oplus x_5\oplus x_6\oplus x_7 \oplus x_9\oplus x_{11} $ &  $ x_5 $ not in side-information\\
\hline
$ 84 $ & $ W_I\oplus W_6\oplus W_9\oplus W_{11} $ & $x_2\oplus x_3\oplus x_4\oplus x_6\oplus x_7 \oplus x_9\oplus x_{11} $ & $ x_1 $ is absent\\
\hline
$ 85 $ & $ W_7\oplus W_9\oplus W_{11} $ & $ x_1\oplus x_3\oplus x_5\oplus x_7\oplus x_8\oplus x_{9}\oplus x_{11} $ &  $ x_3 $ not in side-information\\
\hline
$ 86 $ & $ W_I\oplus W_7\oplus W_9\oplus W_{11} $ & $ x_2\oplus x_4\oplus  x_7\oplus x_8\oplus x_{9}\oplus x_{11} $ & $ x_1 $ is absent\\
\hline
$ 87 $ & $ W_6\oplus W_7\oplus W_9\oplus W_{11} $ & $ x_1\oplus x_3\oplus x_5\oplus x_6\oplus x_9\oplus x_{11} $ &  $ x_3 $ not in side-information\\
\hline
$ 88 $ & $ W_I\oplus W_6\oplus W_7\oplus W_9\oplus W_{11} $ & $ x_2\oplus x_4\oplus x_6\oplus x_9\oplus x_{11} $ & $ x_1 $ is absent\\
\hline
$ 89 $ & $ W_8\oplus W_9\oplus W_{11} $ & $ x_1\oplus x_4\oplus x_5\oplus x_{11} $ &  $ x_4 $ not in side-information\\
\hline
$ 90 $ & $ W_I\oplus W_8\oplus W_9\oplus W_{11} $ & $ x_2\oplus x_3\oplus x_{11} $ & $ x_1 $ is absent\\
\hline
$ 91 $ & $ W_6\oplus W_8\oplus W_9\oplus W_{11} $ & $x_1\oplus x_4\oplus x_5\oplus x_6 \oplus x_7\oplus x_8\oplus x_{11} $ &  $ x_4 $ not in side-information\\
\hline
$ 92 $ & $ W_I\oplus W_6\oplus W_8\oplus W_9\oplus W_{11} $ & $ x_2\oplus x_3\oplus x_6\oplus x_7\oplus x_8\oplus x_{11} $ & $ x_1 $ is absent\\
\hline
$ 93 $ & $ W_7\oplus W_8\oplus W_9\oplus W_{11} $ &  $ x_1\oplus x_3\oplus x_4\oplus x_5\oplus x_7\oplus x_{11} $ & $ x_3 $ not in side-information\\
\hline
$ 94 $ & $ W_I\oplus W_7\oplus W_8\oplus W_9\oplus W_{11}  $ & $ x_2\oplus x_7\oplus x_{11} $ & $ x_1 $ is absent\\
\hline
$ 95 $ & $ W_6\oplus W_7\oplus W_8\oplus W_9\oplus W_{11} $ & $x_1\oplus x_3\oplus x_4\oplus x_5\oplus x_6\oplus x_8\oplus x_{11} $ &  $ x_3 $ not in side-information\\
\hline
$ 96 $ & $ W_I\oplus W_6\oplus W_7\oplus W_8\oplus W_9\oplus W_{11} $ & $ x_2\oplus x_6\oplus x_8\oplus x_{11} $ & $ x_1 $ is absent \\
\hline
$ 97 $ & $ W_{10}\oplus W_{11} $ & $ x_1\oplus x_3\oplus x_{10} $ & $ x_3 $ not in side-information\\
\hline
$ 98 $ & $ W_I\oplus W_{10}\oplus W_{11}  $ & $ x_2\oplus x_4\oplus x_5\oplus x_{10} $ &  $ x_1 $ is absent\\
\hline
$ 99 $ & $ W_6\oplus W_{10}\oplus W_{11} $ & $ x_1\oplus x_3\oplus x_6\oplus x_7\oplus x_8\oplus x_{10}$ & $ x_3 $ not in side-information\\
\hline
$ 100 $ & $ W_I\oplus W_6\oplus W_{10}\oplus W_{11} $ & $ x_2\oplus x_4\oplus x_5\oplus x_6\oplus x_7\oplus x_8\oplus x_{10} $ &  $ x_1 $ is absent\\
\hline
$ 101 $ & $ W_7\oplus W_{10}\oplus W_{11} $ & $  x_1\oplus x_{7}\oplus x_{10} $ & $ x_7 $ not in side-information\\
\hline
$ 102 $ & $ W_I\oplus W_7\oplus W_{10}\oplus W_{11}  $ & $  x_2\oplus x_3\oplus x_4\oplus x_5\oplus x_7\oplus x_{10} $ &  $ x_1 $ is absent\\
\hline
$ 103 $ & $ W_6\oplus W_7\oplus W_{10}\oplus W_{11} $ & $ x_1\oplus x_6\oplus x_8\oplus x_{10} $ & $ x_8 $ not in side-information\\
\hline
$ 104 $ & $ W_I\oplus W_6\oplus W_7\oplus W_{10}\oplus W_{11} $ & $  x_2\oplus x_3\oplus x_4\oplus x_5\oplus x_6\oplus x_{8}\oplus x_{11} $ &  $ x_1 $ is absent\\
\hline
$ 105 $ & $ W_8\oplus W_{10}\oplus W_{11} $ & $ x_1\oplus x_3\oplus x_4\oplus x_8\oplus x_9\oplus x_{10} $ & $ x_3 $ not in side-information\\
\hline
$ 106 $ & $ W_I\oplus W_8\oplus W_{10}\oplus W_{11}  $ & $ x_2\oplus x_5\oplus x_8\oplus x_9\oplus x_{10} $ &  $ x_1 $ is absent\\
\hline
$ 107 $ & $ W_6\oplus W_8\oplus W_{10}\oplus W_{11} $ & $ x_1\oplus x_3\oplus x_4 \oplus x_6\oplus x_7\oplus x_9\oplus x_{10} $ &$ x_3 $ not in side-information \\
\hline
$ 108 $ & $ W_I\oplus W_6\oplus W_8\oplus W_{10}\oplus W_{11} $ & $ x_2\oplus x_5\oplus x_6\oplus x_7\oplus x_9\oplus x_{10} $ &  $ x_1 $ is absent\\
\hline
$ 109 $ & $ W_7\oplus W_8\oplus W_{10}\oplus W_{11} $ & $  x_1\oplus x_4\oplus x_7\oplus x_8\oplus x_{9}\oplus x_{10} $ & $ x_4 $ not in side-information\\
\hline
$ 110 $ & $W_I\oplus W_7\oplus W_8\oplus W_{10}\oplus W_{11}  $ & $ x_2\oplus x_3\oplus x_5\oplus x_7\oplus x_8\oplus x_{9}\oplus x_{10} $ &  $ x_1 $ is absent\\
\hline
$ 111 $ & $ W_6\oplus W_7\oplus W_8\oplus W_{10}\oplus W_{11} $ & $ x_1\oplus x_4\oplus x_6\oplus x_9\oplus x_{10} $ & $ x_4 $ not in side-information\\
\hline
$ 112 $ & $W_I\oplus W_6\oplus W_7\oplus W_8\oplus W_{10}\oplus W_{11} $ & $ x_2\oplus x_3\oplus x_5\oplus x_6\oplus x_{9}\oplus x_{10} $ &  $ x_1 $ is absent\\
\hline
$ 113 $ & $ W_9\oplus W_{10}\oplus W_{11} $ & $ x_1\oplus x_3\oplus x_5\oplus x_8\oplus x_9\oplus x_{10} $ &$ x_3 $ not in side-information \\
\hline
$ 114 $ & $ W_I\oplus W_9\oplus W_{10}\oplus W_{11} $ & $ x_2\oplus x_4\oplus x_8\oplus x_9\oplus x_{10} $ &  $ x_1 $ is absent\\
\hline
$ 115 $ & $ W_6\oplus W_9\oplus W_{10}\oplus W_{11} $ & $ x_1\oplus x_3\oplus x_5\oplus x_6\oplus x_7\oplus x_9\oplus x_{10} $ & $ x_3 $ not in side-information\\
\hline
$ 116 $ & $ W_I\oplus W_6\oplus W_9\oplus W_{10}\oplus W_{11} $ & $x_2\oplus x_4\oplus x_6\oplus x_7\oplus x_9\oplus x_{10} $ &  $ x_1 $ is absent\\
\hline
$ 117 $ & $ W_7\oplus W_9\oplus W_{10}\oplus W_{11} $ & $ x_1\oplus x_5\oplus x_7\oplus x_8\oplus x_9\oplus x_{10} $ &$ x_5 $ not in side-information \\
\hline
$ 118 $ & $ W_I\oplus W_7\oplus W_9\oplus W_{10}\oplus W_{11} $ & $ x_2\oplus x_3\oplus x_4\oplus x_7\oplus x_8\oplus x_9\oplus x_{10} $ &  $ x_1 $ is absent\\
\hline
$ 119 $ & $ W_6\oplus W_7\oplus W_9\oplus W_{10}\oplus W_{11} $ & $ x_1\oplus x_5\oplus x_6\oplus x_9\oplus x_{10} $ &$ x_5 $ not in side-information \\
\hline
$ 120 $ & $ W_I\oplus W_6\oplus W_7\oplus W_9\oplus W_{10}\oplus W_{11} $ & $ x_2\oplus x_3\oplus x_4\oplus x_6\oplus x_9\oplus x_{10} $ &  $ x_1 $ is absent\\
\hline
$ 121 $ & $ W_8\oplus W_9\oplus W_{10}\oplus W_{11} $ & $ x_1\oplus x_3\oplus x_4\oplus x_5\oplus x_{10} $ & $ x_3 $ not in side-information\\
\hline
$ 122 $ & $ W_I\oplus W_8\oplus W_9\oplus W_{10}\oplus W_{11} $ & $ x_2\oplus x_{10} $ &  $ x_1 $ is absent\\
\hline
$ 123 $ & $ W_6\oplus W_8\oplus W_9\oplus W_{10}\oplus W_{11} $ & $ x_1\oplus x_3\oplus x_4\oplus x_5\oplus x_6\oplus x_7\oplus x_8\oplus x_{10} $ & $ x_3 $ not in side-information\\
\hline
$ 124 $ & $ W_I\oplus W_6\oplus W_8\oplus W_9\oplus W_{10}\oplus W_{11} $ & $ x_2\oplus x_6\oplus x_7\oplus x_8\oplus x_{10} $ &  $ x_1 $ is absent\\
\hline
$ 125 $ & $ W_7\oplus W_8\oplus W_9\oplus W_{10}\oplus W_{11} $ & $ x_1\oplus x_4\oplus x_5 \oplus x_7\oplus x_{10} $ & $ x_4 $ not in side-information \\
\hline
$ 126 $ & $ W_I\oplus  W_7\oplus W_8\oplus W_9\oplus W_{10}\oplus W_{11} $ & $ x_2\oplus x_3\oplus x_7\oplus x_{10} $ &  $ x_1 $ is absent\\
\hline
$ 127 $ & $ W_6\oplus  W_7\oplus W_8\oplus W_9\oplus W_{10}\oplus W_{11} $ & $  x_1\oplus x_4\oplus x_5\oplus x_6 \oplus x_8\oplus x_{10} $ & $ x_4 $ not in side-information \\
\hline
$ 128 $ & $ W_I\oplus W_6\oplus  W_7\oplus W_8\oplus W_9\oplus W_{10}\oplus W_{11} $ & $ x_2\oplus x_3\oplus x_6\oplus x_8\oplus x_{10} $ &  $ x_1 $ is absent\\
\hline
\end{tabular}\\
\caption{Table showing that decoding $ x_1 $ is not possible using the code obtained by using \textit{Construction} $ 1 $ on $ \mathcal{G}_8 $.}
\label{tablef}
\end{table*}
\end{ex}

\section{Discussion}

For the IC structure $ \mathcal{G}$ given in Fig. $ 2 $ of \cite{VaR} it has been shown that the index code obtained by using \textit{Construction} $ 1 $ is not decodable using \textit{Algorithm} $ 1 $. This is supported by the fact that the conditions \textit{c}$ 1 $ and \textit{c}$ 2 $ are violated as shown in the Table \ref{table_a} and Table \ref{table_b} respectively.
\begin{table}[h!]
        \centering
\begin{tabular}{|c|c|c|c|}
        \hline
        $ T_i $ & $ V_{NI}(i) $ & $ j \in V_{NI}(T_i)\backslash N^+_{T_i}(i) $ & $ a_{i,j} $ \\
        \hline\hline
        $ T_1 $ & $ \lbrace7,9,10,11,12,13,14\rbrace $ & $ \lbrace9,10,11,12,13\rbrace $ & $ 2 $, $ 1 $, $ 1 $, $ 1 $, $ 1 $ \\
        \hline
        $ T_2 $ & $ \lbrace10,11,12,13\rbrace $ & $ \lbrace11,12,13\rbrace $ & $ 1 $, $ 1 $, $ 1 $ \\
        \hline
        \end{tabular}\\
\caption{Table that verifies \textit{c}$ 1 $ for $ \mathcal{G} $.}
\label{table_a}
\end{table}
\begin{table}[h!]
        \centering
\begin{tabular}{|c|c|c|c|}
        \hline
        $ T_i $ & $ V_{NI}(i) $ & $ j \in V(\mathcal{G})\backslash V_{T_i}$ & $ b_{i,j} $ \\
        \hline\hline
        $ T_1 $ & $ \lbrace6,7,9,10,11,12,13,14\rbrace $ & $ \lbrace8\rbrace $ & $0$\\
        \hline
        $ T_2 $ & $ 10,11,12,13 $ & $ \lbrace7,8,9,14\rbrace $ & $ 0 $, $ 0 $, $ 1 $,$ 0 $ \\
        \hline
\end{tabular}\\
\caption{Table that verifies \textit{c}$ 2 $ for $ \mathcal{G} $.}
\label{table_b}
\end{table}
The index code obtained by using \textit{Construction} $ 1 $ on $ \mathcal{G}$ is
$W_I =x_1\oplus x_2\oplus x_3\oplus x_4\oplus x_5\oplus x_6; ~~W_7=x_7 \oplus x_2 \oplus x_6; ~~ 
W_8=x_8 \oplus x_3 \oplus x_5; ~~ W_9=x_9\oplus x_{10}; ~~ W_{10}=x_{10}\oplus x_{11}; ~~ 
W_{11}=x_{11}\oplus x_4 \oplus x_{12}; ~~ W_{12}=x_{12}\oplus x_{13}; ~~ W_{13}=x_{13}\oplus x_{5} \oplus x_{9}; ~~ 
W_{14}=x_{14}\oplus x_{9}.$
Even though the code is not decodable by \textit{Algorithm} $ 1 $, it is possible for the messages to be decoded using some other linear combinations of index code symbols as
\begin{itemize}
\item $ x_1 $ is decoded using
\begin{eqnarray*}
Z'_1 &=& W_I \oplus W_7\oplus W_9 \oplus W_{10}\oplus W_{11}\oplus W_{12}\oplus W_{13}\\
&=& x_1 \oplus x_3 \oplus x_7.
\end{eqnarray*}
\item  $ x_2 $ is decoded using
\begin{eqnarray*}
        Z'_2 &=& W_I \oplus W_9 \oplus W_{10}\oplus W_{11}\oplus W_{12}\oplus W_{13}\\
        &=& x_1 \oplus x_2 \oplus x_3 \oplus x_6.
        \end{eqnarray*}
\item The remaining messages are decodable using \textit{Algorithm} $ 1 $.
\end{itemize}

In \cite{TOJ}, it is claimed that the the index code obtained by \textit{Construction} $1$ for an IC structure is a valid index code by proposing \textit{Algorithm} $ 1 $ for decoding. Since \textit{Algorithm} $ 1 $ works only for a class of IC structures (those that satisfy the conditions \textit{c}$1$ and \textit{c}$2$), the validity of index code obtained by \textit{Construction} $ 1 $ for an arbitrary IC structure is now an open problem. Also from \textit{Theorem} $ 2 $, it is clear that an IC structure which has no cycles containing only non-inner vertices satisfies \textit{c}$ 1 $ and \textit{c}$ 2 $. Hence, along with the proof of \textit{Theorem} $ 3 $ in \cite{TOJ}, the proof of optimality of index codes obtained by using \textit{Construction} $ 1 $ on IC structures which do not contain cycles consisting of only non-inner vertices holds. 

From the example of Fig.$2$ in \cite{VaR} discussed at the beginning of this section and Examples 1,2 and $8$ in this paper the following  directions for further research arise:
\begin{itemize}
\item Characterize the IC structures for which there exists no decoding algorithm that uses only linear combinations of the index code symbols for the codes constructed using \textit{Construction} 1 (like in Example \ref{exam8}). 
\item Characterize the IC structures for which the \textit{Algorithm} 1 does not work for the code constructed using \textit{Construction} 1 but there exists decoding algorithm for the code which use only linear combinations of index code symbols (like in Examples 1 and $2$ in this paper and Fig.$2$ in \cite{VaR}). 
\item Identify the IC structures apart from those in Theorem \ref{thm2} for which the codes obtained by \textit{Construction} 1 is decodable using \textit{Algorithm} 1. 
\end{itemize}

\section*{Acknowledgements}
This work was supported partly by the Science and Engineering Research Board (SERB) of Department of Science and Technology (DST), Government of India, through J. C. Bose National Fellowship to Professor B. Sundar Rajan.


\end{document}